\documentclass[a4paper,11pt]{article}
\usepackage{hyperref}
\usepackage[top=1 in,bottom=1 in,left=1.0 in,right=1.0 in]{geometry}
\hypersetup{hidelinks}
\usepackage{amsthm}
\usepackage{rotating}
\usepackage{amssymb,amsmath}

\numberwithin{equation}{section}

\newtheorem{lemma}{Lemma}
\newtheorem{proposition}{Proposition}

\newtheorem{theorem}{Theorem}
\newtheorem{remark}{Remark}

\newcommand{\bs}{\boldsymbol}
\newcommand{\mb}{\mathbb}
\newcommand{\mf}{\mathbf}
\newcommand{\mr}{\mathrm}
\newcommand{\e}{\mathbf{e}}
\newcommand{\I}{\mathrm{i}}

\newcommand{\bsb}{\begin{subequations}}
\newcommand{\esb}{\end{subequations}}

\newcommand{\ba}[1]{\bar{#1}}

\usepackage{amsmath}
\usepackage{amsfonts}
\usepackage{latexsym}
\usepackage{amssymb}
\usepackage{appendix}
\usepackage{mathrsfs}
\usepackage{graphicx}
\usepackage{bm}
\usepackage{arydshln}
\usepackage{tikz}

\usepackage{color}
\newcommand{\bred}{\begin{color}{red}}
\newcommand{\ecl}{\end{color}}
\newcommand{\bblue}{\begin{color}{blue}}
\newcommand{\bgre}{\begin{color}{green}}
\newcommand{\bora}{\begin{color}{orange}}

\begin{document}

\title{\textbf{Solutions to the SU($\mathcal{N}$) self-dual Yang--Mills equation}}

\author{Shangshuai Li$^1$,~~ Changzheng Qu$^2$,~~
Da-jun Zhang$^1$\footnote{Corresponding author. Email: djzhang@staff.shu.edu.cn} \\
{\small $~^1$Department of Mathematics, Shanghai University, Shanghai 200444,  China}\\
{\small $~^2$School of Mathematics and Statistics, Ningbo University, Ningbo 315211, China}}
\maketitle

\begin{abstract}
In this paper we aim to derive solutions for the SU($\mathcal{N}$) self-dual Yang--Mills
(SDYM) equation with arbitrary $\mathcal{N}$.
A set of noncommutative relations are introduced to construct a matrix equation
that can be reduced to the  SDYM equation.
It is shown that these relations can be generated from two different Sylvester equations,
which correspond to the two Cauchy matrix schemes for the (matrix) Kadomtsev--Petviashvili hierarchy
and the (matrix) Ablowitz--Kaup--Newell--Segur hierarchy, respectively.
In each Cauchy matrix scheme we investigate the possible reductions
that can lead to the SU$(\mathcal{N})$  SDYM equation
and also analyze the physical significance of some solutions,
i.e. being Hermitian, positive-definite and of determinant being one.

\begin{description}
\item[Keywords:] self-dual Yang--Mills equation, Cauchy matrix approach, Sylvester equation,
exact solution, integrable system
\item[PACS numbers:] 02.30.Ik, 02.30.Ks, 05.45.Yv
\end{description}
\end{abstract}

\section{Introduction}\label{sec-1}

The SU$(\mathcal{N})$ self-dual Yang--Mills (SDYM) equation takes a form \cite{BFNY-1978}
\begin{equation}\label{SDYM}
     (J_{\bar y}J^{-1})_y+(J_{\bar z}J^{-1})_z=0,
\end{equation}
where $J$ is a $\mathcal{N} \times \mathcal{N}$ matrix function of
$(y, \ba y, z, \ba z)\in \mathbb{C}^4$.
To agree with the metric in the Euclidean space $\mathbb{R}^4$, it requires that $\ba y$ and $\ba z$
are complex conjugates of $y$ and $z$, and $J$ is a
positive-definite and Hermitian matrix with $|J|=1$ (see Yang's formulation \cite{Yang-1977}).
Note that the original Yang--Mills equation \cite{YM-1954} is not integrable in general
but the self-dual gauge field equations are integrable, e.g., in the sense of Painlev\'e property \cite{JKM-1982,Ward-1984}.
One can also refer to an early review \cite{P-1980} or a recent paper \cite{LQYZ-SAPM-2022}
and the references therein, or a recent theses \cite{Huang-PhD}, for more details.

In the recent paper \cite{LQYZ-SAPM-2022} we developed an approach to construct solutions to the
SU$(2)$ SDYM equation by means of the Cauchy matrix method.
It is based on a framework of the  Cauchy matrix approach
to the Ablowitz--Kaup--Newell--Segur (AKNS) equations.
Through introducing an independent  variable $x_0$,
we were able to establish a set of noncommutative recursive and evolution relations
for  master functions $\{\bs S^{(i,j)}\}$
(see equations (22) and (23) in \cite{LQYZ-SAPM-2022}),
which enable us to construct solutions to the SU$(2)$ SDYM equation.
It is possible to extend the approach developed in \cite{LQYZ-SAPM-2022} to
the SU$(\mathcal{N})$ SDYM equation for $\mathcal{N}$ being even, however, not for arbitrary $\mathcal{N}$.

In the present paper
we  aim to construct solutions for the SU$(\mathcal{N})$ SDYM equation with arbitrary $\mathcal{N}$.
Our plan is the following.
We will begin by
recalling the Cauchy matrix approach used in  \cite{LQYZ-SAPM-2022} for the SU$(2)$ SDYM equation and
presenting a set of  derivative and difference relations
for the master functions $\bs S^{(i,j)}$ (see  \eqref{Sij_reqq} and \eqref{Sij_movesqq}).
We naively assume these relations can be extended from $2\times 2$ to
$\mathcal{N}\times \mathcal{N}$ matrices with arbitrary $\mathcal{N}$.
Then we will look for $\bs S^{(i,j)}$ that can match these relations.
It turns out that we can have two Cauchy matrix schemes to define such $\bs S^{(i,j)}$,
which are associated with two different Sylvester equations,
corresponding to the (matrix) Kadomtsev--Petviashvili (KP) hierarchy
and the (matrix) AKNS hierarchy, respectively.
We will investigate how solutions of the SU$(\mathcal{N})$  SDYM equations
are constructed from these two Cauchy matrix schemes
by imposing constraints.
As examples, we will discuss  some explicit solutions of the SU$(2)$ SDYM equation
with respect to their physical significance.

The paper is organized as follows. In Section \ref{sec-2}
we  recall the Cauchy matrix approach for the SU$(2)$ SDYM equation
and introduce a set of noncommutative evolution relations and a recursive relation
for the master matrix functions $\{\bs S^{(i,j)}\}$,
and then we give two different Sylvester equations that can be used to define the master functions
subject to these relations. In Section \ref{sec-3} and \ref{sec-4}
we investigate the two Sylvester equations and their related
SU$(\mathcal{N})$ SDYM equations. At this stage the investigation of solutions will be
elaborated about their determinants and the Hermitian property.
We also analyze some solutions for their positive-definiteness.
Finally in Section \ref{sec-5} concluding remarks are given.
There are two appendices  where in Appendix \ref{APP-A} we present
an alternative proof to obtain equation \eqref{diff_recur}, 
and in Appendix \ref{APP-B} we provide an example of dimension reduction
to a 3-dimensional system.

\section{Cauchy matrix schemes}\label{sec-2}

\subsection{Cauchy matrix approach to the SU$(2)$ SDYM equation}\label{sec-2-1}

Let us briefly recall the Cauchy matrix approach used in \cite{LQYZ-SAPM-2022}
for deriving solutions to the SU$(2)$ SDYM equation.
The Cauchy matrix approach is a method to construct and study integrable equations by
means of the Sylvester-type equations. In this approach integrable equations are presented as
closed forms of some functions (i.e. $\{\bs S^{(i,j)}\}$) and their shifts (or derivatives).
It is first systematically used in \cite{NAH-2009} to
investigate ABS equations and later developed in \cite{XZZ-2014,ZZ-2013} to more general cases.

Consider the Sylvester equation
\begin{align}\label{Syl_eqqq}
    \bs K\bs M-\bs M\bs K=\bs r\bs s^T,
\end{align}
assigned with general dispersion relations
\begin{align}\label{rs_moveqq}
    \bs r_{x_n}=\bs{AK}^n\bs r,~~ \bs s_{x_n}=\bs A(\bs K^T)^n\bs s, ~~(n\in\mb Z),
\end{align}
where $\{x_n\}$ are infinitely many complex independent variables,
$\bs K,\bs M,\bs A,\bs r,\bs s$ are block matrices
\begin{align*}
    \bs K=
    \begin{pmatrix}
        \bs K_1 & \bs 0 \\
        \bs 0 & \bs K_2 \\
    \end{pmatrix},
    ~
    \bs M=
    \begin{pmatrix}
        \bs 0 & \bs M_1 \\
        \bs M_2 & \bs 0 \\
    \end{pmatrix},
~
    \bs A=
    \begin{pmatrix}
        \bs I_{N_1} & \bs 0 \\
        \bs 0 & -\bs I_{N_2}
    \end{pmatrix},
~
    \bs r=
    \begin{pmatrix}
        \bs r_1 & \bs 0 \\
        \bs 0 & \bs r_2 \\
    \end{pmatrix},
 ~
    \bs s=
    \begin{pmatrix}
        \bs 0 & \bs s_1 \\
        \bs s_2 & \bs 0 \\
    \end{pmatrix},
\end{align*}
with $\bs K_i\in \mb C_{N_i\times N_i},\bs M_1\in \mb C_{N_1\times N_2}[{\mathbf{x}}],
\bs M_2\in\mb C_{N_2\times N_1}[\mathbf{x}], \bs r_i,\bs s_i\in \mb C_{N_i\times 1}[\mathbf{x}]$,
$\bs I_{N_i}$ being the $N_i$-th order identity matrix,
$\mathbf{x}=(\cdots, x_{-1}, x_0, x_1,\cdots)$, and $N_1+N_2=2N$.
It is also assumed that $\bs K_1$ and $\bs K_2$ are invertible and do not share any eigenvalues
so that  for given $\bs K,  \bs r,\bs s$
the Sylvester equation \eqref{Syl_eqqq} has a unique solution $\bs M$ \cite{Syl}.
Define  $2 \times 2$ matrix functions
\begin{align}\label{Sijqq}
    \bs S^{(i,j)}= \bs s^T\bs K^j(\bs I_{2N}+\bs M)^{-1}\bs K^i\bs r, ~~ (i,j\in \mathbb{Z}).
\end{align}
Using the Sylvester equation \eqref{Syl_eqqq} one can derive a recursive relation for $\{\bs S^{(i,j)}\}$
(cf.\cite{XZZ-2014,Zhao-2018}):
\begin{align}\label{Sij_reqq}
    \bs S^{(0,j)}\bs S^{(i,0)}=\bs S^{(i+1,j)}-\bs S^{(i,j+1)},
\end{align}
which is independent of the dispersion relation \eqref{rs_moveqq}.
By differentiate Sylvester equation \eqref{Syl_eqqq} with respect to $x_n$ and making use of \eqref{rs_moveqq},
one can derive evolution relations of $\{\bs S^{(i,j)}\}$
(see \cite{LQYZ-SAPM-2022}):
\begin{subequations}\label{Sij_movesqq}
\begin{align}
\bs S^{(i,j)}_{x_n}&=\bs S^{(i+n,j)}\bs a-\bs a\bs S^{(i,j+n)}
    -\sum_{l=0}^{n-1}\bs S^{(n-1-l,j)}\bs a\bs S^{(i,l)}, &(n\in\mb Z^+),  \\
\bs S^{(i,j)}_{x_0}&=\bs S^{(i,j)}\bs a-\bs a\bs S^{(i,j)}=[\bs S^{(i,j)},\bs a], &~ \\
\bs S^{(i,j)}_{x_n}&=\bs S^{(i+n,j)}\bs a-\bs a\bs S^{(i,j+n)}
    +\sum_{l=-1}^{n}\bs S^{(n-1-l,j)}\bs a\bs S^{(i,l)}, &(n\in\mb Z^-),
\end{align}
\end{subequations}
where $\bs a=\mf{diag}(1,-1)$.
We call $\{\bs S^{(i,j)}\}$ master functions in the Cauchy matrix approach
since nonlinear equations arise as closed forms of them (see \cite{NAH-2009,XZZ-2014,ZZ-2013}).
These relations enabled us to obtain the equations \cite{LQYZ-SAPM-2022}
\begin{align}\label{recur_uvqq}
        \bs v_{x_{n+1}}\bs v^{-1}=-\bs u_{x_n}, ~~ (n\in\mb Z),
\end{align}
where  $\bs u=\bs S^{(0,0)}$ and $\bs v= \bs I_2-\bs S^{(-1,0)}$,
and consequently,
\begin{align}\label{SDYM-30qq}
    (\bs v_{x_{n+1}}\bs v^{-1})_{x_m}-(\bs v_{x_{m+1}}\bs v^{-1})_{x_n}=0,
    ~~(n,m\in \mathbb{Z}).
\end{align}
By properly imposing constraints on the coordinates $\{x_n\}$ we were able to recover the SU$(2)$ SDYM equation
\eqref{SDYM} from  \eqref{SDYM-30qq}.
More details can be referred from \cite{LQYZ-SAPM-2022}.

\subsection{General assumptions of derivative and difference relation}\label{sec-2-2}

The relations \eqref{Sij_reqq} and \eqref{Sij_movesqq},
which hold for  the $2\times 2$ matrix case,
are crucial and important to obtaining equation \eqref{recur_uvqq}
that finally leads us to the  SU$(2)$ SDYM equation.
However, the scheme used in \cite{LQYZ-SAPM-2022} can only be
extended to the case of $\mathcal{N}$ being even.
In this paper, we naively assume that such a set of relations hold as well for
$\mathcal{N}\times \mathcal{N}$ matrix functions $\{\bs S^{(i,j)}\}$
with arbitrary $\mathcal{N}$.
This will immediately give rise to the equation \eqref{recur_uvqq} and so \eqref{SDYM-30qq}
where both $\bs u$ and $\bs v$ are $\mathcal{N}\times \mathcal{N}$ matrix functions,
from which we might find solutions to the SU$(\mathcal{N})$ SDYM equation \eqref{SDYM}.
Let us begin by stating our assumption.

\begin{lemma}\label{L-2-1}
    Suppose there exist  matrix functions $\bs S^{(i,j)}\in\mb C_{\mathcal{N}\times\mathcal{N}}[\mf x]$
    that satisfy the evolution relations
    \begin{subequations}\label{Sij_moves}
    \begin{align}
    \bs S^{(i,j)}_{x_n}&=\bs S^{(i+n,j)}\bs a-\bs a\bs S^{(i,j+n)}
    -\sum_{l=0}^{n-1}\bs S^{(n-1-l,j)}\bs a\bs S^{(i,l)}, &(n\in\mb Z^+),\label{Sij_moves1} \\
    \bs S^{(i,j)}_{x_0}&=\bs S^{(i,j)}\bs a-\bs a\bs S^{(i,j)}=[\bs S^{(i,j)},\bs a], &~\label{Sij_moves2}\\
    \bs S^{(i,j)}_{x_n}&=\bs S^{(i+n,j)}\bs a-\bs a\bs S^{(i,j+n)}
    +\sum_{l=-1}^{n}\bs S^{(n-1-l,j)}\bs a\bs S^{(i,l)}, &(n\in\mb Z^-) \label{Sij_moves3}
    \end{align}
    \end{subequations}
    and the difference relation
    \begin{align}\label{diff_sum_pro}
        \bs S^{(i+1,j)}-\bs S^{(i,j+1)}=\bs S^{(0,j)}\bs S^{(i,0)},
    \end{align}
    where $i,j\in \mathbb{Z}$,
    $\mf x\doteq(\cdots, x_{-1}, x_0, x_{1},\cdots)$,
    $\bs a=\mr{diag}(a^{(1)},\cdots,a^{(\mathcal{N})})$ with $a^{(i)}\in \mathbb{C}$,
    and $[A,B]=AB-BA$.
    Define
    \begin{equation}\label{UV}
        \bs U\doteq \bs S^{(0,0)},  ~~\bs V \doteq \bs I_{\mathcal{N}}-\bs S^{(-1,0)}.
    \end{equation}
    Then we have
    \begin{equation}\label{diff_recur}
        \bs V_{x_{n+1}}\bs V^{-1}=-\bs U_{x_n}, ~~  (n\in \mb Z),
    \end{equation}
    and consequently,
    \begin{align}\label{SDYM-3}
    (\bs V_{x_{n+1}}\bs V^{-1})_{x_m}-(\bs V_{x_{m+1}}\bs V^{-1})_{x_n}=0, ~~ (n,m\in\mb Z).
    \end{align}
\end{lemma}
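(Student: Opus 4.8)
The plan is to establish the fundamental relation \eqref{diff_recur} by a direct algebraic computation, after which \eqref{SDYM-3} is essentially free. Indeed, once \eqref{diff_recur} is known, one substitutes $\bs V_{x_{n+1}}\bs V^{-1}=-\bs U_{x_n}$ and $\bs V_{x_{m+1}}\bs V^{-1}=-\bs U_{x_m}$ into the left-hand side of \eqref{SDYM-3}, so that it collapses to $-\bs U_{x_n x_m}+\bs U_{x_m x_n}$, which vanishes by equality of mixed partial derivatives (the only smoothness input needed). Thus the whole content of the lemma lies in \eqref{diff_recur}.

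To prove \eqref{diff_recur} I would first clear the inverse. Since $\bs V=\bs I_{\mathcal{N}}-\bs S^{(-1,0)}$ by \eqref{UV}, we have $\bs V_{x_{n+1}}=-\bs S^{(-1,0)}_{x_{n+1}}$ and $\bs U_{x_n}=\bs S^{(0,0)}_{x_n}$, so multiplying \eqref{diff_recur} on the right by $\bs V$ (invertible in the framework) shows it is equivalent to the inverse-free identity
\begin{equation*}
\bs S^{(-1,0)}_{x_{n+1}}=\bs S^{(0,0)}_{x_n}\bs V=\bs S^{(0,0)}_{x_n}\bigl(\bs I_{\mathcal{N}}-\bs S^{(-1,0)}\bigr).
\end{equation*}
This is what I would verify. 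Because the evolution relations \eqref{Sij_moves} come in three forms according to the sign of the index, I would split the argument into the cases $n\ge 1$, $n=0$, $n=-1$ and $n\le -2$; note that $n$ and $n+1$ may fall into different forms, which is why the two boundary values $n=0,-1$ must be treated on their own using \eqref{Sij_moves2}.

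For the generic case $n\ge 1$, I would insert \eqref{Sij_moves1} for $\bs S^{(-1,0)}_{x_{n+1}}$ (with shift $n+1$ and $i=-1,j=0$) on the left and for $\bs S^{(0,0)}_{x_n}$ on the right. The two $\bs S^{(n,0)}\bs a$ terms match at once; peeling the $l=0$ term off the left-hand sum accounts for the product $\bs S^{(n,0)}\bs a\bs S^{(-1,0)}$ generated by $\bs S^{(0,0)}_{x_n}\,(-\bs S^{(-1,0)})$. The crucial simplification is the difference relation \eqref{diff_sum_pro} specialised to $i=-1$, namely $\bs S^{(0,l)}\bs S^{(-1,0)}=\bs S^{(0,l)}-\bs S^{(-1,l+1)}$, which I would apply to every product carrying a trailing $\bs S^{(-1,0)}$; combined with the reindexing $l\mapsto l+1$ in the remaining sum, all terms cancel pairwise and the two sides coincide. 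The cases $n=0$ and $n=-1$ reduce to short direct checks in which \eqref{Sij_moves2} supplies the $x_0$-derivative and \eqref{diff_sum_pro} (with $j=0$ or $j=-1$) again does the collapsing, while the case $n\le -2$ runs in complete parallel to $n\ge 1$ using \eqref{Sij_moves3}.

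The only delicate point, and the step where I expect to spend the most care, is the bookkeeping of the finite sums: matching the boundary terms after peeling and reindexing, and ensuring the products rewritten through \eqref{diff_sum_pro} line up exactly with the terms of the shifted sum so that the cancellation is complete. Apart from the commutativity of mixed partials, the proof is purely algebraic, and the difference relation \eqref{diff_sum_pro} is doing all the real work.
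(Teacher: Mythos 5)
Your proposal is correct. I checked the computations it outlines: for $n\ge 1$, inserting \eqref{Sij_moves1} into both sides of the inverse-free identity $\bs S^{(-1,0)}_{x_{n+1}}=\bs S^{(0,0)}_{x_n}\bigl(\bs I_{\mathcal{N}}-\bs S^{(-1,0)}\bigr)$, applying $\bs S^{(0,l)}\bs S^{(-1,0)}=\bs S^{(0,l)}-\bs S^{(-1,l+1)}$ (i.e. \eqref{diff_sum_pro} at $i=-1$, $j=l$) to every product carrying a trailing $\bs S^{(-1,0)}$, and reindexing $l\mapsto l+1$ does make the two sides coincide term by term, with the peeled $l=0$ term absorbing $\bs S^{(n,0)}\bs a\bs S^{(-1,0)}$ exactly as you say. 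The boundary cases $n=0$ (where \eqref{Sij_moves2} supplies the right-hand derivative and \eqref{diff_sum_pro} with $j=0$ collapses the remainder) and $n=-1$ (where \eqref{Sij_moves2} supplies the left-hand derivative and \eqref{diff_sum_pro} with $j=-1$ is used twice, once inside a triple product) check out, as does $n\le -2$ via \eqref{Sij_moves3}; and \eqref{SDYM-3} is indeed immediate from \eqref{diff_recur} by equality of mixed partials.

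On the comparison with the paper: the paper does not write out a proof of Lemma \ref{L-2-1} at all, but defers to Theorem 2 of the cited SU(2) paper, whose hypotheses are precisely the $2\times 2$ instance of \eqref{Sij_moves} and \eqref{diff_sum_pro}; since your computation never uses the matrix size, it is in effect that same direct verification, carried out at the level of generality the lemma is stated at. The paper does, however, contain a genuinely different in-text derivation of \eqref{diff_recur}, in Appendix \ref{APP-A} (Proposition \ref{P-2-3}): there one introduces auxiliary functions $\bs\phi^{(i)}=(\bs C+\bs M)^{-1}\bs K^i\bs r$, assembles them into an infinite matrix $\bs\Phi$ obeying the linear system \eqref{Phi-n}, $\bs\Phi_{x_n}=\bs\Phi\bs A\bs P_n$, and reads \eqref{diff_recur} off as the $(-1,-n)$-th block of the compatibility condition \eqref{com_eq}. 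That route is structural --- it exhibits \eqref{diff_recur} as a zero-curvature condition of a Lax-type linear problem and extends naturally to compatibilities between $x_n$ and $x_{n+l}$ for $l\ge 2$ --- but it presupposes the concrete Sylvester-equation realization of the $\bs S^{(i,j)}$, whereas your argument uses only the abstract relations assumed in the lemma. In that sense yours is the more faithful proof of the statement as given, while the appendix argument buys extra insight at the cost of extra structure.
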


The proof is the same as the one for Theorem 2 in \cite{LQYZ-SAPM-2022} for the $2\times 2$ case. We skip it.

\subsection{Sylvester equations for equation \eqref{SDYM-3}}\label{sec-2-3}

Next, we need to look for well-defined $\mathcal{N}\times \mathcal{N}$ matrix functions $\{\bs S^{(i,j)}\}$
that can meet the assumed relations \eqref{Sij_moves} and \eqref{diff_sum_pro}.
To achieve that, we introduce a general Sylvester equation
\begin{align}\label{Syl_eq}
        \bs K\bs M-\bs M\bs L=\bs r\bs s^T,
\end{align}
where $\bs M\in\mb C_{N\times N}[\mf x]$, $\bs K,\bs L\in\mb C_{N\times N}$,
$\bs r,\bs s\in\mb C_{N\times\mathcal{N}}[\mf x]$
(one should be aware of the difference between $N$ and $\mathcal{N}$; here $N$ indicates the number of solitons).
Besides, we also require $\bs K,\bs L$ satisfy certain conditions such that
equation \eqref{Syl_eq} has a unique solution $\bs M$ for given $\bs K, \bs L, \bs r$ and $\bs s$.
Then we assign evolutions for  $\bs r$ and $\bs s$ as
\begin{align}\label{rs_move}
        \bs r_{x_n}=\bs K^n\bs r\bs a, ~~\bs s_{x_n}=-(\bs L^T)^n\bs s\bs a, ~~(n\in\mb Z),
\end{align}
where $\bs a$ is the diagonal matrix defined as in Lemma \ref{L-2-1}.
The above equations allow solutions
\begin{align}\label{rs_exp}
    \bs r=(\bs r^{(1)},\bs r^{(2)},\cdots,\bs r^{(\mathcal{N})}),
    ~~\bs s=(\bs s^{(1)},\bs s^{(2)},\cdots,\bs s^{(\mathcal{N})}),
    \end{align}
where $\bs r^{(i)}$ and $\bs s^{(i)}$ are $N$-th order column vectors expressed by
\begin{align*}
        \bs r^{(i)}=\exp\Big(a^{(i)}\sum_{n\in\mb Z}\bs K^nx_n\Big)\overset{\circ}{\bs r}{}^{(i)},~~~~
        \bs s^{(i)}=\exp\Big(-a^{(i)}\sum_{n\in\mb Z}(\bs L^T)^nx_n\Big)\overset{\circ}{\bs s}{}^{(i)},
\end{align*}
with $\overset{\circ}{\bs r}{}^{(i)}$ and $\overset{\circ}{\bs s}{}^{(i)}$ being
$N$-th order  constant column vectors.

\begin{remark}\label{R-2-0}
It is important that $\bs a$ is not an identity matrix.
Recalling relation (21) and (23) in \cite{LQYZ-SAPM-2022},
$\bs S^{(i,j)}$ will be independent of  $x_n$ if $\bs a$ is an identity matrix.
\end{remark}

Using the elements in the Sylvester equation \eqref{Syl_eq}, together with
$\bs r$ and $\bs s$ defined by \eqref{rs_move},
we introduce $\mathcal{N}\times\mathcal{N}$ matrix functions $\{\bs S^{(i,j)}\}$:
\begin{align}\label{Sij}
        \bs S^{(i,j)} \doteq \bs s^T\bs L^j(\bs C +\bs M)^{-1}\bs K^i\bs r, ~~ i,j\in \mathbb{Z},
\end{align}
where $\bs C$ is an arbitrary $N\times N$ complex matrix independent of $\mf x$.
%We call $\bs S^{(i,j)}$ the master functions in the Cauchy matrix approach,
%since nonlinear equations arise as closed forms of $\bs S^{(i,j)}$ (see \cite{NAH-2009,XZZ-2014,ZZ-2013}).
For their evolutions with respect to $x_n$, we have the following result.

\begin{proposition}\label{P-2-1}
Suppose that $\bs r$ and $\bs s$ are defined by \eqref{rs_move},
and the Sylvester equation \eqref{Syl_eq} has a unique solution $\bs M$ for given $\bs K$ and $\bs L$.
Then, the $\mathcal{N}\times\mathcal{N}$ matrix functions $\bs S^{(i,j)}$ defined in \eqref{Sij}
evolve as \eqref{Sij_moves}.
\end{proposition}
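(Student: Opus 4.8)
The plan is to differentiate the definition \eqref{Sij} of $\bs S^{(i,j)}$ directly and to reduce everything to a single closed-form expression for $\bs M_{x_n}$. Write $\bs N\doteq(\bs C+\bs M)^{-1}$, so that $\bs S^{(i,j)}=\bs s^T\bs L^j\bs N\bs K^i\bs r$; since $\bs C$ is independent of $\mf x$, one has $\bs N_{x_n}=-\bs N\bs M_{x_n}\bs N$. Applying the product rule and substituting the dispersion relations \eqref{rs_move}, namely $\bs r_{x_n}=\bs K^n\bs r\bs a$ and $(\bs s^T)_{x_n}=-\bs a\bs s^T\bs L^n$ (the transpose of $\bs s_{x_n}=-(\bs L^T)^n\bs s\bs a$, using $\bs a^T=\bs a$), the two outer terms collapse at once to $\bs S^{(i+n,j)}\bs a$ and $-\bs a\bs S^{(i,j+n)}$. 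These are exactly the first two terms on the right-hand side of each line of \eqref{Sij_moves}, so the entire content of the proposition is carried by the middle term $\bs T\doteq-\bs s^T\bs L^j\bs N\bs M_{x_n}\bs N\bs K^i\bs r$.

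The key step is therefore to determine $\bs M_{x_n}$ in closed form. Differentiating the Sylvester equation \eqref{Syl_eq} with respect to $x_n$ and using \eqref{rs_move} gives the inhomogeneous Sylvester equation $\bs K\bs M_{x_n}-\bs M_{x_n}\bs L=\bs K^n\bs r\bs a\bs s^T-\bs r\bs a\bs s^T\bs L^n$. Under the standing assumption the operator $\bs X\mapsto\bs K\bs X-\bs X\bs L$ is invertible, so $\bs M_{x_n}$ is its unique solution, and it suffices to exhibit one solution. For $n\in\mb Z^+$ I would propose $\bs M_{x_n}=\sum_{l=0}^{n-1}\bs K^l\bs r\bs a\bs s^T\bs L^{n-1-l}$ and verify by a telescoping cancellation that $\bs K(\cdot)-(\cdot)\bs L$ collapses to the required right-hand side; for $n=0$ the right-hand side vanishes, forcing $\bs M_{x_0}=0$; and for $n\in\mb Z^-$ one uses the analogous finite sum built from negative powers of $\bs K$ and $\bs L$, which is legitimate because $\bs K$ and $\bs L$ are invertible.

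It then remains to assemble $\bs T$. Substituting the closed form, each summand factors as $(\bs s^T\bs L^j\bs N\bs K^l\bs r)\,\bs a\,(\bs s^T\bs L^{n-1-l}\bs N\bs K^i\bs r)=\bs S^{(l,j)}\bs a\bs S^{(i,n-1-l)}$, so that $\bs T=-\sum_{l=0}^{n-1}\bs S^{(l,j)}\bs a\bs S^{(i,n-1-l)}$; the reindexing $l\mapsto n-1-l$ turns this into $-\sum_{l=0}^{n-1}\bs S^{(n-1-l,j)}\bs a\bs S^{(i,l)}$, which is precisely \eqref{Sij_moves1}. The case $n=0$ gives $\bs T=0$ and hence $\bs S^{(i,j)}_{x_0}=[\bs S^{(i,j)},\bs a]$, i.e. \eqref{Sij_moves2}, while the negative case reproduces \eqref{Sij_moves3}, the opposite sign of the sum arising from the negative-power expansion of $\bs M_{x_n}$.

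The hard part will be step two: locating and verifying the closed form of $\bs M_{x_n}$, and then matching the resulting summation against the exact limits in \eqref{Sij_moves}. The differentiation and factorization are routine, but the index bookkeeping is delicate, especially for $n\in\mb Z^-$, where the sum ranges over negative powers, the overall sign flips, and the limits $\sum_{l=-1}^{n}$ of \eqref{Sij_moves3} must be reconciled term by term; I would also check that the three regimes are mutually consistent so that the uniqueness argument applies uniformly.
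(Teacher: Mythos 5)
Your proposal is correct and takes essentially the same route as the paper's proof: differentiate the Sylvester equation, invoke uniqueness of its solution to pin down the closed form of $\bs M_{x_n}$ (your sum $\sum_{l=0}^{n-1}\bs K^{l}\bs r\bs a\bs s^T\bs L^{n-1-l}$ is the paper's $\sum_{l=0}^{n-1}\bs K^{n-1-l}\bs r\bs a\bs s^T\bs L^{l}$ after the reindexing $l\mapsto n-1-l$), then differentiate $\bs S^{(i,j)}$ by the product rule, substitute the dispersion relations and $\bs M_{x_n}$, and factor each summand into products of $\bs S^{(\cdot,\cdot)}$. The differences (order of the two steps, the equivalent index convention, and writing $(\bs C+\bs M)^{-1}_{x_n}=-\bs N\bs M_{x_n}\bs N$ explicitly) are purely cosmetic.
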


\begin{proof}
Differentiating the Sylvester equation \eqref{Syl_eq} with respect to $x_n$ and making use of
\eqref{rs_move}, for the case $n>0$ we have
\begin{align*}
\bs K \bs M_{x_n}- \bs M_{x_n}\bs L
& = \bs K^{n}\bs r\bs a\bs s^T- \bs r\bs a\bs s^T\bs L^{n}\\
& = \bs K \left(\sum^{n-1}_{l=0}\bs K^{n-1-l}\bs r\bs a\bs s^T\bs L^{l}\right)
       -\left(\sum^{n-1}_{l=0}\bs K^{n-1-l}\bs r\bs a\bs s^T\bs L^{l}\right)\bs L.
\end{align*}
This yields
\[\bs K \left(\bs M_{x_n}-\sum^{n-1}_{l=0}\bs K^{n-1-l}\bs r\bs a\bs s^T\bs L^{l}\right)
-\left( \bs M_{x_n}- \sum^{n-1}_{l=0}\bs K^{n-1-l}\bs r\bs a\bs s^T\bs L^{l}\right) \bs L=0. \]
With the assumption that equation \eqref{Syl_eq} has a unique solution, we have
\bsb\label{M_evo}
\begin{align}
&\bs M_{x_n}=\sum^{n-1}_{l=0}\bs K^{n-1-l}\bs r\bs a\bs s^T\bs L^{l},
~~~~  (n\in \mb Z^+). \label{M_evo1}
\end{align}
Similarly, we also get
\begin{align}
&\bs M_{x_0}=\bs 0,    \label{M_evo2}\\
&\bs M_{x_n}=-\sum^{n}_{l=-1}\bs K^{n-1-l}\bs r\bs a\bs s^T\bs L^{l},
~~~~ (n\in \mb Z^-). \label{M_evo3} \end{align}
\esb
Meanwhile, for the $\bs S^{(i,j)}$  defined in \eqref{Sij}, we find that
\[
\bs S^{(i,j)}_{x_n}=\bs s^T_{x_n}\bs L^j(\bs C+\bs M)^{-1}\bs K^i\bs r
+\bs s^T\bs L^j(\bs C+\bs M)^{-1}\bs K^i\bs r_{x_n}
 +\bs s^T\bs L^j((\bs C+\bs M)^{-1})_{x_n}\bs K^i\bs r .
 \]
Inserting  \eqref{rs_move} into the above equation yields
\[ \bs S^{(i,j)}_{x_n}
=-\bs a\bs S^{(i,j+n)}+\bs S^{(i+n,j)}\bs a
  -\bs s^T\bs L^j(\bs C+\bs M)^{-1}\bs M_{x_n}(\bs C+\bs M)^{-1}\bs K^i\bs r.\]
Then,
substituting \eqref{M_evo}
and expressing it in terms of $\bs S^{(i,j)}$,
we arrive at \eqref{Sij_moves}.

\end{proof}

For the difference relation \eqref{diff_sum_pro}, we find the following.

\begin{proposition}\label{P-2-2}
Suppose that  the Sylvester equation \eqref{Syl_eq} has a unique solution $\bs M$ for given $\bs K$ and $\bs L$.
For the matrix functions $\bs S^{(i,j)}$ defined in \eqref{Sij}
the difference relation \eqref{diff_sum_pro} holds provided
\begin{equation}\label{KC=CL}
\bs K\bs C-\bs C\bs L=\bs0.
\end{equation}
\end{proposition}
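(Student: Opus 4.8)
The plan is to reduce the difference relation to a single Sylvester-type identity for the shifted matrix $\bs C+\bs M$ and then read off \eqref{diff_sum_pro} by a one-line factorization. The starting observation is that hypothesis \eqref{KC=CL} says precisely that $\bs C$ solves the homogeneous Sylvester equation; adding it to \eqref{Syl_eq} shows that $\bs N\doteq\bs C+\bs M$ obeys the same inhomogeneous relation as $\bs M$, namely $\bs K\bs N-\bs N\bs L=\bs r\bs s^T$. This is the only point at which \eqref{KC=CL} is used, and it is exactly what prevents the arbitrary constant matrix $\bs C$ from contributing a spurious term.

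Next I would convert this into an identity for $\bs N^{-1}$: left- and right-multiplying $\bs K\bs N-\bs N\bs L=\bs r\bs s^T$ by $\bs N^{-1}$ produces the workhorse relation
\[
\bs N^{-1}\bs K-\bs L\bs N^{-1}=\bs N^{-1}\bs r\bs s^T\bs N^{-1}.
\]
With $\bs S^{(i,j)}=\bs s^T\bs L^j\bs N^{-1}\bs K^i\bs r$, the left-hand side of \eqref{diff_sum_pro} then factors cleanly, since the powers of $\bs L$ and $\bs K$ pull out on either side:
\[
\bs S^{(i+1,j)}-\bs S^{(i,j+1)}=\bs s^T\bs L^j\bigl(\bs N^{-1}\bs K-\bs L\bs N^{-1}\bigr)\bs K^i\bs r.
\]
Inserting the workhorse identity and regrouping the factors $\bs s^T\bs L^j\bs N^{-1}\bs r=\bs S^{(0,j)}$ and $\bs s^T\bs N^{-1}\bs K^i\bs r=\bs S^{(i,0)}$ gives exactly $\bs S^{(0,j)}\bs S^{(i,0)}$, which is \eqref{diff_sum_pro}.

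The computation is routine once the factorization is spotted, so the only real subtlety---and the step I would emphasize---is recognizing why \eqref{KC=CL} is indispensable: without it, $\bs N$ would instead satisfy $\bs K\bs N-\bs N\bs L=\bs r\bs s^T+(\bs K\bs C-\bs C\bs L)$, and the surviving commutator $\bs K\bs C-\bs C\bs L$ would corrupt the $\bs N^{-1}$ identity, destroying the rank-structured term $\bs N^{-1}\bs r\bs s^T\bs N^{-1}$ that is responsible for producing the product $\bs S^{(0,j)}\bs S^{(i,0)}$. I do not anticipate any genuine computational obstacle beyond this observation.
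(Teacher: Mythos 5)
Your proof is correct and is essentially the same argument as the paper's: both use the Sylvester relation satisfied by $\bs C+\bs M$ to rewrite $\bs S^{(i+1,j)}-\bs S^{(i,j+1)}$ as $\bs s^T\bs L^j(\bs C+\bs M)^{-1}\bs r\bs s^T(\bs C+\bs M)^{-1}\bs K^i\bs r=\bs S^{(0,j)}\bs S^{(i,0)}$. The only difference is presentational — you invoke \eqref{KC=CL} up front to absorb $\bs C$ into the Sylvester equation, while the paper carries the residual term $\bs K\bs C-\bs C\bs L$ through the computation and kills it at the end.
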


\begin{proof}
For $\bs S^{(i,j)}$ defined in \eqref{Sij}, it follows that
\begin{align*}
        \bs S^{(i+1,j)}-\bs S^{(i,j+1)}
 =~&\bs s^T\bs L^j(\bs C+\bs M)^{-1}\bs K^{i+1}\bs r
        -\bs s^T\bs L^{j+1}(\bs C+\bs M)^{-1}\bs K^i\bs r \\
 =~&\bs s^T\bs L^j(\bs C+\bs M)^{-1}\bs K(\bs C+\bs M)
 (\bs C+\bs M)^{-1}\bs K^i\bs r\\
   & ~~   -\bs s^T\bs L^{j}(\bs C+\bs M)^{-1}(\bs C+\bs M)
   \bs L(\bs C+\bs M)^{-1}\bs K^i\bs r \\
 =~&\bs s^T\bs L^{j}(\bs C+\bs M)^{-1}(\bs K\bs C-\bs C\bs L
 +\bs r\bs s^T)(\bs C+\bs M)^{-1}\bs K^i\bs r \\
 =~&\bs s^T\bs L^{j}(\bs C+\bs M)^{-1}(\bs K\bs C-\bs C\bs L)(\bs C+\bs M)^{-1}\bs K^i\bs r
 +\bs S^{(0,j)}\bs S^{(i,0)}.
\end{align*}
This indicates that, only when \eqref{KC=CL} holds,
we have the difference relation \eqref{diff_sum_pro}.

\end{proof}

Based on the above two propositions,  in what follows we identify $\bs K$ and $\bs L$ such that equation \eqref{KC=CL}
holds and meanwhile the Sylvester equation \eqref{Syl_eq} has a unique solution $\bs M$.
We are then led to the following two cases.
\begin{itemize}
  \item [1)]
  When $\bs K$ and $\bs L$ do not share eigenvalues,
  both equations \eqref{KC=CL} and \eqref{Syl_eq} have a unique solution (see \cite{Syl}),
  and in particular, $\bs C=\bs0$.
  \item [2)]
  When $\bs K=\bs L$, first,  \eqref{KC=CL} holds if $\bs K$ and $\bs C$ commute.
  For the Sylvester equation, we assume
 $\mathcal{N}=2\mathcal{M}, N=2M$,  which  are even numbers,
  and assume $\bs K$, $\bs M$, $\bs C$ and so forth  are block matrices with the following forms\footnote{
  Note that in this case, $M$ rather than $N$ represents the number of solitons.}
  \bsb\label{KAM}
  \begin{align}
    \bs K&=
    \begin{pmatrix}
        \bs K_1 & \bs 0 \\
        \bs 0 & \bs K_2 \\
    \end{pmatrix},
    &
    \bs M&=
    \begin{pmatrix}
        \bs 0 & \bs M_1 \\
        \bs M_2 & \bs 0 \\
    \end{pmatrix},
    &
    \bs C&=
    \begin{pmatrix}
        \bs C_1 & \bs 0 \\
        \bs 0 & \bs C_2 \\
    \end{pmatrix},
\\
    \bs a&=
    \begin{pmatrix}
        \bs a_1 & \bs 0 \\
        \bs 0 & \bs a_2
    \end{pmatrix},
    &
    \bs r&=
    \begin{pmatrix}
        \bs r_1 & \bs 0 \\
        \bs 0 & \bs r_2 \\
    \end{pmatrix},
    &
    \bs s&=
    \begin{pmatrix}
        \bs 0 & \bs s_1 \\
        \bs s_2 & \bs 0 \\
    \end{pmatrix},
\end{align}
\esb
where $\bs K_i,\bs C_i\in\mb C_{M\times M},\bs r_i,\bs s_i\in\mb C_{M\times\mathcal{M}}[\mf x],
\bs a_i\in\mb C_{\mathcal{M}\times\mathcal{M}}$,
$\bs M_i \in\mb C_{M\times M}[\mf x]$.
In this case, the Sylvester equation \eqref{Syl_eq} is decoupled into
\begin{subequations}\label{symm-case}
\begin{align}
&\bs K_1 \bs M_1-\bs M_1 \bs K_2=\bs r_1 \bs s^T_2,\\
&\bs K_2 \bs M_2-\bs M_2 \bs K_1=\bs r_2 \bs s^T_1,
\end{align}
\end{subequations}
and the dispersion relation \eqref{rs_move} is decoupled into
\begin{subequations}\label{DR-symm}
\begin{align}
& \bs r_{1,x_n}=\bs K_1^n \bs r_1 \bs a_1,  ~~
\bs s_{1,x_n}=-(\bs K_1^T)^n \bs s_1 \bs a_2, \label{DR-symm-1}\\
& \bs r_{2,x_n}=\bs K_2^n \bs r_2 \bs a_2,  ~~
\bs s_{2,x_n}=-(\bs K_2^T)^n \bs s_2 \bs a_1. \label{DR-symm-2}
\end{align}
\end{subequations}
We further assume  that $\bs K_1$ and $\bs K_2$ do not share eigenvalues
so that the coupled system \eqref{symm-case} determine unique solutions
$\bs M_1$ and $\bs M_2$.

\end{itemize}

Let us denote the above two cases as the case of  asymmetric Sylvester equation (with $\bs K\neq \bs L$)
and the case of  symmetric Sylvester equation (with $\bs K= \bs L$),
and summarize them in the following theorem.

\begin{theorem}\label{T-2-1}
The master functions $\{\bs S^{(i,j)}\}$ in the two cases agree with the derivative relations in
\eqref{Sij_moves} and difference relation \eqref{diff_sum_pro}.
\begin{itemize}
  \item [(1)] Asymmetric Sylvester equation case:
  The master functions $\{\bs S^{(i,j)}\}$ defined by \eqref{Sij} with $\bs C =\bs0$, i.e.
  \begin{align*}%\label{Sij-C0}
        \bs S^{(i,j)} = \bs s^T\bs L^j \bs M^{-1}\bs K^i\bs r, ~~ i,j\in \mathbb{Z},
  \end{align*}
  satisfy the relations \eqref{Sij_moves} and   \eqref{diff_sum_pro},
  when $\bs K, \bs L, \bs M, \bs r, \bs s$ satisfy  the (asymmetric) Sylvester equation \eqref{Syl_eq}
  and dispersion relation \eqref{rs_move}
  where $\bs K$ and $\bs L$ do not share eigenvalues.
  \item [(2)]  Symmetric Sylvester equation case:
  The master functions $\{\bs S^{(i,j)}\}$ defined by \eqref{Sij}
  meet the requirements  \eqref{Sij_moves} and  \eqref{diff_sum_pro},
  when the following conditions are satisfied:
  $\bs K=\bs L$, the elements in the Sylvester equation \eqref{Syl_eq}
  and dispersion relation \eqref{rs_move} take forms as in \eqref{KAM}
  and obey the coupled Sylvester equations \eqref{symm-case} and dispersion relation \eqref{DR-symm},
  and $\bs K_i \bs C_i=\bs C_i \bs K_i$ for $i=1,2$.
\end{itemize}
In each case, $\bs U$ and $\bs V$ defined by \eqref{UV} satisfy equation \eqref{diff_recur} as well as \eqref{SDYM-3}.
\end{theorem}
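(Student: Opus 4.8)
The plan is to recognize that Theorem~\ref{T-2-1} is essentially an assembly of the three preceding results: Propositions~\ref{P-2-1} and~\ref{P-2-2} supply the derivative relations \eqref{Sij_moves} and the difference relation \eqref{diff_sum_pro}, and Lemma~\ref{L-2-1} then delivers \eqref{diff_recur} and \eqref{SDYM-3} for free. So the genuine work is to check, in each of the two cases, that the hypotheses of these earlier results actually hold --- namely that the Sylvester equation \eqref{Syl_eq} admits a unique solution $\bs M$ and that \eqref{KC=CL} is satisfied --- after which the conclusions follow by citation.

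For the asymmetric case (1), I would argue directly. Since $\bs K$ and $\bs L$ share no eigenvalues, the classical theory of Sylvester equations \cite{Syl} guarantees that \eqref{Syl_eq} has a unique solution $\bs M$, so Proposition~\ref{P-2-1} applies verbatim and gives \eqref{Sij_moves}. The same spectral separation forces the homogeneous equation $\bs K\bs C-\bs C\bs L=\bs 0$ to have only the trivial solution, so $\bs C=\bs 0$ is the unique choice and \eqref{KC=CL} holds; Proposition~\ref{P-2-2} then yields \eqref{diff_sum_pro}. With both families of relations in hand, Lemma~\ref{L-2-1} closes this case.

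The symmetric case (2) is where the care is needed, and I expect it to be the main obstacle. Setting $\bs K=\bs L$ makes \eqref{KC=CL} read $[\bs K,\bs C]=\bs 0$, which under the block forms \eqref{KAM} decouples into the stated commutativity hypotheses $\bs K_i\bs C_i=\bs C_i\bs K_i$. The real difficulty is the uniqueness of $\bs M$: with $\bs K=\bs L$ the two matrices share every eigenvalue, so the generic criterion of \cite{Syl} fails and the homogeneous Sylvester equation has nontrivial solutions in general. The resolution is to work inside the subspace of block-anti-diagonal matrices dictated by \eqref{KAM}, where \eqref{Syl_eq} decouples into the coupled pair \eqref{symm-case}, each a genuine Sylvester equation between $\bs K_1$ and $\bs K_2$, which do not share eigenvalues by assumption. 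I would therefore verify that this block-anti-diagonal structure is an invariant subspace for the relevant operations: because $\bs r$, $\bs a$, $\bs C$ and $\bs K$ are block-diagonal while $\bs s$ is block-anti-diagonal, the product $\bs r\bs a\bs s^T$ and hence every term $\bs K^{n-1-l}\bs r\bs a\bs s^T\bs L^{l}$ appearing in the proof of Proposition~\ref{P-2-1} stays block-anti-diagonal, and the decoupled dispersion relations \eqref{DR-symm} preserve the patterns of $\bs r$ and $\bs s$. Then both $\bs M_{x_n}$ and the candidate sum in \eqref{M_evo} lie in this subspace, their difference solves the homogeneous equation there, and the $\bs K_1,\bs K_2$ separation forces it to vanish --- restoring the uniqueness step inside the subspace. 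Once this is in place, Propositions~\ref{P-2-1} and~\ref{P-2-2} apply exactly as in the asymmetric case.

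Finally, having established \eqref{Sij_moves} and \eqref{diff_sum_pro} in both cases, I would invoke Lemma~\ref{L-2-1} to obtain \eqref{diff_recur} and \eqref{SDYM-3} for $\bs U$ and $\bs V$, which completes the proof. The only genuinely new verification beyond citing earlier results is the invariant-subspace argument in case (2); everything else is bookkeeping that reduces the theorem to Propositions~\ref{P-2-1}, \ref{P-2-2} and Lemma~\ref{L-2-1}.
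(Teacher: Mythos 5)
Your proposal is correct and follows essentially the same route as the paper: Theorem~\ref{T-2-1} is stated there as a summary of the case analysis in Sec.~\ref{sec-2-3}, which reduces everything to Propositions~\ref{P-2-1}, \ref{P-2-2} and Lemma~\ref{L-2-1} exactly as you do. Your invariant-subspace argument for the symmetric case is a welcome sharpening --- the paper merely asserts that the decoupled system \eqref{symm-case} determines $\bs M_1,\bs M_2$ uniquely, whereas you make explicit that uniqueness must be understood within the block-anti-diagonal subspace (since the full equation with $\bs K=\bs L$ has a nontrivial homogeneous kernel) and verify that every term in the proof of Proposition~\ref{P-2-1} stays in that subspace.
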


Thus, the master functions $\{\bs S^{(i,j)}\}$ for our purpose are well defined, so is $\bs V$, as a solution to
equation  \eqref{SDYM-3}.

\begin{theorem}\label{T-2-2}
Equation \eqref{SDYM-3} has the solution
\begin{align}\label{V}
    \bs V=\bs I_{\mathcal{N}}-\bs s^T(\bs C+\bs M)^{-1}\bs K^{-1}\bs r,
\end{align}
where the involved elements are subject to the two cases described in Theorem \ref{T-2-1}.
Moreover, the determinant $|\bs V|$ is a constant, and in particular,
$|\bs V|=1$ in the symmetric Sylvester equation case.
\end{theorem}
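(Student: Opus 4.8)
The first assertion is little more than bookkeeping, so I would dispatch it quickly. Setting $i=-1$, $j=0$ in the definition \eqref{Sij} gives $\bs S^{(-1,0)}=\bs s^T(\bs C+\bs M)^{-1}\bs K^{-1}\bs r$, so that $\bs V=\bs I_{\mathcal{N}}-\bs S^{(-1,0)}$ from \eqref{UV} is exactly the matrix displayed in \eqref{V}. By Theorem \ref{T-2-1}, in either the asymmetric or the symmetric case the master functions $\{\bs S^{(i,j)}\}$ obey the evolution relations \eqref{Sij_moves} and the difference relation \eqref{diff_sum_pro}; hence Lemma \ref{L-2-1} applies verbatim, and $\bs V$ satisfies \eqref{diff_recur} and therefore \eqref{SDYM-3}. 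The substance of the theorem is thus the statement about $|\bs V|$, which I would obtain by a direct determinantal computation rather than by differentiating.

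The plan for $|\bs V|$ is to collapse the $\mathcal{N}\times\mathcal{N}$ determinant onto the $N\times N$ level and then eliminate $\bs r\bs s^T$ through the Sylvester equation. Writing $\bs V=\bs I_{\mathcal{N}}-\bs s^T\bigl[(\bs C+\bs M)^{-1}\bs K^{-1}\bs r\bigr]$ and applying the Weinstein--Aronszajn identity $\det(\bs I_{\mathcal{N}}-\bs P\bs Q)=\det(\bs I_N-\bs Q\bs P)$ with $\bs P=\bs s^T$ and $\bs Q=(\bs C+\bs M)^{-1}\bs K^{-1}\bs r$ gives
\begin{align*}
|\bs V|=\det\bigl(\bs I_N-(\bs C+\bs M)^{-1}\bs K^{-1}\bs r\bs s^T\bigr).
\end{align*}
Substituting $\bs r\bs s^T=\bs K\bs M-\bs M\bs L$ from \eqref{Syl_eq} turns $\bs K^{-1}\bs r\bs s^T$ into $\bs M-\bs K^{-1}\bs M\bs L$, and factoring out $(\bs C+\bs M)^{-1}$ collapses the bracket to $\bs C+\bs K^{-1}\bs M\bs L$, leaving
\begin{align*}
|\bs V|=\frac{\det(\bs C+\bs K^{-1}\bs M\bs L)}{\det(\bs C+\bs M)}.
\end{align*}
The decisive feature of this formula is that, after the reduction, it no longer contains $\bs r$ or $\bs s$, so all the remaining $\mf x$-dependence resides in $\bs M$; the task then reduces to showing the ratio is $\bs M$-independent in each case.

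For the asymmetric case I would put $\bs C=\bs 0$, whereupon the ratio becomes $\det(\bs K^{-1}\bs M\bs L)/\det\bs M=\det\bs L/\det\bs K$, a quotient of determinants of the constant matrices $\bs K$ and $\bs L$; hence $|\bs V|$ is constant, though in general different from $1$. For the symmetric case $\bs K=\bs L$ and $\bs K\bs C=\bs C\bs K$, so $\bs K^{-1}\bs C\bs K=\bs C$ and the numerator factors as $\bs C+\bs K^{-1}\bs M\bs K=\bs K^{-1}(\bs C+\bs M)\bs K$, whose determinant equals $\det(\bs C+\bs M)$; the ratio is then identically $1$. In both cases the value is manifestly a constant, which also settles the claim that $|\bs V|$ is constant without any appeal to \eqref{diff_recur}.

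I expect the only delicate points to be algebraic rather than conceptual: keeping the noncommutative factors in the correct order when reducing $\bs K^{-1}\bs r\bs s^T$, and, in the symmetric case, invoking $\bs K\bs C=\bs C\bs K$ (equivalently $\bs K_i\bs C_i=\bs C_i\bs K_i$) at exactly the right place to pull $\bs K^{-1}$ through $\bs C$. I would also record that the standing invertibility hypotheses --- $\bs M$ invertible when $\bs C=\bs 0$, and $\bs C+\bs M$ invertible in general --- are precisely those already required for $\bs S^{(i,j)}$ and $\bs V$ to be well defined, so no new condition enters. As a cross-check, the constancy can be re-derived from Jacobi's formula $\pa_{x_{n+1}}\ln|\bs V|=\mr{tr}(\bs V_{x_{n+1}}\bs V^{-1})=-\mr{tr}(\bs U_{x_n})$ together with $\mr{tr}(\bs S^{(0,0)}_{x_n})=0$, the latter obtained by tracing \eqref{Sij_moves} and telescoping \eqref{diff_sum_pro}.
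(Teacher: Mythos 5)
Your proof is correct and takes essentially the same route as the paper: both collapse $|\bs V|$ to an $N\times N$ determinant via the Weinstein--Aronszajn formula, eliminate $\bs r\bs s^T$ using the Sylvester equation \eqref{Syl_eq}, and invoke the commutation condition to evaluate the resulting ratio, with the solution property itself delegated to Theorem \ref{T-2-1} and Lemma \ref{L-2-1}. The only organizational difference is that the paper applies the unified condition $\bs K\bs C-\bs C\bs L=\bs 0$ to get $|\bs V|=|\bs L|/|\bs K|$ in both cases at once before specializing to $\bs K=\bs L$, whereas you treat the cases $\bs C=\bs 0$ and $\bs K=\bs L$ separately (and your Jacobi-formula cross-check, while valid, is not in the paper).
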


\begin{proof}
Recalling the  Weinstein--Aronszajn formula (see Theorem D.2 in \cite{HJN-2016})
\begin{equation}\label{WA}
|\bs I_N+\bs A \bs B|= |\bs I_{\mathcal{N}}+\bs B \bs A |,
\end{equation}
where $\bs A$ and $\bs B$ are $N\times \mathcal{N}$ and $\mathcal{N} \times N$ matrices,
from \eqref{V} and \eqref{Syl_eq} we have
\begin{align*}
        |\bs V|&=|\bs I_{\mathcal{N}}-\bs s^T(\bs C+\bs M)^{-1}\bs K^{-1}\bs r|\\
        &=|\bs I_N-(\bs C+\bs M)^{-1}\bs K^{-1}\bs r\bs s^T|\\
        &=|\bs I_N-(\bs C+\bs M)^{-1}\bs K^{-1}(\bs K\bs M-\bs M\bs L)| \\
        &=|\bs I_N-(\bs C+\bs M)^{-1}\bs K^{-1}(\bs K(\bs C+\bs M)-(\bs C+\bs M)\bs L-(\bs K\bs C-\bs C\bs L))| \\
        &=|(\bs C+\bs M)^{-1}\bs K^{-1}(\bs C+\bs M)\bs L+(\bs C+\bs M)^{-1}\bs K^{-1}(\bs K\bs C-\bs C\bs L)|.
    \end{align*}
In light of \eqref{KC=CL}, it then follows that
\begin{align}\label{det_V}
    |\bs V|=|(\bs C+\bs M)^{-1}\bs K^{-1}(\bs C+\bs M)\bs L|=\frac{|\bs L|}{|\bs K|}.
\end{align}
When $\bs K=\bs L$, we get $|\bs V|=1$.

\end{proof}

\begin{remark}\label{R-2-1}
    The Cauchy matrix approach used in \cite{LQYZ-SAPM-2022} is
    the symmetric Sylvester equation case with $\mathcal{M}=1$.
\end{remark}

\begin{remark}\label{R-2-2}
An alternative way to obtain equation \eqref{diff_recur} will be presented in Appendix \ref{APP-A}.
\end{remark}

\section{Asymmetric Sylvester formulation for the SDYM equation}\label{sec-3}

We have proved that equation \eqref{SDYM-3} admits a solution $\bs V$
that is well-defined through the Sylvester equations, as described in Theorem \ref{T-2-1}.
Note that at this moment $\bs V$ is not a solution of the SU$(\mathcal{N})$ SDYM equation \eqref{SDYM}
since the two equations,  \eqref{SDYM-3} and \eqref{SDYM}, are apparently different with a sign ``$-$''.
We need to elaborate the solutions of  equation \eqref{SDYM-3} by imposing extra constraints
on the coordinates $\mathbf{x}$ so that in the new coordinates system $\bs V$ can solve
the SU$(\mathcal{N})$ SDYM equation \eqref{SDYM}.

In this section, we investigate functions $\bs V$ defined in the asymmetric case.
We will first present explicit solutions $\bs r, \bs s$ and $\bs M$ of this case.
Then we will explore the constraints so that under which $\bs V$ solves the SDYM equation \eqref{SDYM}.
We will also examine the physical significance of $\bs V$,
including Hermitian property, determinant and positive-definiteness property of $\bs V$.

\subsection{Explicit solutions to the Sylvester equation \eqref{Syl_eq}}\label{sec-3-1}

In the asymmetric case, $\bs K$ and $\bs L$ do not share eigenvalues and $\bs C=\bs0$.
Thus $\bs V$ is written as
\begin{align}\label{V:C=0}
    \bs V=\bs I_{\mathcal{N}}-\bs s^T\bs M^{-1}\bs K^{-1}\bs r,
\end{align}
where $\{\bs M, \bs r, \bs s\}$ are solutions of system \eqref{Syl_eq} and \eqref{rs_move}.
Note that $\bs K, \bs L$ and any matrices that are similar to them
give rise to same $\bs S^{(i,j)}$ and consequently same $\bs V$, (see \cite{FZ-KP-2022,XZZ-2014,ZZ-2013}).
In this context, in the rest part of this section, we suppose
$\bs K$ and $\bs L$ are already of their canonical forms.
%Note that for $\mathcal{N }> 1$, the rank of $\bs r\bs s^T$ is larger than one.
To have an explicit form of solution $\bs M$ to the Sylvester equation \eqref{Syl_eq},
we first consider the following rank one case:\footnote{The right-hand side of \eqref{system-lem2a}
is a matrix with rank one.}
\begin{subequations}\label{system-lem2}
\begin{equation}\label{system-lem2a}
\bs K\bs M^{(i)}-\bs M^{(i)}\bs L=\bs r^{(i)}(\bs s^{(i)})^T,
\end{equation}
where $\bs M^{(i)}$ is the unknown $N\times N$ matrix to be determined,
$\bs r^{(i)}, \bs s^{(i)}$ and $a^{(i)}$ are components of $\bs r, \bs s$ and $\bs a$ (see \eqref{rs_exp}),
and satisfy
\begin{equation}\label{system-lem2b}
        \bs r^{(i)}_{x_n}=a^{(i)}\bs K^n\bs r^{(i)},
        ~~\bs s^{(i)}_{x_n}=-a^{(i)}(\bs L^T)^n\bs s^{(i)}.
    \end{equation}
\end{subequations}
This system allows explicit solutions (see \cite{FZ-KP-2022}).
To present these solutions, we list some special matrices and their properties.

Since $\bs K,\bs L$ are of their canonical forms,
we assume\footnote{$\bs K$ and $\bs L$ are diagonal when $n_i\equiv m_j \equiv 1$, for $i=1, 2, \cdots, p$
and $j=1,2,\cdots, q$.}
\begin{align}\label{KL}
    \bs K=\mr{diag}(\bs\Gamma_{n_1}(k_1),\bs\Gamma_{n_2}(k_2),\cdots,\bs\Gamma_{n_p}(k_p)),~~~~
    \bs L=\mr{diag}(\bs\Gamma_{m_1}(l_1),\bs\Gamma_{m_2}(l_2),\cdots,\bs\Gamma_{m_q}(l_q)),
\end{align}
where $\bs\Gamma_n(k)$ denotes the $n$-th order Jordan block
\begin{align}
    \bs\Gamma_{n}(k)=\begin{pmatrix}
        k & 0  & 0  & \cdots & 0 & 0 \\
        1   & k & 0 & \cdots & 0 & 0   \\
        0   &   1   &  k & \cdots & 0 & 0  \\
        \vdots    & \vdots  & \vdots  & \vdots & \vdots  & \vdots   \\
        0   & 0 & 0 & \cdots & 1 & k
    \end{pmatrix}_{n \times n},
\end{align}
the indices $\{n_i,m_j\}$ in \eqref{KL} satisfy $\sum_{i=1}^pn_i=\sum_{j=1}^qm_j=N$,
and $\{k_i\}$ and $\{l_j\}$ are eigenvalues of $\bs K$ and $\bs L$, respectively.
Then we introduce a $M\times M$  lower triangular Toeplitz matrix
\begin{align}\label{F}
\bs F_M=
        \begin{pmatrix}
            a_0 & 0 & 0 & \cdots & 0 \\
            a_1 & a_0 & 0 & \cdots & 0 \\
            a_2 & a_1 & a_0 & \cdots & 0 \\
            \vdots & \vdots & \vdots & \ddots & \vdots \\
            a_{M-1} & a_{M-2}
            & a_{M-3}   & \cdots & a_0
        \end{pmatrix}, ~~ a_j\in \mathbb{C}, ~ a_0\neq 0,
    \end{align}
and a $M\times M$  symmetric matrix
\begin{align}\label{H}
    \bs H_{M}=\begin{pmatrix}
       b_0 & b_1 & \cdots  &  b_{M-2}    & b_{M-1} \\
    b_1& b_2 & \cdots & b_{M-1} &0 \\
    \vdots &\vdots&\begin{sideways}$\ddots$\end{sideways} & \vdots & \vdots \\
    b_{M-2} & b_{M-1} & \cdots & 0 & 0    \\
     b_{M-1} & \cdots & 0 & \cdots & 0 \\
    \end{pmatrix}, ~~ b_j\in \mathbb{C}, ~ b_{M-1}\neq 0.
\end{align}
Note that we assume $a_0$ and $b_{M-1}$ are not zero so that $\bs F_M$ and $\bs H_M$ are invertible.
If there are $C^\infty$ functions $f(k)$ and $g(k)$ such that
\[a_j=\frac{\partial^{j}_{k}f(k)}{j!},~~
b_j=\frac{\partial^{j}_{k}g(k)}{j!},~~ (j=0,1,\cdots, M-1),\]
we say  $\bs F_{M}$ and  $\bs H_{M}$ are generated by
$f(k)$ and $g(k)$, and denote them by $\bs F_{M}[f(k)]$ and  $\bs H_{M}[g(k)]$, respectively.
Thus we have $\bs\Gamma_{n}(k)=\bs F_{n}[f(k)]$ where $f(k)=k$.
It can be verified that
\begin{equation*}
\bs H_{M}[g(k)] \bs F_{M}[f(k)]=\bs H_{M}[g(k)f(k)],
\end{equation*}
which means $\bs H_{M}[g(k)] \bs F_{M}[f(k)]$ is symmetric and
\begin{equation}\label{HF}
\bs H_{M}[g(k)] \bs F_{M}[f(k)]=\bs H_{M}[f(k)] \bs F_{M}[g(k)].
\end{equation}
In general, we have the following properties \cite{ZZSZ-RMP-2014}.
\begin{proposition}\label{P-A-1}
By $\mathcal{T}_M$ and $\mathcal{H}_M$ we denote the sets
composed by the $M\times M$ matrices of the form \eqref{F} and \eqref{H}, respectively.
Then we have:
\begin{itemize}
\item[(1)]{ $\mathcal{T}_M$ is an Abelian group.}
\item[(2)]{ All elements in $\mathcal{H}_M$ are symmetric, i.e. $\bs H=\bs H^T,~ \forall \bs H \in \mathcal{H}_M$.}
\item[(3)]
{$\bs H\bs F   \in \mathcal{H}_M,  ~ \forall \bs F\in \mathcal{T}_M,~\forall \bs H \in \mathcal{H}_M$.}
\item[(4)]{ $\bs H^{-1} \bs Q \in \mathcal{T}_M$, ~$\forall  \bs H, \bs Q \in \mathcal{H}_M$.}
\end{itemize}
\end{proposition}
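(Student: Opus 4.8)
The plan is to reduce all four statements to a single structural observation about two fixed matrices. Let $\bs N$ denote the $M\times M$ nilpotent lower-shift matrix (ones on the first subdiagonal), so that every $\bs F\in\mathcal{T}_M$ of the form \eqref{F} is exactly a polynomial $\bs F=a_0\bs I+a_1\bs N+\cdots+a_{M-1}\bs N^{M-1}$ with $a_0\neq0$; and let $\bs J$ be the reversal (anti-diagonal) matrix, $\bs J_{ij}=1$ iff $i+j=M+1$. The first claim I would establish is the identity $\bs J\bs N\bs J=\bs N^T$, which is a one-line index computation, whence $\bs J\,p(\bs N)\,\bs J=p(\bs N)^T$ for every polynomial $p$. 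The second and central claim is the characterization $\mathcal{H}_M=\bs J\,\mathcal{T}_M$: reading off \eqref{H} shows that $\bs J\bs H$ has $(i,j)$-entry $b_{M-1-i+j}$, which is lower-triangular Toeplitz with diagonal entry $b_{M-1}$, so $\bs J\bs H\in\mathcal{T}_M$, and the invertibility constraints match since $b_{M-1}\neq0\Leftrightarrow a_0\neq0$. Once these two facts are in hand the proposition is almost immediate.

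For item (1) I would show that $\mathcal{T}_M$ is a group under matrix multiplication: it is closed because a product of polynomials in $\bs N$ is again such a polynomial (truncated by $\bs N^M=\bs0$) whose constant term $a_0b_0$ is nonzero; it is commutative since all its elements are polynomials in the single matrix $\bs N$; the identity is $\bs I$; and the inverse of $\bs F=a_0(\bs I+\bs N\,q(\bs N))$ is $a_0^{-1}(\bs I+\bs N\,q(\bs N))^{-1}$, a terminating Neumann series (because $\bs N^M=\bs0$), hence again in $\mathcal{T}_M$. Associativity is inherited from matrix multiplication.

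For items (2)--(4) I would write each $\bs H\in\mathcal{H}_M$ as $\bs H=\bs J\bs G$ with $\bs G\in\mathcal{T}_M$. Then (2) follows from $\bs H^T=\bs G^T\bs J=(\bs J\bs G\bs J)\bs J=\bs J\bs G=\bs H$, using $\bs J^T=\bs J$, $\bs J^2=\bs I$ and $\bs G^T=\bs J\bs G\bs J$. For (3), $\bs H\bs F=\bs J(\bs G\bs F)$ with $\bs G\bs F\in\mathcal{T}_M$ by item (1), so $\bs H\bs F\in\bs J\,\mathcal{T}_M=\mathcal{H}_M$, its defining nonzero coefficient being the product of the diagonal entries of $\bs G$ and $\bs F$. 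For (4), writing also $\bs Q=\bs J\bs P$ with $\bs P\in\mathcal{T}_M$, the reversal cancels, $\bs H^{-1}\bs Q=\bs G^{-1}\bs J\bs J\bs P=\bs G^{-1}\bs P\in\mathcal{T}_M$ by the group property.

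I expect the only genuine obstacle to be the bookkeeping behind the characterization $\mathcal{H}_M=\bs J\,\mathcal{T}_M$ — in particular getting the anti-diagonal indexing in \eqref{H} and the coefficient correspondence $a_j\leftrightarrow b_{M-1-j}$ right, and checking that the two invertibility conditions ($a_0\neq0$ versus $b_{M-1}\neq0$) are exactly matched. An alternative, coordinate-free route would avoid $\bs J$ entirely and instead use the generating-function calculus of \eqref{HF}: properties (2)--(4) can be read off from $\bs H_{M}[g]\,\bs F_{M}[f]=\bs H_{M}[gf]$ together with the companion identity $\bs H_{M}[g]^{-1}\bs H_{M}[h]\in\mathcal{T}_M$, after observing that every element of $\mathcal{T}_M$ and $\mathcal{H}_M$ is generated by some $C^\infty$ germ. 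I would nonetheless keep the reversal-matrix argument as the primary proof, since it is the most transparent and makes the Abelian-group structure manifest.
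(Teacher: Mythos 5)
Your proof is correct. There is, however, nothing in the paper to compare it against: the paper does not prove Proposition \ref{P-A-1} at all, but simply quotes it from the reference \cite{ZZSZ-RMP-2014}. Your argument therefore fills a gap rather than duplicates or diverges from an in-paper proof. The two structural facts you isolate are both verified by straightforward index computations: $\bs J\bs N\bs J=\bs N^T$ (hence $\bs J\,p(\bs N)\,\bs J=p(\bs N)^T$ for any polynomial $p$), and $\mathcal{H}_M=\bs J\,\mathcal{T}_M$ via the bijective coefficient correspondence $a_j\leftrightarrow b_{M-1-j}$, the invertibility conditions $a_0\neq 0$ and $b_{M-1}\neq 0$ matching exactly as you say (for the set equality one needs both inclusions, but your correspondence is manifestly invertible, so this is immediate). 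Given these, item (1) follows from the realization of $\mathcal{T}_M$ as the invertible polynomials in the nilpotent shift $\bs N$ (the Neumann series for the inverse terminates because $\bs N^M=\bs 0$), and items (2)--(4) collapse to the group structure of $\mathcal{T}_M$ together with $\bs J^2=\bs I$, $\bs J^T=\bs J$; I checked each of these reductions and they are sound. Your closing remark is also apt: the polynomial-in-$\bs N$ viewpoint is precisely what underlies the generating-function identity \eqref{HF} that the paper states, so your reversal-matrix proof and the generating-function route are two faces of the same structure; the former has the advantage of making the symmetry claims (2)--(4) transparent, while the latter is what the paper actually uses downstream (e.g.\ in the proof of Lemma \ref{L-4-1}).
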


Next, let
\begin{equation}\label{eE}
\e_n=(\underbrace{1,0,0,\cdots,0}_{\text{$n$-dimensional}}), ~~
\bs E_{n_1n_2\cdots n_p}=(\e_{n_1},\e_{n_2},\dots,\e_{n_p})^T
\end{equation}
and
\begin{equation}\label{G}
\bs G=(\bs G_{n_i,m_j}(k_{i},l_{j}))_{p\times q}
\end{equation}
be a $p\times q$ block matrix,
where each $\bs G_{n_i,m_j}(k_{i},l_{j})$ is a $n_i\times m_j$ matrix
in which the $(s,t)$-th element is defined by
\[
\frac{(-1)^{s+t}\tbinom{s+t-2}{s-1}}{(k_{i}-l_{j})^{s+t-1}},
~~\mathrm{for}~ 1\leq s \leq n_i,~ 1\leq t \leq m_j,
\]
where $\tbinom{n}{s}=\frac{n!}{s!(n-s)!}$.
Apart from the above $\bs G$,
we also introduce $\bs F^{(i)}$ and $\bs H^{(i)}$:
\begin{align}\label{FH}
    &\bs F^{(i)}=\mr{diag}(\bs F_{n_1}[\rho^{(i)}(k_1)],\dots,\bs F_{n_p}[\rho^{(i)}(k_p)]),
     ~~\bs H^{(i)}=\mr{diag}(\bs H_{m_1}[\sigma^{(i)}(l_1)],\dots,\bs H_{m_q}[\sigma^{(i)}(l_q)]),
\end{align}
where $\rho^{(i)}(k)$ and $\sigma^{(i)}(l)$ are the plane wave factors defined by
\begin{align}\label{rho-sij-APP}
    \rho^{(i)}(k)=\exp\left(a^{(i)}\sum_{n\in\mb Z}k^nx_n\right)\overset{\circ}{\rho}{}^{(i)}(k), ~~
    \sigma^{(i)}(l)=\exp\left(-a^{(i)}\sum_{n\in\mb Z}l^nx_n\right)\overset{\circ}{\sigma}{}^{(i)}(l),
\end{align}
where $\overset{\circ}{\rho}{}^{(i)}(k)$ and $\overset{\circ}{\sigma}{}^{(i)}(l)$
are functions of $k$ and $l$, respectively, and
$a^{(i)}$ are the diagonal elements of $\bs a$.

\begin{remark}\label{R-A-1}
We may introduce $\mathcal{T}_{n_1n_2\cdots n_p}$ and $\mathcal{H}_{m_1m_2\cdots m_q}$
to denote the sets respectively composed by the block diagonal matrices with the form of
$\bs F^{(i)}$ and $\bs H^{(i)}$ as given in \eqref{FH}.
Then, the properties in Proposition \ref{P-A-1} can be extended to
$\mathcal{T}_{n_1n_2\cdots n_p}$ and $\mathcal{H}_{m_1m_2\cdots m_q}$.
In other words, the properties (1)-(4) in Proposition \ref{P-A-1}
are still valid if replacing $\mathcal{T}_{M}$ and $\mathcal{H}_{M}$
with $\mathcal{T}_{n_1n_2\cdots n_p}$ and $\mathcal{H}_{n_1n_2\cdots n_p}$.
These properties are useful in presenting solutions as well as investigating
symmetric relation of $\{S^{(i,j)}\}$, (see Sec.\ref{sec-4-1}).
\end{remark}

With the above notations, general solutions to \eqref{system-lem2} are described as the following
(see \cite{FZ-KP-2022}).

\begin{lemma}\label{L-3-1}
System \eqref{system-lem2} with $\bs K, \bs L$ given in \eqref{KL}
has solutions of the following form
    \begin{align}
        \bs M^{(i)}=\bs F^{(i)}\bs G\bs H^{(i)},~~ \bs r^{(i)}=\bs F^{(i)}\bs E_{n_1n_2\cdots n_p},
        ~~ \bs s^{(i)}=\bs H^{(i)}\bs E_{m_1m_2\cdots m_q},
    \end{align}
where $\bs E_{n_1n_2\cdots n_p}, \bs G, \bs F^{(i)}, \bs H^{(i)}$
are defined as in \eqref{eE}, \eqref{G} and \eqref{FH}.
\end{lemma}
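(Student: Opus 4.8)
My plan is to verify separately the two defining requirements of system \eqref{system-lem2}: the dispersion relations \eqref{system-lem2b} for the proposed $\bs r^{(i)},\bs s^{(i)}$, and the rank-one Sylvester equation \eqref{system-lem2a} for $\bs M^{(i)}=\bs F^{(i)}\bs G\bs H^{(i)}$. The whole argument rests on the Toeplitz--Hankel calculus of Proposition \ref{P-A-1}: $\bs F^{(i)}$ and $\bs H^{(i)}$ are generated blockwise by the plane-wave factors \eqref{rho-sij-APP}, $\bs\Gamma_n(k)=\bs F_n[k]$, and products of these structured matrices again generate the product of the corresponding functions (relation \eqref{HF}). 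Since every matrix in sight is block diagonal along the Jordan blocks of \eqref{KL}, I would first reduce each identity to a statement about a single pair $\big(\bs\Gamma_{n_i}(k_i),\bs\Gamma_{m_j}(l_j)\big)$ and then read it off entrywise.

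For the dispersion relations I would differentiate the explicit forms \eqref{rho-sij-APP}, using $\partial_{x_n}\rho^{(i)}(k)=a^{(i)}k^n\rho^{(i)}(k)$ and $\partial_{x_n}\sigma^{(i)}(l)=-a^{(i)}l^n\sigma^{(i)}(l)$. Because the entries of $\bs F^{(i)},\bs H^{(i)}$ are Taylor coefficients of these factors, differentiation acts on each block as multiplication of the generating function by $a^{(i)}k^n$ (resp. $-a^{(i)}l^n$); re-expressing this multiplication through the Jordan matrices $\bs K$ and $\bs L^T$ and then restricting to the constant selection columns $\bs E_{n_1\cdots n_p}=(\e_{n_1},\dots,\e_{n_p})^T$ and $\bs E_{m_1\cdots m_q}$ yields \eqref{system-lem2b}. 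I expect the only subtlety here to be keeping the transpose in $\bs L^T$ consistent with the Hankel (rather than Toeplitz) generation of $\bs H^{(i)}$.

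For the Sylvester equation I would substitute the ansatz and move $\bs K$ through $\bs F^{(i)}$ and $\bs L$ through $\bs H^{(i)}$. The first move uses commutativity of $\mathcal{T}$ (both $\bs K$ and $\bs F^{(i)}$ are block diagonal with lower-triangular Toeplitz blocks of matching sizes), giving $\bs K\bs F^{(i)}=\bs F^{(i)}\bs K$; the second uses that each block $\bs H_{m}[\sigma]\bs\Gamma_m(l)$ lies in $\mathcal{H}_m$ by part~(3) of Proposition \ref{P-A-1}, hence is symmetric and equal to $\bs\Gamma_m(l)^T\bs H_m[\sigma]$, i.e. $\bs H^{(i)}\bs L=\bs L^T\bs H^{(i)}$. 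Together these give
\[
\bs K\bs M^{(i)}-\bs M^{(i)}\bs L=\bs F^{(i)}\big(\bs K\bs G-\bs G\bs L^T\big)\bs H^{(i)},
\]
and since $\bs r^{(i)}(\bs s^{(i)})^T=\bs F^{(i)}\bs E_{n_1\cdots n_p}\bs E_{m_1\cdots m_q}^T\bs H^{(i)}$ with $\bs F^{(i)},\bs H^{(i)}$ invertible, the claim collapses to the bare identity $\bs K\bs G-\bs G\bs L^T=\bs E_{n_1\cdots n_p}\bs E_{m_1\cdots m_q}^T$ for the generalized Cauchy matrix $\bs G$ of \eqref{G} alone -- the dressing by the plane waves has been entirely removed.

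The core of the proof, and the step I expect to be the real obstacle, is this bare identity. As $\bs K,\bs L$ are block diagonal and $\bs E_{n_1\cdots n_p}\bs E_{m_1\cdots m_q}^T$ has a single $1$ in the top-left corner of each $(i,j)$ block, it decouples into one equation per Jordan pair; comparing $(s,t)$ entries turns each into a recurrence among $\bs G_{s,t}$, $\bs G_{s-1,t}$ and $\bs G_{s,t-1}$ with right-hand side $\delta_{s1}\delta_{t1}$. Substituting the explicit entries $(-1)^{s+t}\tbinom{s+t-2}{s-1}/(k_i-l_j)^{s+t-1}$ and clearing the common power of $(k_i-l_j)$ reduces the interior cases to a single application of Pascal's rule $\tbinom{s+t-3}{s-2}+\tbinom{s+t-3}{s-1}=\tbinom{s+t-2}{s-1}$, with the edges $s=1$ and $t=1$ checked by hand. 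The delicate part throughout is the bookkeeping of signs and of the transpose in $\bs L^T$ (equivalently, whether the off-diagonal $1$ of the Jordan action sits on the sub- or the super-diagonal); the symmetry statements in Proposition \ref{P-A-1} and relation \eqref{HF} are precisely what keep this matching under control, and I would carry out the entrywise computation in full, since it is the combinatorial heart of the lemma. Uniqueness of $\bs M^{(i)}$, guaranteed because $\bs K$ and $\bs L$ share no eigenvalues, then certifies that the constructed triple is the solution.
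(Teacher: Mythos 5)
The paper itself offers no proof of Lemma \ref{L-3-1} --- it is imported wholesale from \cite{FZ-KP-2022} --- so your self-contained verification is a genuinely different route, and its skeleton is the right one: commute $\bs K$ through $\bs F^{(i)}$ inside the Abelian group $\mathcal{T}$, convert $\bs H^{(i)}\bs L$ into $\bs L^T\bs H^{(i)}$ via symmetry of $\mathcal{H}\cdot\mathcal{T}$ products, cancel the invertible dressings, and reduce everything to the bare identity $\bs K\bs G-\bs G\bs L^T=\bs E_{n_1\cdots n_p}\bs E_{m_1\cdots m_q}^T$, checked entrywise per Jordan pair.

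The gap sits exactly at what you call the combinatorial heart: with the entries as defined in \eqref{G}, the entrywise recurrence does \emph{not} close under Pascal's rule. For a single Jordan pair the $(s,t)$ entry of $\bs K\bs G-\bs G\bs L^T$ is $(k-l)\bs G_{s,t}+\bs G_{s-1,t}-\bs G_{s,t-1}$ (terms with index $0$ set to zero), which must equal $\delta_{s1}\delta_{t1}$; inserting $\bs G_{s,t}=(-1)^{s+t}\tbinom{s+t-2}{s-1}(k-l)^{-(s+t-1)}$ gives
\begin{equation*}
(-1)^{s+t}\Bigl[\tbinom{s+t-2}{s-1}-\tbinom{s+t-3}{s-2}+\tbinom{s+t-3}{s-1}\Bigr](k-l)^{-(s+t-2)}
=2\,(-1)^{s+t}\tbinom{s+t-3}{s-1}(k-l)^{-(s+t-2)},
\end{equation*}
which is nonzero whenever $t\ge 2$: the three binomials enter with signs $(+,-,+)$, not the Pascal pattern $(+,-,-)$. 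Concretely, for two $2\times2$ Jordan blocks the $(1,2)$ entry of $\bs K\bs G-\bs G\bs L^T$ is $-2/(k-l)$, not $0$. The identity (hence the lemma) holds only after the sign factor is corrected to $(-1)^{s+1}$ --- equivalently, the stated $(-1)^{s+t}$ belongs with denominators $(k_i+l_j)^{s+t-1}$, i.e.\ with the Sylvester equation written in the KP convention $\bs K\bs M+\bs M\bs L=\bs r\bs s^T$ of \cite{FZ-KP-2022}. A companion repair is forced in your dispersion step: reading $b_j=\partial_k^jg/j!$ literally as in \eqref{H}, the vector $\bs s^{(i)}=\bs H^{(i)}\bs E_{m_1\cdots m_q}$ satisfies $\bs s^{(i)}_{x_n}=-a^{(i)}\bs L^n\bs s^{(i)}$ rather than $-a^{(i)}(\bs L^T)^n\bs s^{(i)}$ as \eqref{system-lem2b} demands; the Hankel blocks must be generated with the Taylor coefficients in reversed order, which is also the only reading under which \eqref{HF} --- a relation your reduction leans on --- is actually true. (Both discrepancies are invisible when all Jordan blocks are $1\times1$, which is why the diagonal case checks out.) So your plan would prove the corrected statement, but as written it asserts that Pascal's rule finishes the interior cases with the paper's formulas, and that is precisely the step that fails; a faithful execution has to detect and repair these sign and ordering conventions rather than conclude the verification closes.
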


Noticing that the Sylvester equation \eqref{Syl_eq} can be written as
\begin{align*}
    \bs K\bs M-\bs M\bs L=\bs r^{(1)}(\bs s^{(1)})^T+\bs r^{(2)}(\bs s^{(2)})^T+\cdots
    +\bs r^{(\mathcal{N})}(\bs s^{(\mathcal{N})})^T,
\end{align*}
and it has a unique solution when $\bs K$ and $\bs L$
do not share eigenvalues, we immediately get its solution from Lemma \ref{L-3-1}. See below.

\begin{theorem}\label{T-3-1}
Assume $\bs K$ and $\bs L$ are in their canonical forms \eqref{KL} and do not share eigenvalues. Then
the Sylvester equation \eqref{Syl_eq} has a  solution
\begin{align*}
    \bs M=\sum_{i=1}^{\mathcal{N}}\bs M^{(i)}=\sum_{i=1}^{\mathcal{N}}\bs F^{(i)}\bs G\bs H^{(i)},
\end{align*}
where $\bs F^{(i)}, \bs H^{(i)}$ and $\bs G$ are given in \eqref{FH} and \eqref{G}.
\end{theorem}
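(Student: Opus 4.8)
The plan is to exploit linearity of the Sylvester operator together with the rank-one decomposition that is already built into the dispersion relations \eqref{rs_move}. First I would observe that the source term on the right-hand side of the full Sylvester equation \eqref{Syl_eq} decomposes as a sum of rank-one terms, namely $\bs r\bs s^T=\sum_{i=1}^{\mathcal{N}}\bs r^{(i)}(\bs s^{(i)})^T$, which follows directly from the column structure \eqref{rs_exp} of $\bs r$ and $\bs s$. Since the left-hand side map $\bs M\mapsto\bs K\bs M-\bs M\bs L$ is $\mathbb{C}$-linear in $\bs M$, I would then set $\bs M=\sum_{i=1}^{\mathcal{N}}\bs M^{(i)}$ and check that this sum satisfies \eqref{Syl_eq}: applying the Sylvester operator termwise and invoking \eqref{system-lem2a} for each $i$ reproduces exactly the decomposed right-hand side.

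The second step is to supply each summand $\bs M^{(i)}$ explicitly. Here I would invoke Lemma \ref{L-3-1}, which asserts that the rank-one system \eqref{system-lem2} with $\bs K,\bs L$ in canonical form \eqref{KL} is solved by $\bs M^{(i)}=\bs F^{(i)}\bs G\bs H^{(i)}$, where $\bs F^{(i)},\bs H^{(i)}$ and $\bs G$ are the structured matrices from \eqref{FH} and \eqref{G}. Because the component dispersion relations \eqref{system-lem2b} are precisely the per-column restrictions of \eqref{rs_move} with $a^{(i)}$ the $i$-th diagonal entry of $\bs a$, each $(\bs r^{(i)},\bs s^{(i)})$ produced by Lemma \ref{L-3-1} is consistent with the global $\bs r,\bs s$ defined in \eqref{rs_exp}. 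Summing gives the claimed closed form $\bs M=\sum_{i=1}^{\mathcal{N}}\bs F^{(i)}\bs G\bs H^{(i)}$.

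Finally I would address uniqueness. The hypothesis that $\bs K$ and $\bs L$ share no eigenvalues guarantees, by the standard Sylvester solvability result cited earlier (\cite{Syl}), that \eqref{Syl_eq} has a \emph{unique} solution $\bs M$ for the given data. Hence the $\bs M$ assembled above is not merely \emph{a} solution but \emph{the} solution, and the construction is complete.

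The step I expect to carry the real content is the second one: the termwise correctness relies entirely on Lemma \ref{L-3-1}, whose verification rests on the Toeplitz/Hankel algebra encoded in Proposition \ref{P-A-1} and on the explicit Cauchy-type entries of $\bs G$ in \eqref{G}. The superposition and uniqueness arguments are essentially bookkeeping once linearity and the eigenvalue-separation hypothesis are in hand; the genuine work sits inside Lemma \ref{L-3-1}, which the theorem simply quotes. Thus the main obstacle is not in the present proof but in confirming that the rank-one building blocks $\bs F^{(i)}\bs G\bs H^{(i)}$ do satisfy \eqref{system-lem2a}, a fact I would take as given here.
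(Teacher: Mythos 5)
Your proposal is correct and follows essentially the same route as the paper: the paper's proof likewise decomposes $\bs r\bs s^T$ into the rank-one terms $\sum_{i=1}^{\mathcal{N}}\bs r^{(i)}(\bs s^{(i)})^T$, solves each piece via Lemma \ref{L-3-1}, and sums by linearity, with uniqueness supplied by the no-shared-eigenvalues hypothesis. Your additional remarks on consistency of the per-column dispersion relations and on where the real content lies (inside Lemma \ref{L-3-1}) are accurate but not needed beyond what the paper already records.
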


\subsection{Reduction to \eqref{SDYM}}\label{sec-3-2}

Introduce constraints
\begin{align}\label{conjugate}
    \bs L^T=-(\bar{\bs K})^{-1}, ~~\bar {\bs a}=\bs a,
\end{align}
where bar stands for complex conjugate. We investigate the changes of solutions and coordinates
resulting from the above constraints.
Comparing the two equations in \eqref{rs_move}, we find that
\begin{align}\label{s-r}
\bs s= \bar{\bs K}^{-1}\bar{\bs r}
\end{align}
is a solution to the second equation  in \eqref{rs_move}. Consequently, the Sylvester equation \eqref{Syl_eq}
is written as
\begin{align}\label{Syl_eq-2}
    \bs K\bs M\bs K^\dagger+\bs M=\bs r\bs r^\dagger,
\end{align}
where $\bs K^\dagger=\bar{\bs K}^T$.
Noticing that $\bs M^\dagger$ is a solution of \eqref{Syl_eq-2} as well
and the above equation has a unique solution, we immediately have
\[\bs M=\bs M^\dagger,\]
i.e., $\bs M$ is a Hermitian matrix.

Back to the first equation in \eqref{rs_move}, i.e. $\bs r_{x_n}=\bs K^n\bs r\bs a$.
Multiplying $\bar{\bs K}^{-1}$ on its complex conjugate form, in light of \eqref{conjugate} and \eqref{s-r}, we have
\[\bar{\bs K}^{-1}\bar{\bs r}_{\bar{x}_n}
=\bar{\bs K}^{-1} \bar{\bs K}^{n}\bar{\bs r}\bar{\bs a}
=\bar{\bs K}^{n} \bs s \bs a
=(-\bs L^T)^{-n} \bs s \bs a
=(-1)^n(\bs L^T)^{-n}\bs s\bs a,
\]
which gives rise to
\begin{equation}\label{s}
\bs s_{\bar{x}_n}=(-1)^{n+1}\bs s_{x_{-n}}.
\end{equation}
In a similar manner, if we start from the second equation in \eqref{rs_move},
i.e. $\bs s_{x_n}=-(\bs L^T)^n\bs s\bs a$, using \eqref{conjugate} and \eqref{s-r},
we can find that
\begin{equation}\label{r}
\bs r_{\bar{x}_n}=(-1)^{n+1}\bs r_{x_{-n}}.
\end{equation}
Thus, \eqref{s} and \eqref{r} together, indicate the consistent relation of coordinates
\begin{equation}\label{x}
{x}_n=(-1)^{n+1} \bar{x}_{-n},~~ n\in \mathbb{Z},
\end{equation}
and it then follows from \eqref{rho-sij-APP} that
\begin{equation}\label{3.19}
 \sigma^{(i)}(-1/\bar k)=\mu(k)\, \overline{\rho^{(i)}(k)},~~~
 \mu(k)= \overset{\circ}{\sigma}{}^{(i)}(-1/\bar k)/\overset{\circ}{\rho}{}^{(i)}(k).
\end{equation}

For the SDYM equation \eqref{SDYM}, by introducing\footnote{The case $n=0$ yields a 2-dimensional equation.
When $n=0$, relation \eqref{x}
yields $x_0=-\bar{x}_0$, which means $\xi_0=0$, i.e. $y_0=x_0=\mathrm{i}\eta_0$.
Consequently,   for $n=0$, \eqref{SDYM-4} reads
$ (\bs V_{y_{1}}\bs V^{-1})_{\bar{y}_{1}}=(\bs V_{x_0}\,\bs V^{-1})_{x_0}$.
Noticing that \eqref{Sij_moves2} yields $\bs V_{x_0}=[\bs V, \bs a]$, it then follows that
$ (\bs V_{y_{1}}\bs V^{-1})_{\bar{y}_{1}}=[\bs V\bs a\bs V^{-1},\bs a]$,
which is a 2-dimensional equation.}
\begin{align}\label{coor}
y_n\doteq x_n=\xi_n+\I \eta_n, ~~
\bar{y}_n\doteq (-1)^{n+1}x_{-n}=\xi_n-\I \eta_n,~~
n=1,2,\cdots,
\end{align}
where $\I^2=-1$, $\xi_n, \eta_n \in \mathbb{R}$,
and setting $m=-(n+1)$ in \eqref{SDYM-3}, we have
\begin{align}\label{SDYM-4}
    (\bs V_{y_{n+1}}\bs V^{-1})_{\bar{y}_{n+1}}+(\bs V_{\bar{y}_n}\,\bs V^{-1})_{y_n}=0,
\end{align}
which agrees with equation \eqref{SDYM}.
Thus, we come to the following theorem.

\begin{theorem}\label{T-3-2}
With the constraint \eqref{conjugate} and coordinates $\{y_n\}$ defined in \eqref{coor},
the function
\begin{align}\label{V:asym}
    \bs V=\bs I_{\mathcal{N}}-\bs r^\dagger(\bs K^{\dagger})^{-1}\bs M^{-1}\bs K^{-1}\bs r
\end{align}
is a solution to equation \eqref{SDYM-4}
and $J=\bs V$ is a solution to the SDYM equation \eqref{SDYM} with
$y=y_n, z=\bar{y}_{n+1}$.
$\bs V$ is a Hermitian matrix since $\bs M=\bs M^\dagger$.
In addition, in light of \eqref{det_V} and \eqref{conjugate}, we have
\begin{align*}
|\bs V|=\frac{|\bs L|}{|\bs K|}=\frac{(-1)^N}{|\bs K||\bar{\bs K}|}.
\end{align*}
\end{theorem}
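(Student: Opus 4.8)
The plan is to treat this statement as the harvest of the reduction analysis of Section~\ref{sec-3-2} rather than as a fresh computation: each of the three assertions (solving \eqref{SDYM}, Hermiticity, and the determinant) follows by assembling facts already in place. First I would observe that \eqref{V:asym} is merely \eqref{V:C=0} rewritten under the constraint \eqref{conjugate}: substituting $\bs s=\bar{\bs K}^{-1}\bar{\bs r}$ from \eqref{s-r}, so that $\bs s^T=\bs r^\dagger(\bs K^\dagger)^{-1}$, into $\bs V=\bs I_{\mathcal N}-\bs s^T\bs M^{-1}\bs K^{-1}\bs r$ gives exactly \eqref{V:asym}. By Theorem~\ref{T-2-1}(1) and Theorem~\ref{T-2-2} this $\bs V$ already solves \eqref{SDYM-3} in the coordinates $\{x_n\}$, so no new integrability input is needed.

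For the reduction to \eqref{SDYM} I would specialize \eqref{SDYM-3} to $m=-(n+1)$ and transport it through the coordinate change \eqref{coor}. The only mechanism involved is the chain rule dictated by \eqref{coor}: since $x_n=y_n$ and $x_{-n}=(-1)^{n+1}\bar y_n$ for positive indices, one has $\pa_{x_n}=\pa_{y_n}$ and $\pa_{x_{-n}}=(-1)^{n+1}\pa_{\bar y_n}$. Feeding these into the two terms of \eqref{SDYM-3} with $m=-(n+1)$ and factoring out the common sign $(-1)^n$ converts it into \eqref{SDYM-4}; comparing \eqref{SDYM-4} with \eqref{SDYM} under the identification $y=y_n$, $z=\bar y_{n+1}$ then identifies $J=\bs V$ as a solution of the SU$(\mathcal N)$ SDYM equation. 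This bookkeeping is where a sign slip could occur, so I would carry the chain rule out explicitly, but it is routine once \eqref{coor} is fixed.

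The Hermiticity is inherited from $\bs M=\bs M^\dagger$, which itself rests on uniqueness: taking the $\dagger$ of the reduced Sylvester equation \eqref{Syl_eq-2} and using $(\bs r\bs r^\dagger)^\dagger=\bs r\bs r^\dagger$ shows $\bs M^\dagger$ satisfies the same equation, whence $\bs M=\bs M^\dagger$. A direct conjugation of \eqref{V:asym}, using $(\bs K^{-1})^\dagger=(\bs K^\dagger)^{-1}$ and $(\bs M^{-1})^\dagger=(\bs M^\dagger)^{-1}=\bs M^{-1}$, then returns $\bs V^\dagger=\bs V$. Finally the determinant is immediate from \eqref{det_V}: it gives $|\bs V|=|\bs L|/|\bs K|$, and since $\bs L^T=-(\bar{\bs K})^{-1}$ yields $|\bs L|=|\bs L^T|=(-1)^N/|\bar{\bs K}|$, we obtain $|\bs V|=(-1)^N/(|\bs K|\,|\bar{\bs K}|)$.

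I expect no deep obstacle, since all the analytic content—existence and uniqueness of $\bs M$, the closed form \eqref{SDYM-3}, and the Hermitian reduction—is established earlier. The two points warranting genuine care are the sign and chain-rule accounting in \eqref{coor}, and checking that the no-shared-eigenvalue hypothesis underpinning uniqueness of \eqref{Syl_eq-2} remains consistent with $\bs L=-(\bs K^\dagger)^{-1}$, i.e. that $k_i\bar k_j\neq-1$ for the eigenvalues of $\bs K$; this is precisely the standing asymmetric assumption.
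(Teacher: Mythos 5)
Your proposal is correct and follows essentially the same route as the paper: the paper's "proof" of Theorem \ref{T-3-2} is precisely the derivation assembled in Section \ref{sec-3-2} (the substitution $\bs s=\bar{\bs K}^{-1}\bar{\bs r}$ into \eqref{V:C=0}, Hermiticity of $\bs M$ by uniqueness of the solution to \eqref{Syl_eq-2}, the choice $m=-(n+1)$ in \eqref{SDYM-3} under the coordinates \eqref{coor}, and the determinant from \eqref{det_V}). The only difference is that you carry out the chain-rule and sign bookkeeping $\pa_{x_{-(n+1)}}=(-1)^n\pa_{\bar y_{n+1}}$, $\pa_{x_{-n}}=(-1)^{n+1}\pa_{\bar y_n}$ explicitly, which the paper leaves implicit; your accounting is correct and the common factor $(-1)^n$ indeed cancels to give \eqref{SDYM-4}.
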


\vskip 3pt
\begin{remark}\label{R-3-1}
Function $\bs V$ given in \eqref{V:asym} provides a solution to the SDYM equation \eqref{SDYM}.
Considering physical significance, $\bs V$ should be Hermitian, positive-definite and $|\bs V|=1$.
When $\mathcal{N}$ is odd or $N$ is even,
one can always normalize $\bs V$ by
\begin{equation*}%\label{V'}
    \bs V'=(\sqrt[\mathcal{N}]{|\bs V|})^{-1}\bs V
\end{equation*}
such that $\bs V'$ is still a solution to \eqref{SDYM-4},  $\bs V'$ is Hermitian  and  $|\bs V'|=1$.
However, the positive-definiteness of $\bs V$ is uncertain.
\end{remark}

We will analyze the positive-definiteness of $\bs V$ with examples in the following.

\subsection{Positive definiteness}\label{sec-3-3}

It is hard to discuss the positive-definiteness of $\bs V$ for arbitrary $\mathcal{N}$ and $N$.
In the following we consider two special cases.

\subsubsection{One-soliton solution of the SU($\mathcal{N}$) SDYM equation}\label{sec-3-3-1}

We consider the case of $N=1$ while $\mathcal{N}$ being arbitrary.
In this case, $\bs K=k_1$
%\begin{align*}
%    \bs K=k_1
%\end{align*}
where $k_1\in \mathbb{C}$ and $k_1\neq 0$;
$\bs r$ is a $1\times \mathcal{N}$ matrix
$ \bs r=(\rho_1^{(1)}, \rho_1^{(2)},\cdots,\rho_1^{(\mathcal{N})})$,
%\begin{align*}
%   \bs r=(\rho_1^{(1)}, \rho_1^{(2)},\cdots,\rho_1^{(\mathcal{N})}),
%\end{align*}
where (for $j=1$)
\begin{equation}\label{rho-ji}
    \rho_j^{(i)}= \rho^{(i)}(k_j)=
    \exp\biggl[ a^{(i)}\sum_{m=n}^{n+1}\big((k_j^m+(-1)^{m+1}k_j^{-m})\xi_m
    +\mr i(k_j^m-(-1)^{m+1}k_j^{-m})\eta_m\big)\biggr]\overset{\circ}{\rho}_j\!{}^{(i)},
\end{equation}
$a^{(i)}\in \mathbb{R}$ and $\overset{\circ}{\rho}_j\!{}^{(i)}\in \mathbb{C}$;
the matrix $\bs M$ in this case is a scalar function
\begin{align*}
    \bs M=\frac{\bs r\bs r^\dagger}{|k_1|^2+1}=\frac{\sum_{i=1}^{\mathcal{N}}|\rho_1^{(i)}|^2}{|k_1|^2+1}.
\end{align*}
One can also write the plane wave factor $ \rho_j^{(i)}$ defined in \eqref{rho-ji} as
\begin{equation*}%\label{rho-ji-L}
\rho_j^{(i)}= \exp\bigl[a^{(i)}\mathfrak L_j(\bs x)\bigr] \overset{\circ}{\rho}_j\!{}^{(i)},
\end{equation*}
where
\begin{subequations}\label{Lcx}
\begin{align}
   &\mathfrak L_j(\bs x)
    =\bs c_j\bs x
    =\sum_{m=n}^{n+1}(k_j^m+(-1)^{m+1}k_j^{-m})\xi_m+\mr i(k_j^m-(-1)^{m+1}k_j^{-m})\eta_m,
    \label{L(x)}\\
    &\bs c_j= \left(k_j^n-\frac{(-1)^{n}}{k_j^{n}}, ~k_j^{n+1}+\frac{(-1)^{n}}{k_j^{n+1}}, ~
    \mr i k_j^n+\frac{\mr i(-1)^{n} }{k_j^{n}}, ~\mr i k_j^{n+1}-\frac{\mr i  (-1)^{n}}{k_j^{n+1}}\right),
    \label{cj}\\
    &\bs x=(\xi_n,\xi_{n+1},\eta_n,\eta_{n+1})^T.
\end{align}
\end{subequations}
By definition \eqref{UV} (see also \eqref{V:C=0}) we have
\begin{equation*}%\label{V-ij}
 \bs V=(v_{ij})_{\mathcal{N}\times \mathcal{N}}
 =\bs I_{\mathcal{N}}-\bs S^{(-1,0)}
 =\bs I_{\mathcal{N}}-(s^{(-1,0)}_{ij})_{\mathcal{N}\times \mathcal{N}}.
 \end{equation*}
To determine the positive-definiteness of $\bs V$, we investigate its leading principal minors, which  are denoted as
\begin{align*}
    D_1=v_{11},~~
    D_2=\begin{vmatrix}
        v_{11} & v_{12} \\
        v_{21} & v_{22}
    \end{vmatrix},~~
    D_3=\begin{vmatrix}
        v_{11} & v_{12} & v_{13} \\
        v_{21} & v_{22} & v_{23} \\
        v_{31} & v_{32} & v_{33}
    \end{vmatrix},~~
    \cdots,~~
    D_{\mathcal{N}}=|\bs V|.
\end{align*}
Noticing that in this case\footnote{This indicates all elements in $\bs S^{(-1,0)}$
are nonsingular globally for all $(\xi_n,\eta_n,\xi_{n+1},\eta_{n+1})\in \mathbb{R}^4$,
and so are the elements in $\bs V$.}
\begin{align*}
\bs S^{(-1,0)}=\bs s^T\bs M^{-1}\bs K^{-1}\bs r
                    =\bs r^\dagger(\bs K^\dagger)^{-1}\bs M^{-1}\bs K^{-1}\bs r
                    =\biggl(1+\frac{1}{|k_1|^2}\biggr)
                    \frac{\bs r^\dagger\bs r }{\sum_{i=1}^{\mathcal{N}}|\rho_1^{(i)}|^2},
\end{align*}
we make use of the Weinstein--Aronszajn formula (see \eqref{WA})
which in this case gives
\begin{equation*}%\label{WA}
|I_{\mathcal{N}}- \varepsilon \bs r^\dagger\bs r|
=1-\varepsilon \bs r\bs r^\dagger,
\end{equation*}
where $\varepsilon=(1+\frac{1}{|k_1|^2})\frac{1}{\sum_{i=1}^{\mathcal{N}}|\rho_1^{(i)}|^2}$.
Thus, for the  leading principal minors we have
\begin{align*}
  D_{l}  =1-\biggl(1+\frac{1}{|k_1|^2}\biggr)\frac{\sum_{i=1}^{l}|\rho_1^{(i)}|^2}
  {\sum_{i=1}^{\mathcal{N}}|\rho_1^{(i)}|^2},~
    ~~ l =1,\cdots, \mathcal{N},
\end{align*}
Clearly, the sequences satisfy
\begin{align*}
    D_1>D_2>\cdots>D_{\mathcal{N}-1}>D_{\mathcal{N}}=-\frac{1}{|k_1|^2}.
\end{align*}
Such a relation holds globally for all
$(\xi_n,\eta_n,\xi_{n+1},\eta_{n+1})\in \mathbb{R}^4$.
This fact indicates that, when $N=1$ and $\mathcal{N}\geq 2$, in the asymmetric case,
the matrix $\bs V$ is neither positive definite  nor negative definite
in any domain in $\mathbb{R}^4$.
As a conclusion, we have the following.

\begin{proposition}\label{P-5-1}
In the asymmetric case, when $\mathcal{N}\geq 2$,
the  one-soliton solution (i.e. $N=1$) $\bs V$   is neither positive definite nor negative definite,
no matter $\mathcal{N}$ is odd or even.
\end{proposition}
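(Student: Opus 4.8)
The plan is to compute the leading principal minors $D_l$ of $\bs V$ explicitly using the one-soliton data, and then show that the resulting sequence is strictly decreasing and straddles zero, so that the sign pattern of $\{D_l\}$ can never coincide with the all-positive pattern required by Sylvester's criterion for positive-definiteness (nor the alternating pattern required for negative-definiteness). Since all the computation has essentially been carried out in the paragraph preceding the statement, the proof is really an assembly of those facts into a clean logical conclusion.

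First I would recall from the explicit computation above that, for $N=1$, the matrix $\bs S^{(-1,0)}$ is a rank-one perturbation of the identity built from $\bs r$, and applying the Weinstein--Aronszajn formula \eqref{WA} to each leading $l\times l$ block gives
\begin{align*}
  D_{l}=1-\biggl(1+\frac{1}{|k_1|^2}\biggr)\frac{\sum_{i=1}^{l}|\rho_1^{(i)}|^2}{\sum_{i=1}^{\mathcal{N}}|\rho_1^{(i)}|^2},
  \qquad l=1,\cdots,\mathcal{N}.
\end{align*}
The key structural observation is that the coefficient $1+\tfrac{1}{|k_1|^2}$ is strictly greater than $1$ and that the fraction $\bigl(\sum_{i=1}^{l}|\rho_1^{(i)}|^2\bigr)\big/\bigl(\sum_{i=1}^{\mathcal{N}}|\rho_1^{(i)}|^2\bigr)$ is strictly increasing in $l$ (each new term $|\rho_1^{(l+1)}|^2$ is strictly positive, as the plane-wave factors never vanish). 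This immediately yields the strict monotonicity $D_1>D_2>\cdots>D_{\mathcal{N}}$, and evaluating at $l=\mathcal{N}$, where the fraction equals $1$, gives $D_{\mathcal{N}}=1-(1+\tfrac{1}{|k_1|^2})=-\tfrac{1}{|k_1|^2}<0$.

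Having established the chain $D_1>\cdots>D_{\mathcal{N}}=-\tfrac{1}{|k_1|^2}<0$, I would conclude as follows. Positive-definiteness requires every $D_l>0$, but $D_{\mathcal{N}}<0$ already rules this out. Negative-definiteness requires the signs of $D_l$ to alternate, starting with $D_1<0$; but a strictly decreasing real sequence whose last term is negative cannot alternate in sign once it crosses zero (once a term is negative, by monotonicity all later terms are negative, so two consecutive negative minors occur), contradicting the required alternation whenever $\mathcal{N}\geq 2$. Hence $\bs V$ is neither positive-definite nor negative-definite. Moreover, because the plane-wave factors $\rho_1^{(i)}$ are exponentials that are nonzero for every $(\xi_n,\eta_n,\xi_{n+1},\eta_{n+1})\in\mathbb{R}^4$, the inequalities hold pointwise everywhere, so the conclusion is global and independent of the domain, and it is insensitive to the parity of $\mathcal{N}$.

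I do not anticipate a genuine analytic obstacle here, since the heavy lifting (the closed form for $D_l$ via Weinstein--Aronszajn) is already done; the only point requiring a little care is the negative-definiteness half of the claim, where one must argue from the sign pattern rather than merely from $D_{\mathcal{N}}<0$. The cleanest way to dispatch it is the observation above that strict monotonicity forbids the sign alternation $(-)^l$ as soon as any interior minor is negative, which is automatic once $D_{\mathcal{N}}<0$ and $\mathcal{N}\geq 2$. I would therefore phrase the final argument entirely in terms of the incompatibility of a strictly monotone sequence with the alternating sign requirement, which makes the parity-independence transparent.
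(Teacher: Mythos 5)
Your proposal is correct and follows essentially the same route as the paper: both compute the leading principal minors $D_l$ in closed form via the Weinstein--Aronszajn formula, observe the strict global monotonicity $D_1>D_2>\cdots>D_{\mathcal{N}}=-\tfrac{1}{|k_1|^2}<0$, and conclude that neither Sylvester sign pattern can hold. The only difference is that you spell out the negative-definiteness half (monotonicity is incompatible with the alternating sign pattern once $\mathcal{N}\geq 2$), a step the paper leaves implicit.
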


\subsubsection{Two-soliton  solution of the  SU(2) SDYM equation}\label{sec-3-3-2}

Now we investigate the two-soliton solution of the  SU(2) SDYM equation,
i.e. $\mathcal{N}=2$ and $N=2$,
where $\bs K$ and $\bs r$ are given by
\begin{align*}
    \bs K=
    \begin{pmatrix}
        k_1 & \\
         & k_2
    \end{pmatrix},
    ~~~~\bs r=(\bs r^{(1)},\bs r^{(2)})=
    \begin{pmatrix}
        \rho_1^{(1)} &  \rho_1^{(2)} \\
        \rho_2^{(1)} &  \rho_2^{(2)}
    \end{pmatrix},
\end{align*}
and $\rho_j^{(i)}$  are defined as in \eqref{rho-ji}.
Note here that $\bs r^{(j)}=( \rho_1^{(j)},  \rho_2^{(j)})^T$, for $j=1,2$.
The dressed Cauchy matrix $\bs M$ is (referring to Lemma \ref{L-3-1} and Theorem \ref{T-3-1}
where $p=q=2, n_1=n_2=m_1=m_2=1$)
\[  \bs M=\bs F^{(1)}\bs G\bs H^{(1)}+\bs F^{(2)}\bs G\bs H^{(2)},
\]
where
\begin{align*}
\bs G=\begin{pmatrix}
        \frac{1}{k_1+1/{\bar k_1}} & \frac{1}{k_1+1/{\bar k_2}} \\
         \frac{1}{k_2+1/{\bar k_1}} & \frac{1}{k_2+1/{\bar k_2}}
    \end{pmatrix},\;\;
\bs F^{(i)}=\begin{pmatrix}
        \rho_1^{(i)} & \\
         & \rho_2^{(i)}
    \end{pmatrix},\;\;
\bs H^{(i)}=\begin{pmatrix}
        &\bar \rho_1^{(i)}/{\bar k_1^{}} \\
         \bar\rho_2^{(i)}/{\bar k_2} &
    \end{pmatrix}, ~~~  i=1,2,
\end{align*}
i.e.
\begin{align*}
    \bs M=
    \begin{pmatrix}
        \frac{\sum_{i=1}^{2}|\rho_1^{(i)}|^2}{|k_1|^2+1} &
        \frac{\sum_{i=1}^{2}\rho_1^{(i)}\bar\rho_2^{(i)}}{k_1\bar k_2+1} \\
        \frac{\sum_{i=1}^{2}\rho_2^{(i)}\bar\rho_1^{(i)}}{k_2\bar k_1+1} &
        \frac{\sum_{i=1}^{2}|\rho_2^{(i)}|^2}{|k_2|^2+1}
    \end{pmatrix}.
\end{align*}
Here we have made use of relation \eqref{3.19}
and already taken $\mu(k)=1/\bar k$  so that $\bs M$ is a Hermitian matrix.
$\bs V$ is given by \eqref{V:asym}.
Since, according to Theorem \ref{T-3-2},  $|\bs V|=|k_1k_2|^{-2}$ is positive,
the positive-definiteness of $\bs V$ is therefore determined by the sign of $v_{11}$,
which is
%\begin{equations}
\begin{align}
    v_{11}&=1-s_{11}^{(-1,0)}=1-(\bs r^{(1)})^\dagger(\bs K^\dagger)^{-1}\bs M^{-1}\bs K^{-1}\bs r^{(1)}
    \nonumber \\
    &=1-\frac{1}{|k_1k_2|^2|\bs M|}
    \bigg(\frac{|k_2|^2|\rho_1^{(1)}|^2}{|k_2|^2+1}\sum_{i=1}^{2}|\rho_2^{(i)}|^2
        +\frac{|k_1|^2|\rho_2^{(1)}|^2}{|k_1|^2+1}\sum_{i=1}^{2}|\rho_1^{(i)}|^2  \nonumber\\
    &~~~~~-2\mr{Re}\Bigl[\frac{k_1\bar k_2}{k_1\bar k_2+1}
    \bar\rho_1^{(1)}\rho_2^{(1)}\sum_{i=1}^{2}\rho_1^{(i)}\bar\rho_2^{(i)}\Bigr]
    \bigg). \label{v11KP}
\end{align}

In the following we are going to develop an approach to get some localized domains in $\mathbb{R}^4$
where $v_{11}$ is positive.
The idea is described as follows.
We first determine the singular points of $v_{11}$.
In principle, $v_{11}$ is continuous in $\mathbb{R}^4$ except on those singular points.
We can find a point $\bs x_0\in \mathbb{R}^4$  on which $v_{11}$ is positive,
and then utilizing the local property of a continuous function,
we can have a neighbourhood  of $\bs x_0$ in which $v_{11}$ takes positive value.

Let us elaborate this approach below.
The singularity of $v_{11}$ takes place when $|\bs M|=0$,
i.e.
\begin{align*}%\label{DetM=0}
    \frac{(\sum_{i=1}^{2}|\rho_1^{(i)}|^2)(\sum_{i=1}^{2}|\rho_2^{(i)}|^2)}
    {(|k_1|^2+1)(|k_2|^2+1)}
    -\frac{|\sum_{i=1}^{2}\rho_1^{(i)}\bar\rho_2^{(i)}|^2}{|k_1\bar k_2+1|^2}=0.
\end{align*}
We write it as the following simpler form
\begin{equation}\label{w-k}
    \frac{|w_1\bar w_2+1|^2}{(|w_1|^2+1)(|w_2|^2+1)}
=\frac{|k_1\bar k_2+1|^2}{(|k_1|^2+1)(|k_2|^2+1)},
\end{equation}
where
\begin{equation}\label{w-x}
 w_j=\rho_j^{(1)}/\rho_j^{(2)}=
    \overset{\circ}{w}_j \exp\bigl[(a^{(1)}-a^{(2)})\mathfrak L_j(\bs x)\bigr], ~~~
    \overset{\circ}{w}_j= \overset{\circ}{\rho}_j\!{}^{(1)}/\overset{\circ}{\rho}_j\!{}^{(2)},
\end{equation}
and $\mathfrak L_j(\bs x)$ are defined as in \eqref{Lcx}.
Denoting
$w_j=a_j+\mr i b_j, ~ a_j=\mathrm{Re}[w_j],~ b_j=\mathrm{Im}[w_j]$,
equation \eqref{w-k} gives rise to
\begin{equation}\label{w-k-ab}
\frac{(a_1a_2+b_1b_2+1)^2+(a_2b_1-a_1b_2)^2}{(a_1^2+b_1^2+1)(a_2^2+b_2^2+1)}
=\frac{|k_1\bar k_2+1|^2}{(|k_1|^2+1)(|k_2|^2+1)}.
\end{equation}
Noticing that both $a_j$ and $b_j$ are functions of $\bs x$, in principle,
for given $(k_1,k_2)$, equation \eqref{w-k-ab} determines an implicit function
(e.g. $\xi_{n+1}=\mathcal{G}(\xi_n, \eta_n, \eta_{n+1})$),
i.e. a surface in $\mathbb{R}^4$ on which $v_{11}$ is singular.
Note that it is easy to see $(w_1, w_2)=\pm (k_1, k_2)$ and
$(w_1, w_2)=\pm (\bar k_1, \bar k_2)$ are the points on the surface \eqref{w-k-ab}.

For the sake of convenience, our investigation on $v_{11}$ will be implemented in terms of $\{w_j, \bar w_j\}$
rather than $\bs x$.
Equation \eqref{w-x} defines $w_j$ as functions of $\bs x$ and $\{k_j\}$.
The other way around,  $\bs x$ can be expressed via $\{w_j\}$  and $\{k_j\}$.

\begin{proposition}\label{P-3-1}
For given $\{w_j\}$ and $\{k_j\}$, from \eqref{w-x}, the corresponding coordinate $\bs x$
can be recovered via the formula
\begin{align}\label{3.33}
    \bs x =
    \begin{pmatrix}
        \mr{Re}[\bs c_1 ] \\
        \mr{Im}[\bs c_1 ] \\
        \mr{Re}[\bs c_2 ] \\
        \mr{Im}[\bs c_2 ]
    \end{pmatrix}^{-1}
    \begin{pmatrix}
        \mr{Re}[\mathcal W_1] \\
        \mr{Im}[\mathcal W_1] \\
        \mr{Re}[\mathcal W_2] \\
        \mr{Im}[\mathcal W_2]
    \end{pmatrix},
\end{align}
where $\bs c_j$ is defined in \eqref{cj} and
\[\mathcal{W}_j = \frac{\ln(w_j/\overset{\circ}{w}_j)+2s \pi\mr i}{a^{(1)}-a^{(2)}}, ~~  s\in\mb Z.\]
\end{proposition}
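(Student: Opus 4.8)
The plan is to invert the defining relation \eqref{w-x} directly, by taking a complex logarithm and then separating real and imaginary parts. First I would rewrite \eqref{w-x} in the form $w_j/\overset{\circ}{w}_j=\exp\bigl[(a^{(1)}-a^{(2)})\mathfrak{L}_j(\bs x)\bigr]$ and apply the (multivalued) complex logarithm to both sides. Because the complex exponential is $2\pi\mr i$-periodic, this produces $\ln(w_j/\overset{\circ}{w}_j)+2s\pi\mr i=(a^{(1)}-a^{(2)})\mathfrak{L}_j(\bs x)$ for some integer $s$. Dividing by $a^{(1)}-a^{(2)}$, which is real and nonzero under the nondegeneracy assumption $a^{(1)}\neq a^{(2)}$ inherent in the construction (cf. Remark \ref{R-2-0}), yields exactly $\mathfrak{L}_j(\bs x)=\mathcal{W}_j$ with $\mathcal{W}_j$ as stated.

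Next, I would invoke the linear expression $\mathfrak{L}_j(\bs x)=\bs c_j\bs x$ from \eqref{L(x)}, where $\bs c_j$ is the complex row vector \eqref{cj} and $\bs x=(\xi_n,\xi_{n+1},\eta_n,\eta_{n+1})^T$ is real. Since $\bs x$ is real, the single complex scalar equation $\bs c_j\bs x=\mathcal{W}_j$ splits into the two real equations $\mr{Re}[\bs c_j]\,\bs x=\mr{Re}[\mathcal{W}_j]$ and $\mr{Im}[\bs c_j]\,\bs x=\mr{Im}[\mathcal{W}_j]$. Collecting these for $j=1,2$ assembles a real $4\times4$ linear system whose coefficient matrix has rows $\mr{Re}[\bs c_1],\mr{Im}[\bs c_1],\mr{Re}[\bs c_2],\mr{Im}[\bs c_2]$ and whose right-hand side is the column of the corresponding parts of $\mathcal{W}_1,\mathcal{W}_2$. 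Inverting this system gives precisely formula \eqref{3.33}.

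The main obstacle will be establishing that the $4\times4$ coefficient matrix is nonsingular, so that the inverse in \eqref{3.33} exists and $\bs x$ is uniquely recovered for each fixed branch $s$ (the integer $s$ parametrizes the countable lattice of preimages produced by the multivalued logarithm). This reduces to checking that the four real row vectors extracted from $\bs c_1$ and $\bs c_2$ via \eqref{cj} are linearly independent, which can fail only on a degenerate set of eigenvalue configurations $(k_1,k_2)$. I would confirm genericity by treating the determinant of the coefficient matrix as an explicit function of $k_1,k_2$ and verifying it is not identically zero; for distinct non-real $k_1,k_2$ one expects independence. Everything else is routine linear algebra, so this invertibility check is the only substantive point.
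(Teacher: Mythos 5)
Your proposal is correct and is essentially the argument the paper has in mind: the paper states Proposition \ref{P-3-1} without proof precisely because formula \eqref{3.33} \emph{is} the log-then-linear-inversion you describe — take the multivalued logarithm of \eqref{w-x}, divide by the real nonzero constant $a^{(1)}-a^{(2)}$, split the complex linear relations $\bs c_j\bs x=\mathcal{W}_j$ into real and imaginary parts, and solve the resulting real $4\times4$ system. The invertibility of the coefficient matrix, which you rightly flag as the only substantive point (it can indeed fail for special admissible eigenvalue configurations, e.g.\ certain real pairs $k_1,k_2$), is not addressed by the paper either: it is implicitly assumed by writing the inverse in \eqref{3.33}, so your genericity remark goes slightly beyond what the paper records.
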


Before we proceed, we look at the following real-valued function,
\begin{equation}\label{w-F}
 \mathcal{F}(w_1, w_2)=   \frac{|w_1\bar w_2+1|^2}{(|w_1|^2+1)(|w_2|^2+1)},
\end{equation}
with which  \eqref{w-k}  is written as $ \mathcal{F}(w_1, w_2)= \mathcal{F}(k_1, k_2)$.
Defining
\begin{equation}\label{w-alb}
w_j=\exp(\alpha_j+ \mathrm{i}\beta_j),~~ \alpha_j, \beta_j \in \mathbb{R}[\bs x],
\end{equation}
we have
\begin{equation*}%\label{w-F-alb}
 \mathcal{F}(w_1, w_2)
 =1-\frac{\exp(2\alpha_1)+\exp(2\alpha_2)-2\exp(\alpha_1+\alpha_2)
 \cos(\beta_1-\beta_2)}{(\exp(2\alpha_1)+1)(\exp(2\alpha_2)+1)},
\end{equation*}
which indicates the following asymptotic property.

\begin{proposition}\label{P-3-2}
For the function $\mathcal{F}(w_1, w_2)$ defined in \eqref{w-F}, we have
\begin{equation*}
\mathcal{F}(w_1, w_2) \sim \left\{
\begin{array}{lll}
1, && (|w_1|, |w_2|)\to (\infty, \infty) \mr{~or~}  (0,0),\\
\frac{1}{\exp(2\alpha_1)+1}, && |w_1|~\mathrm{finite}, ~ |w_2| \to 0,\\
\frac{\exp(2\alpha_1)}{\exp(2\alpha_1)+1}, && |w_1|~\mathrm{finite}, ~ |w_2| \to \infty.
\end{array}\right.
\end{equation*}
\end{proposition}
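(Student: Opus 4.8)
The plan is to work directly from the explicit expression for $\mathcal{F}$ displayed immediately before the statement, which already isolates the $\alpha_j$- and $\beta_j$-dependence after the substitution \eqref{w-alb}. Writing $A_j \doteq |w_j|^2 = \exp(2\alpha_j)$, that expression reads
\begin{equation*}
\mathcal{F}(w_1,w_2) = 1 - \frac{A_1 + A_2 - 2\sqrt{A_1 A_2}\,\cos(\beta_1-\beta_2)}{(A_1+1)(A_2+1)},
\end{equation*}
so the whole proposition reduces to computing the limit of the single fraction on the right in four asymptotic regimes, where $|w_j| \to 0$ corresponds to $A_j \to 0$ and $|w_j| \to \infty$ to $A_j \to \infty$. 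I would treat each regime by identifying the dominant terms in numerator and denominator.

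For the regime $(|w_1|,|w_2|) \to (\infty,\infty)$ I would bound the numerator by $(\sqrt{A_1}+\sqrt{A_2})^2$ using $|\cos| \le 1$, and divide by the denominator $(A_1+1)(A_2+1) \ge A_1 A_2$, giving a bound of order $A_1^{-1} + A_2^{-1} + (A_1 A_2)^{-1/2}$ that tends to $0$; hence $\mathcal{F} \to 1$. For $(|w_1|,|w_2|)\to(0,0)$ both $A_j \to 0$, so the numerator tends to $0$ while the denominator tends to $1$, again giving $\mathcal{F} \to 1$. In the two mixed regimes I keep $A_1$ (equivalently $\alpha_1$) fixed as a parameter and let $A_2 \to 0$ or $A_2 \to \infty$. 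When $A_2 \to 0$ the cross term $2\sqrt{A_1 A_2}\cos(\beta_1-\beta_2) \to 0$, the numerator tends to $A_1$ and the denominator to $A_1+1$, yielding $\mathcal{F} \to 1 - A_1/(A_1+1) = 1/(\exp(2\alpha_1)+1)$. When $A_2 \to \infty$ I factor $A_2$ out of numerator and denominator: the numerator is $A_2(1 + o(1))$ and the denominator is $A_2(A_1+1)(1+o(1))$, so the fraction tends to $1/(A_1+1)$ and $\mathcal{F} \to \exp(2\alpha_1)/(\exp(2\alpha_1)+1)$, matching the stated forms.

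The only point requiring care is that each limit must hold uniformly in the phase difference $\beta_1-\beta_2$, since $\beta_j$ is an unbounded function of $\bs x$ through \eqref{w-alb}. This is exactly why I would route every estimate through the bound $|2\sqrt{A_1 A_2}\cos(\beta_1-\beta_2)| \le 2\sqrt{A_1 A_2}$, which removes the oscillatory factor entirely and reduces each case to an elementary limit in the real variables $A_1, A_2$ alone; once the $\beta$-dependence is absorbed in this way, no genuine obstacle remains and the four asymptotics follow at once.
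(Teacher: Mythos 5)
Your proposal is correct and follows essentially the same route as the paper: the paper's ``proof'' consists precisely of rewriting $\mathcal{F}$ in the form $1-\frac{\exp(2\alpha_1)+\exp(2\alpha_2)-2\exp(\alpha_1+\alpha_2)\cos(\beta_1-\beta_2)}{(\exp(2\alpha_1)+1)(\exp(2\alpha_2)+1)}$ and reading off the limits, which is exactly your starting point with $A_j=\exp(2\alpha_j)$. Your explicit case-by-case estimates, in particular absorbing the oscillatory factor via $|\cos(\beta_1-\beta_2)|\le 1$, merely supply the elementary details the paper leaves implicit.
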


Now let us proceed to investigate $v_{11}$, to look at its asymptotic
behavior with respect to $(|w_1|, |w_2|)$.
Rewrite $v_{11}$ in terms of $w_j$ as the following
\begin{equation}\label{v11-fg}
v_{11}=\frac{1+\frac{|k_1\bar k_2+1|^2}{|k_1-k_2|^2} \Bigl(\bigl| \frac{w_1}{k_1}\bigr|^2
+\bigl|\frac{w_2}{k_2}\bigr|^2\Bigr)
+ \bigl|\frac{w_1w_2}{k_1k_2}\bigr|^2
-2\frac{(1+|k_1|^2)(1+|k_2|^2}{|k_1-k_2|^2}\mathrm{Re}\bigl[\frac{w_1\bar w_2}{k_1 \bar k_2}\bigr]}
{1-\frac{|k_1\bar k_2+1|^2}{|k_1-k_2|^2} (|w_1|^2+|w_2|^2)
+ |w_1w_2|^2
+2\frac{(1+|k_1|^2)(1+|k_2|^2}{|k_1-k_2|^2}\mathrm{Re}[w_1\bar w_2]}.
\end{equation}
Note that we have the following relation
\begin{equation}\label{Re-ww}
\mathrm{Re}[w_1 w_2] \to 0 ~~~~ \mathrm{when} ~~ (|w_1|, |w_2|) \to (0,0),
\end{equation}
which is easy to  obtain from
$\mathrm{Re}[w_1 w_2]=\exp(\alpha_1+\alpha_2) \cos(\beta_1+\beta_2)$
in light of the expression \eqref{w-alb}.
Then, by computation we obtain the asymptotic property of $v_{11}$. 

\begin{proposition}\label{P-3-3}
$v_{11}$ has the asymptotic property in terms of $|w_j|$, as given in Table \ref{Tab-2}.
\begin{table}[ht]
\begin{center}
\begin{tabular}{|c|c|c|} \hline
 limits & $|w_1|\rightarrow0$ & $|w_1|\rightarrow\infty$   \\ \hline
$|w_2|\rightarrow0$ & $v_{11}\to 1 $ & $v_{11}\to -|k_1|^{-2}$   \\ \hline
$|w_2|\rightarrow\infty$ & $v_{11}\to -|k_2|^{-2}$ 	 & $v_{11}\to |k_1k_2|^{-2} $  \\ \hline
\end{tabular}
\end{center}
\caption{Asymptotic property of $v_{11}$}\label{Tab-2}
\end{table}
\end{proposition}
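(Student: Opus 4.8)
The plan is to substitute the expression \eqref{v11-fg} and take the four double limits listed in Table \ref{Tab-2}, treating $|w_1|$ and $|w_2|$ as the large/small parameters while keeping $k_1, k_2$ fixed. The function $v_{11}$ is a ratio of two polynomial-type expressions in $|w_1|^2, |w_2|^2, \mathrm{Re}[w_1\bar w_2]$ and $\mathrm{Re}[w_1 w_2]$, so in each corner of the $(|w_1|,|w_2|)$-plane I would identify the dominant terms in numerator and denominator and read off the limit. First I would handle the diagonal corners, $(|w_1|,|w_2|)\to(\infty,\infty)$ and $(0,0)$, since those should be the cleanest.

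For the corner $(|w_1|,|w_2|)\to(\infty,\infty)$, the dominant term in both numerator and denominator is the $|w_1 w_2|^2 = |w_1|^2|w_2|^2$ term. In the numerator this appears as $\bigl|\tfrac{w_1 w_2}{k_1 k_2}\bigr|^2 = |w_1 w_2|^2/|k_1 k_2|^2$, while in the denominator it is just $|w_1 w_2|^2$; all cross terms involving $\mathrm{Re}[w_1\bar w_2]$ and the $(|w_1|^2+|w_2|^2)$ terms are lower order. Hence the ratio tends to $1/|k_1 k_2|^2 = |k_1 k_2|^{-2}$, matching the bottom-right entry of Table \ref{Tab-2}. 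For $(0,0)$ every term except the constant $1$ vanishes in both numerator and denominator — here I would invoke \eqref{Re-ww} to dispose of the $\mathrm{Re}[w_1\bar w_2]$ and $\mathrm{Re}[w_1 w_2]$ cross terms, which do not obviously go to zero from their modulus alone but do via the expression $\exp(\alpha_1+\alpha_2)\cos(\beta_1\pm\beta_2)$ — so both numerator and denominator tend to $1$, giving $v_{11}\to 1$.

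For the two mixed corners I would take $|w_2|\to\infty$ with $|w_1|\to 0$ (and symmetrically). When $|w_1|\to 0$, the surviving dominant terms are those carrying the full power of $|w_2|$: in the numerator the $\bigl|\tfrac{w_1 w_2}{k_1 k_2}\bigr|^2$ term vanishes (it carries $|w_1|^2$), leaving $\tfrac{|k_1\bar k_2+1|^2}{|k_1-k_2|^2}\bigl|\tfrac{w_2}{k_2}\bigr|^2$ as the leading growth, while in the denominator the analogous term is $-\tfrac{|k_1\bar k_2+1|^2}{|k_1-k_2|^2}|w_2|^2$. The cross terms $\mathrm{Re}[w_1\bar w_2]$ are $O(|w_1||w_2|)$, hence lower order than $|w_2|^2$, and can be dropped. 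Taking the ratio of these leading $|w_2|^2$ coefficients yields $\bigl(\tfrac{1}{|k_2|^2}\bigr)/(-1) = -|k_2|^{-2}$, which is the required entry; the case $|w_1|\to\infty, |w_2|\to 0$ is symmetric and gives $-|k_1|^{-2}$.

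The main obstacle I anticipate is controlling the indefinite cross terms $\mathrm{Re}[w_1\bar w_2]$ and $\mathrm{Re}[w_1 w_2]$ in the corners where they are not automatically subdominant by modulus counting — in particular verifying that along the limits these contributions genuinely stay lower order rather than oscillating at the same scale as the dominant power. The representation \eqref{w-alb}, $w_j=\exp(\alpha_j+\mathrm{i}\beta_j)$, resolves this: it makes explicit that $\mathrm{Re}[w_1\bar w_2]=\exp(\alpha_1-\alpha_2)\cos(\beta_1-\beta_2)$ (and similarly for $w_1 w_2$), so the modulus bound $|\mathrm{Re}[w_1\bar w_2]|\le |w_1||w_2|$ is sharp enough to guarantee these terms are $O(|w_1||w_2|)$ and therefore negligible against the diagonal $|w_1 w_2|^2$ term and, in the mixed corners, against the pure $|w_2|^2$ (or $|w_1|^2$) terms. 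Once the subdominance of the cross terms is certified, each limit reduces to reading off a ratio of leading coefficients, and Table \ref{Tab-2} follows entry by entry.
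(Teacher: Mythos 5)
Your proposal is correct and is essentially the paper's own argument: the paper proves Proposition \ref{P-3-3} by direct computation of the corner limits of the explicit rational expression \eqref{v11-fg}, using the representation \eqref{w-alb} to control the cross terms, which is exactly your dominant-balance analysis (the bound $|\mathrm{Re}[w_1\bar w_2]|\le |w_1||w_2|$ indeed suffices in all four corners, including at the origin, so your hesitation there was unnecessary). One cosmetic slip: $\mathrm{Re}[w_1\bar w_2]=\exp(\alpha_1+\alpha_2)\cos(\beta_1-\beta_2)$, not $\exp(\alpha_1-\alpha_2)\cos(\beta_1-\beta_2)$, but this does not affect your argument since only the modulus bound is used.
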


Now we  sketch the steps by which we can get a domain in $\mathbb{R}^4$ where $\bs V$ is definite-positive.
First, instructed by Proposition \ref{P-3-3} we choose $w_1$ and $w_2$
of which both $|w_1|$ and $|w_2|$ are small enough (or large enough).
In this case, in light of Proposition \ref{P-3-2}, $\mathcal{F}(w_1,w_2)$ has a value  close to 1.
Next, again  in light of Proposition \ref{P-3-2}, we take $(k_1, k_2)$
such that $\mathcal{F}(k_1,k_2)$ is far from $\mathcal{F}(w_1,w_2)$.
In the final step we recover a point $\bs x$ from $(w_1, w_2, k_1, k_2)$ using formula \eqref{3.33} in
Proposition \ref{P-3-1}.
Thus there is a neighbourhood of $\bs x$ where $v_{11}$ is positive and  $\bs V$ is definite-positive.

To compare with the two-soliton solution obtained
from the symmetric Sylvester formulation (see
Sec.\ref{sec-4-3-2}), below we give the explicit expressions of $v_{12}$ and $v_{22}$,
which are formulated in \eqref{V:asym} and can be expressed as
\begin{equation}\label{v21-fg}
v_{12}=\frac{\frac{(1+\bar k_1 k_2)(1+|k_2|^2)\bar w_2 |w_1|^2}{(k_1-k_2)\bar k_1 |k_2|^2}
-\frac{(1+ k_1\bar k_2)(1+|k_1|^2)\bar w_1 |w_2|^2}{(k_1-k_2)\bar k_2 |k_1|^2}
-\frac{(1+|k_1|^2)(1+\bar k_1 k_2)\bar w_1}{(\bar k_1-\bar k_2)k_2 |k_1|^2}
+\frac{(1+|k_2|^2)(1+ k_1\bar k_2)\bar w_2}{(\bar k_1-\bar k_2)k_1 |k_2|^2}}
{1-\frac{|k_1\bar k_2+1|^2}{|k_1-k_2|^2} (|w_1|^2+|w_2|^2)
+ |w_1w_2|^2
+2\frac{(1+|k_1|^2)(1+|k_2|^2}{|k_1-k_2|^2}\mathrm{Re}[w_1\bar w_2]}
\end{equation}
and
\begin{equation}\label{v22-fg}
v_{22}=\frac{1+\frac{|k_1\bar k_2+1|^2}{|k_1-k_2|^2} (|k_1 w_1|^2+|k_2 w_2|^2)
+ |k_1k_2w_1w_2|^2
-2\frac{(1+|k_1|^2)(1+|k_2|^2}{|k_1-k_2|^2}\mathrm{Re}[k_1\bar k_2 w_1\bar w_2]}
{|k_1k_2|^2\Bigl (1-\frac{|k_1\bar k_2+1|^2}{|k_1-k_2|^2} (|w_1|^2+|w_2|^2)
+ |w_1w_2|^2
+2\frac{(1+|k_1|^2)(1+|k_2|^2}{|k_1-k_2|^2}\mathrm{Re}[w_1\bar w_2]\Bigr)}.
\end{equation}
We also present  their deformations by redefining $\overset{\circ}{w}_j$ such that
\begin{equation}
w_1= \frac{ k_1-k_2 }{1+k_1\bar k_2} \widetilde{w}_1,~~
w_2= -\frac{ k_1-k_2 }{1+k_2\bar k_1} \widetilde{w}_2.
\end{equation}
In terms of $\widetilde{w}_j$, $v_{11}$, $v_{21}$ and $v_{22}$ are respectively  written as
\begin{equation}\label{v11-wt}
v_{11}=\frac{1+\bigl|\frac{\widetilde{w}_1}{k_1}\bigr |^2+\bigl|\frac{\widetilde{w}_2}{k_2}\bigr |^2
+ \frac{|k_1-k_2|^4}{|1+k_1\bar k_2|^4}\bigl|\frac{\widetilde{w}_1\widetilde{w}_2}{k_1k_2}\bigr |^2
+2 (1+|k_1|^2)(1+|k_2|^2 )
\mathrm{Re}\Bigl[\frac{\widetilde{w}_1\bar{\widetilde{w}}_2}{k_1\bar k_2(1+k_1\bar k_2)^2}\Bigr] }
{1-| \widetilde{w}_1|^2-| \widetilde{w}_2|^2
+ \frac{|k_1-k_2|^4}{|1+k_1\bar k_2|^4}|\widetilde{w}_1\widetilde{w}_2|^2
-2 (1+|k_1|^2)(1+|k_2|^2)
\mathrm{Re}\Bigl[\frac{\widetilde{w}_1 \bar{\widetilde{w}}_2}{(1+k_1\bar k_2)^2}\Bigr]},
\end{equation}
\begin{equation}\label{v12-wt}
v_{12}=\frac{- (\bar k_1-\bar k_2)^2
\Bigl(\frac{k_1 (1+|k_2|^2)\bar{\widetilde{w}}_2 |\widetilde{w}_1|^2}{(1+\bar k_1 k_2)^2}
+\frac{k_2(1+|k_1|^2) \bar{\widetilde{w}}_1 | \widetilde{w}_2|^2}{(1+ k_1\bar k_2)^2} \Bigr)
- \bar k_2(1+ |k_1|^2)  \bar{\widetilde{w}}_1
-\bar k_1(1+ |k_2|^2)  \bar{\widetilde{w}}_2 }
{|k_1k_2|^2 \Bigl(1-| \widetilde{w}_1|^2-| \widetilde{w}_2|^2
+ \frac{|k_1-k_2|^4}{|1+k_1\bar k_2|^4}|\widetilde{w}_1\widetilde{w}_2|^2
-2 (1+|k_1|^2)(1+|k_2|^2)
\mathrm{Re}\Bigl[\frac{\widetilde{w}_1 \bar{\widetilde{w}}_2}{(1+k_1\bar k_2)^2}\Bigr]\Bigr)}
\end{equation}
and
\begin{equation}\label{v22-wt}
v_{22}=\frac{1+|k_1 \widetilde{w}_1|^2+|k_2 \widetilde{w}_2|^2
+ \frac{|k_1-k_2|^4}{|1+k_1\bar k_2|^4}|k_1k_2\widetilde{w}_1\widetilde{w}_2|^2
+2 (1+|k_1|^2)(1+|k_2|^2 )
\mathrm{Re}\Bigl[\frac{k_1\bar k_2 \widetilde{w}_1\bar{\widetilde{w}}_2}
{(1+k_1\bar k_2)^2}\Bigr] }
{|k_1k_2|^2 \Bigl(1-| \widetilde{w}_1|^2-| \widetilde{w}_2|^2
+ \frac{|k_1-k_2|^4}{|1+k_1\bar k_2|^4}|\widetilde{w}_1\widetilde{w}_2|^2
-2 (1+|k_1|^2)(1+|k_2|^2)
\mathrm{Re}\Bigl[\frac{\widetilde{w}_1 \bar{\widetilde{w}}_2}{(1+k_1\bar k_2)^2}\Bigr]\Bigr)}.
\end{equation}
In addition, we note that one can always normalize the real coefficient $(a^{(1)}-a^{(2)})$ in \eqref{w-x}
to be $1$ since the SDYM equation \eqref{SDYM} is invariant with real scaling transformation
$(y, z) \rightarrow (ay, az)$ where $a\in \mathbb{R}$.

\section{Symmetric Sylvester formulation for the SDYM equation}\label{sec-4}

In this section we derive the SDYM equation \eqref{SDYM} from \eqref{SDYM-3}
where $\bs V$ is defined by \eqref{V} in the symmetric Sylvester equation case.
In this case $\bs K=\bs L$ and $\bs K \bs C=\bs C \bs K$.

\subsection{Explicit solutions to the Sylvester equations \eqref{symm-case}}\label{sec-4-1}

In light of assumption \eqref{KAM}, we are able to obtain solutions to the Sylvester equations \eqref{symm-case}
where $\bs r_j$ and $\bs s_j$ are defined by
\eqref{DR-symm}.
Both equations in \eqref{symm-case} are in the form of \eqref{Syl_eq}.
Thus we can simply use the results in Theorem \ref{T-3-1}  to present solutions to \eqref{symm-case}.
For the notations involved  in the following theorem, one can refer to Sec.\ref{sec-3-1}.

\begin{theorem}\label{T-4-1}
We assume $\bs K_1$ and $\bs K_2$ are of their canonical forms, i.e.
\begin{align}\label{K1K2}
    \bs K_1=\mr{diag}(\bs\Gamma_{n_1}(k_1),\bs\Gamma_{n_2}(k_2),\cdots,\bs\Gamma_{n_p}(k_p)),~~
    \bs K_2=\mr{diag}(\bs\Gamma_{m_1}(l_1),\bs\Gamma_{m_2}(l_2),\cdots,\bs\Gamma_{m_q}(l_q)),
\end{align}
where $\sum_{i=1}^{p}n_i=\sum_{i=1}^{q}m_i=M$,
and $\bs K_1$ and $\bs K_2$ do not share eigenvalues.
Note that  $\bs K_1 \in \mathcal{T}_{n_1n_2\cdots n_p}$ and
$\bs K_2 \in \mathcal{T}_{m_1m_2\cdots m_q}$.
Take $\bs C_1 \in \mathcal{T}_{n_1n_2\cdots n_p}$ and
$\bs C_2 \in \mathcal{T}_{m_1m_2\cdots m_q}$, i.e.
\begin{equation*}%\label{C1C2}
    \bs C_1=\mr{diag}(\bs C_{1,n_1},\bs C_{1,n_2},\cdots,\bs C_{1,n_p}),~~
    \bs C_2=\mr{diag}(\bs C_{2,m_1},\bs C_{2,m_2},\cdots,\bs C_{2,m_q}),
\end{equation*}
where $\bs C_{j,n} \in \mathcal{T}_n$   such that
$\bs K_j \bs C_j=\bs C_j \bs K_j$ for $j=1,2$.
There exist $\bs F_j^{(i)}, \bs H_j^{(i)}, \bs G_j$ and $\bs E_j$ such that
\begin{subequations}
\begin{align}
& \bs M_1^{(i)}=  \bs F_1^{(i)}\bs G_1\bs H_2^{(i)}, ~~
\bs M_2^{(i)}=  \bs F_2^{(i)}\bs G_2\bs H_1^{(i)}, ~~ \bs G_2=-\bs G_1^T, \\
&\bs r_j^{(i)}=\bs F_j^{(i)}\bs E_j, ~~
    \bs s_j^{(i)}=\bs H_j^{(i)}\bs E_j, ~~ i=1,\cdots,\mathcal{M}, ~~j=1,2 \label{4.3b}
\end{align}
\end{subequations}
are solutions to
%\begin{subequations}
\begin{align*}
& \bs K_1\bs M_1^{(i)}-\bs M_1^{(i)}\bs K_2=\bs r_1^{(i)}(\bs s_2^{(i)})^T, ~~
\bs K_2\bs M_2^{(i)}-\bs M_2^{(i)}\bs K_1=\bs r_2^{(i)}(\bs s_1^{(i)})^T,\\
& \bs r_{1,x_n}^{(i)}=a_1^{(i)}\bs K_1^n\bs r_1^{(i)}, ~~
\bs s_{1,x_n}^{(i)}=-a_2^{(i)}(\bs K_1^T)^n\bs s_1^{(i)}, \\
& \bs r_{2,x_n}^{(i)}=a_2^{(i)}\bs K_2^n\bs r_2^{(i)}, ~~
\bs s_{2,x_n}^{(i)}=-a_1^{(i)}(\bs K_2^T)^n\bs s_2^{(i)},
 ~~ i=1,\cdots,\mathcal{M}, ~~ n\in \mathbb{Z},
\end{align*}
%\end{subequations}
where
%\bsb\label{FFHH}
\begin{align*}
    &\bs F_1^{(i)}=\mr{diag}(\bs F_{n_1}[\rho^{(i)}(k_1)],\dots,\bs F_{n_p}[\rho^{(i)}(k_p)]),
     ~~\bs H_1^{(i)}=\mr{diag}(\bs H_{n_1}[\sigma^{(i)}(k_1)],\dots,\bs H_{n_p}[\sigma^{(i)}(k_p)]),\\
     &\bs F_2^{(i)}=\mr{diag}(\bs F_{m_1}[\rho^{(i)}(l_1)],\dots,\bs F_{m_q}[\rho^{(i)}(l_q)]),
     ~~\bs H_2^{(i)}=\mr{diag}(\bs H_{m_1}[\sigma^{(i)}(l_1)],\dots,\bs H_{m_q}[\sigma^{(i)}(l_q)]),
     %& \bs E_1=\bs E_{n_1n_2\cdots n_p},~~ \bs E_2=\bs E_{m_1m_2\cdots m_q},
\end{align*}
%\esb
$\bs E_1=\bs E_{n_1n_2\cdots n_p}, ~ \bs E_2=\bs E_{m_1m_2\cdots m_q}$,
and the plane wave factors $\rho^{(i)}(k)$ and $\sigma^{(i)}(k)$ are defined as in \eqref{rho-sij-APP}.
Consequently,
\begin{subequations}
\begin{align}
& \bs M_1=\sum_{i=1}^{\mathcal{M}}\bs M_1^{(i)}
= \sum_{i=1}^{\mathcal{M}} \bs F_1^{(i)}\bs G_1\bs H_2^{(i)}, ~~
 \bs M_2=\sum_{i=1}^{\mathcal{M}}\bs M_2^{(i)}
=- \sum_{i=1}^{\mathcal{M}}  \bs F_2^{(i)}\bs G_1^T\bs H_1^{(i)},\label{4.5a}\\
&\bs r_j=(\bs r_j^{(1)}, \bs r_j^{(2)}, \cdots, \bs r_j^{(\mathcal{M})})^T,~~
\bs s_j=(\bs s_j^{(1)}, \bs s_j^{(2)}, \cdots, \bs s_j^{(\mathcal{M})})^T,~~j=1,2 \label{4.5b}
\end{align}
\end{subequations}
are solutions to \eqref{symm-case} and \eqref{DR-symm}.
Note that  $\bs K_1, \bs C_1, \bs F^{(i)}_1 \in \mathcal{T}_{n_1n_2\cdots n_p}$,
$\bs H^{(i)}_1 \in \mathcal{H}_{n_1n_2\cdots n_p}$,
$\bs K_2, \bs C_2, \bs F^{(i)}_2 \in \mathcal{T}_{m_1m_2\cdots m_q}$
and $\bs H^{(i)}_2 \in \mathcal{H}_{m_1m_2\cdots m_q}$.
%For more details of the involved  notations one can refer to Appendix \ref{APP-A}.
\end{theorem}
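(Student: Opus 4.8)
The plan is to recognize that Theorem \ref{T-4-1} is not a fresh construction but a direct application of the asymmetric result in Theorem \ref{T-3-1} (equivalently Lemma \ref{L-3-1} in the rank-one case) to each of the two scalar Sylvester equations in the coupled system \eqref{symm-case}. First I would observe that the first equation, $\bs K_1\bs M_1^{(i)}-\bs M_1^{(i)}\bs K_2=\bs r_1^{(i)}(\bs s_2^{(i)})^T$, is precisely of the form \eqref{Syl_eq} under the identifications $\bs K\mapsto\bs K_1$, $\bs L\mapsto\bs K_2$, with a rank-one right-hand side. Since $\bs K_1$ and $\bs K_2$ are in canonical form \eqref{K1K2} and share no eigenvalues, Lemma \ref{L-3-1} applies verbatim and yields $\bs M_1^{(i)}=\bs F_1^{(i)}\bs G_1\bs H_2^{(i)}$, together with $\bs r_1^{(i)}=\bs F_1^{(i)}\bs E_1$ and $\bs s_2^{(i)}=\bs H_2^{(i)}\bs E_2$, where $\bs G_1$ is the Cauchy matrix \eqref{G} built from the eigenvalues $\{k_i\}$ of $\bs K_1$ (on the left) and $\{l_j\}$ of $\bs K_2$ (on the right).

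The second equation, $\bs K_2\bs M_2^{(i)}-\bs M_2^{(i)}\bs K_1=\bs r_2^{(i)}(\bs s_1^{(i)})^T$, is again of the form \eqref{Syl_eq}, now with the roles of $\bs K_1$ and $\bs K_2$ interchanged. Applying Lemma \ref{L-3-1} a second time gives $\bs M_2^{(i)}=\bs F_2^{(i)}\bs G_2\bs H_1^{(i)}$, where $\bs G_2$ is the Cauchy matrix assembled from $\{l_j\}$ on the left and $\{k_i\}$ on the right. The one genuinely new algebraic identity is $\bs G_2=-\bs G_1^T$, which I would establish by comparing entries of the Jordan-block submatrices directly. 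Writing the $(s,t)$ entry of the $(i,j)$ block of $\bs G_1$ as in \eqref{G}, the $(t,s)$ entry of the corresponding block of $\bs G_2$ carries denominator $(l_j-k_i)^{s+t-1}=(-1)^{s+t-1}(k_i-l_j)^{s+t-1}$ and binomial coefficient $\binom{s+t-2}{t-1}=\binom{s+t-2}{s-1}$; combining the sign from the denominator with the existing $(-1)^{s+t}$ prefactor reproduces exactly the negative of the $(s,t)$ entry of $\bs G_1$, which is the claimed transpose-with-sign relation.

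Next I would verify the dispersion relations. These are read off from the block decomposition \eqref{DR-symm} restricted to the $i$-th column: right multiplication of $\bs r_j$ by $\bs a_j$ selects the diagonal entry $a_j^{(i)}$, converting \eqref{DR-symm-1} and \eqref{DR-symm-2} into the componentwise relations stated in the theorem. That $\bs r_j^{(i)}=\bs F_j^{(i)}\bs E_j$ and $\bs s_j^{(i)}=\bs H_j^{(i)}\bs E_j$ satisfy these is inherited from the plane-wave construction \eqref{rho-sij-APP}: differentiating the generators of $\bs F_j^{(i)}$ and $\bs H_j^{(i)}$ in $x_n$ brings down a factor $a^{(i)}k^n$ (respectively $-a^{(i)}l^n$), which is precisely the action of $\bs K_j^n$ (respectively $-(\bs K_j^T)^n$) on those Toeplitz and symmetric generators. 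Finally, to assemble the full solution I would sum the rank-one pieces, $\bs M_1=\sum_i\bs M_1^{(i)}$ and $\bs M_2=\sum_i\bs M_2^{(i)}$, and invoke uniqueness of the Sylvester solution (guaranteed by $\bs K_1,\bs K_2$ having disjoint spectra) to conclude these sums solve \eqref{symm-case}, with $\bs G_2=-\bs G_1^T$ producing the compact form \eqref{4.5a}.

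The main obstacle I anticipate is purely the bookkeeping in the identity $\bs G_2=-\bs G_1^T$ in the genuinely non-diagonal (Jordan) case: one must track the index swap $(s,t)\leftrightarrow(t,s)$, the binomial symmetry, and the sign flip $(k_i-l_j)\mapsto(l_j-k_i)$ simultaneously across every block, rather than in the diagonal special case where the blocks collapse to scalars and the relation is immediate. Once Lemma \ref{L-3-1} is in hand, the dispersion relations and the reassembly by summation are routine.
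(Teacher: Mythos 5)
Your proposal is correct and follows essentially the same route as the paper: the paper's (implicit) proof is precisely to observe that both equations in \eqref{symm-case} have the form \eqref{Syl_eq} and to apply Theorem \ref{T-3-1} (via the rank-one Lemma \ref{L-3-1}) to each of them, with the roles of $\bs K_1$ and $\bs K_2$ interchanged in the second equation. Your entrywise verification of $\bs G_2=-\bs G_1^T$ (the sign flip from $(l_j-k_i)^{s+t-1}=(-1)^{s+t-1}(k_i-l_j)^{s+t-1}$ combined with the binomial symmetry $\tbinom{s+t-2}{t-1}=\tbinom{s+t-2}{s-1}$) is a detail the paper asserts without proof, so you have in fact supplied slightly more justification than the paper itself.
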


In the symmetric Sylvester equation case, $\bs S^{(i,j)}$  is defined as\footnote{
Note that when $\bs C$ is invertible, \eqref{Sij-Sy} can be deformed into
$\bs S^{(i,j)}=\bs s^T\bs K^j (\bs I+\bs M')^{-1}\bs K^i\bs r'$
by redefining $\bs M=\bs C \bs M'$ and $\bs r=\bs C \bs r^{\prime}$.
We will use \eqref{Sij-Sy} for the sack of generality.}
\begin{align}\label{Sij-Sy}
        \bs S^{(i,j)} \doteq \bs s^T\bs K^j (\bs C+\bs M)^{-1}\bs K^i\bs r, ~~ i,j\in \mathbb{Z}.
\end{align}
In   $2\times 2$ block matrix form
\begin{align}\label{Sij-2}
    \bs S^{(i,j)}=
    \begin{pmatrix}
        \bs s_1^{(i,j)} & \bs s_2^{(i,j)} \\
        \bs s_3^{(i,j)} & \bs s_4^{(i,j)}
    \end{pmatrix},
\end{align}
where each $\bs s_l^{(i,j)}$ is a $\mathcal{M}\times \mathcal{M}$ matrix,
we have formulae
\bsb\label{sij}
\begin{align}
    \bs s_1^{(i,j)}&=-\bs s_2^T\bs K_2^j\bs C_2^{-1}\bs M_2(\bs C_1-\bs M_1\bs C_2^{-1}\bs M_2)^{-1}
    \bs K_1^i\bs r_1, \\
    \bs s_2^{(i,j)}&=\bs s_2^T\bs K_2^j(\bs C_2-\bs M_2\bs C_1^{-1}\bs M_1)^{-1}\bs K_2^i\bs r_2, \label{s2}\\
    \bs s_3^{(i,j)}&=\bs s_1^T\bs K_1^j(\bs C_1-\bs M_1\bs C_2^{-1}\bs M_2)^{-1}\bs K_1^i\bs r_1, \\
    \bs s_4^{(i,j)}&=-\bs s_1^T\bs K_1^j\bs C_1^{-1}\bs M_1(\bs C_2-\bs M_2\bs C_1^{-1}\bs M_1)^{-1}
    \bs K_2^j\bs r_2.\label{sij-4}
\end{align}
\esb
At this moment we need to establish the relations between $\bs S^{(i,j)}$ and $\bs S^{(j,i)}$,
which will be used to discuss Hermitian property of $\bs V$.

\begin{lemma}\label{L-4-1}
In the case for symmetric Sylvester equation, we have (symmetric) relations:
\begin{align}\label{ssij-sym}
        (\bs s_1^{(i,j)})^T=-\bs s_4^{(j,i)}, ~~ (\bs s_2^{(i,j)})^T=\bs s_2^{(j,i)},
        ~~ (\bs s_3^{(i,j)})^T=\bs s_3^{(j,i)}.
\end{align}
Note that when $\mathcal{M}=1$   the relation \eqref{ssij-sym} reduces to the scalar case
that we found in \cite{LQYZ-SAPM-2022}.
\end{lemma}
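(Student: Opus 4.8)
The plan is to package the three scalar-block identities into a single matrix symmetry for the full master function and then prove that by a conjugation argument. Writing $\tilde{\bs J}=\left(\begin{smallmatrix}\bs 0 & \bs I_{\mathcal M}\\ -\bs I_{\mathcal M}&\bs 0\end{smallmatrix}\right)$ on the $\mathcal N$-space, the three relations \eqref{ssij-sym} are precisely the four block components of the single identity $(\bs S^{(i,j)})^T=\tilde{\bs J}\,\bs S^{(j,i)}\,\tilde{\bs J}$ (the $(2,2)$ component being the $i\leftrightarrow j$ swapped version of the first). So I would first record this unpacking, which in particular explains the minus sign in the $\bs s_1\leftrightarrow\bs s_4$ relation as nothing but the off-diagonal sign pattern of $\tilde{\bs J}$, and then prove the packaged identity starting from $\bs S^{(i,j)}=\bs s^T\bs K^j(\bs C+\bs M)^{-1}\bs K^i\bs r$.

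The engine is a conjugation carrying the transposed data back to the original data. Using the explicit solution of Theorem \ref{T-4-1} I would write $\bs M=\sum_c \bs F^{(c)}\widehat{\bs G}\,\bs H^{(c)}$, with block-diagonal $\bs F^{(c)}=\mr{diag}(\bs F_1^{(c)},\bs F_2^{(c)})\in\mathcal T$, $\bs H^{(c)}=\mr{diag}(\bs H_1^{(c)},\bs H_2^{(c)})\in\mathcal H$, and the block-anti-diagonal coupling $\widehat{\bs G}=\left(\begin{smallmatrix}\bs 0&\bs G_1\\ -\bs G_1^T&\bs 0\end{smallmatrix}\right)$, which is \emph{antisymmetric} exactly because $\bs G_2=-\bs G_1^T$. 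In the symmetric case the data share generating functions, so there is a single fixed $\bs H^{(0)}=\mr{diag}(\bs H_1^{(0)},\bs H_2^{(0)})\in\mathcal H$ with $\bs H^{(c)}=\bs H^{(0)}\bs F^{(c)}$ for every $c$, equivalently $\bs s_j=\bs H_j^{(0)}\bs r_j$ columnwise. Since $\bs F^{(c)},\bs K,\bs C\in\mathcal T$ and $\bs H^{(0)}\in\mathcal H$, the symmetry of $\bs H\bs F$ in Proposition \ref{P-A-1} gives $\bs F^T\bs H^{(0)}=\bs H^{(0)}\bs F$, hence $\bs K^T=\bs H^{(0)}\bs K(\bs H^{(0)})^{-1}$ and $\bs C^T=\bs H^{(0)}\bs C(\bs H^{(0)})^{-1}$, while the antisymmetry of $\widehat{\bs G}$ together with $(\bs F^{(c)})^T\bs H^{(0)}=\bs H^{(0)}\bs F^{(c)}$ yields the crucial twisted relation $\bs M^T=-\bs H^{(0)}\bs M(\bs H^{(0)})^{-1}$.

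Next I would absorb the sign on $\bs M$ by the block flip $\bs D=\mr{diag}(\bs I_M,-\bs I_M)$: since $\bs D$ commutes with the block-diagonal $\bs K,\bs C$ and anticommutes with the block-off-diagonal $\bs M$, the single conjugator $\bs Q=\bs H^{(0)}\bs D$ satisfies $(\bs C+\bs M)^T=\bs Q(\bs C+\bs M)\bs Q^{-1}$ and $\bs K^T=\bs Q\bs K\bs Q^{-1}$, whence $((\bs C+\bs M)^T)^{-1}=\bs Q(\bs C+\bs M)^{-1}\bs Q^{-1}$. Finally I would settle the boundary factors: using $\bs s_j=\bs H_j^{(0)}\bs r_j$ and the symmetry of the $\bs H_j^{(0)}$ one checks columnwise that $\bs r^T\bs Q=\tilde{\bs J}\,\bs s^T$ and $\bs Q^{-1}\bs s=\bs r\,\tilde{\bs J}$. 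Transposing $\bs S^{(i,j)}$ and inserting $\bs Q\bs Q^{-1}$ between each factor then collapses everything to $(\bs S^{(i,j)})^T=(\bs r^T\bs Q)\,\bs K^i(\bs C+\bs M)^{-1}\bs K^j\,(\bs Q^{-1}\bs s)=\tilde{\bs J}\,\bs S^{(j,i)}\,\tilde{\bs J}$, which is the claim.

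The step I expect to be the real obstacle is establishing $\bs M^T=-\bs H^{(0)}\bs M(\bs H^{(0)})^{-1}$, that is, that $\bs M$ is conjugate to $-\bs M^T$ rather than to $+\bs M^T$. This is where the argument lives: it forces one to use the antisymmetry $\bs G_2=-\bs G_1^T$ of Theorem \ref{T-4-1} (not merely the symmetry of the $\bs H$'s) and the shared-generator structure $\bs H^{(c)}=\bs H^{(0)}\bs F^{(c)}$, and it is the source both of the extra minus sign later neutralized by $\bs D$ and of the distinguished sign in $(\bs s_1^{(i,j)})^T=-\bs s_4^{(j,i)}$. A secondary point needing care is that the boundary identities $\bs r^T\bs Q=\tilde{\bs J}\bs s^T$ and $\bs Q^{-1}\bs s=\bs r\tilde{\bs J}$ hold only because $\bs r$ and $\bs s$ are generated compatibly in the symmetric case, so I would verify them columnwise before assembling the final identity. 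A fully equivalent but more laborious route is to insert the block formulae \eqref{sij} directly, reduce each of the three identities to an $\bs E_j$-sandwiched scalar, flip it by scalar-transpose invariance, and push the transpose through the $\mathcal T/\mathcal H$ algebra; the conjugation argument above is the streamlined form of that computation.
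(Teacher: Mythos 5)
Your proof is correct, but it takes a genuinely different route from the paper's. The paper argues entrywise: it singles out the relation $(\bs s_2^{(i,j)})^T=\bs s_2^{(j,i)}$, writes the $(\alpha,\beta)$ entry of $\bs s_2^{(i,j)}$ as a scalar sandwiched between $\bs E_2$'s, substitutes the explicit $\bs F\bs G\bs H$ data of Theorem \ref{T-4-1}, pushes the transpose through the Toeplitz--Hankel algebra of Proposition \ref{P-A-1} and Remark \ref{R-A-1} to reach $(\bs s_2^{(i,j)})_{\alpha,\beta}=\bs E_2^T(\bs K_2^j)^T\bs A^{-1}_{(\alpha,\beta)}\bs K_2^i\bs E_2$ with $\bs A_{(\alpha,\beta)}=\bs A^T_{(\beta,\alpha)}$, and then uses scalar-transpose invariance, declaring the other two relations ``similar''. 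This is exactly the ``fully equivalent but more laborious route'' you describe in your last paragraph. Your packaged identity $(\bs S^{(i,j)})^T=\tilde{\bs J}\,\bs S^{(j,i)}\,\tilde{\bs J}$ with the single conjugator $\bs Q=\bs H^{(0)}\bs D$ proves all three relations simultaneously and explains structurally where the minus sign in $(\bs s_1^{(i,j)})^T=-\bs s_4^{(j,i)}$ comes from (the antisymmetry $\bs G_2=-\bs G_1^T$, neutralized by the flip $\bs D$), something the paper's computation leaves implicit. One point you should make explicit: your shared-generator hypothesis $\bs H^{(c)}=\bs H^{(0)}\bs F^{(c)}$, equivalently $\bs s_j=\bs H_j^{(0)}\bs r_j$, is \emph{not} automatic from the arbitrary data $a^{(i)}$, $\overset{\circ}{\rho}{}^{(i)}$, $\overset{\circ}{\sigma}{}^{(i)}$ of Theorem \ref{T-4-1}; it amounts to requiring the ratios $\sigma^{(c)}/\rho^{(c)}$ to be independent of the component index $c$. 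However, this hypothesis is precisely equivalent to the cross-index identity $\bs H_j^{(\alpha)}\bs F_j^{(\beta)}=\bs H_j^{(\beta)}\bs F_j^{(\alpha)}$ which the paper's own proof invokes (attributing it to \eqref{HF}, whose literal statement only gives $\bs H[\sigma^{(\alpha)}]\bs F[\rho^{(\beta)}]=\bs H[\rho^{(\beta)}]\bs F[\sigma^{(\alpha)}]$), so your argument rests on exactly the same structural fact as the paper's and neither is more general. In summary: the same algebraic engine ($\mathcal{T}/\mathcal{H}$ symmetry plus the shared-generator property) powers both proofs, but your global conjugation unifies the three relations and is reusable, whereas the paper's entrywise computation is more elementary and avoids having to identify a conjugator.
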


\begin{proof}
As an example we prove $(\bs s_2^{(i,j)})^T=\bs s_2^{(j,i)}$. The other two relations in \eqref{ssij-sym}
can be proved similarly.
$\bs s_2^{(i,j)}$ is a $\mathcal{M}\times \mathcal{M}$  matrix defined by \eqref{s2}.
Its $(\alpha,\beta)$-th element, denoted as $(\bs s_2^{(i,j)})_{\alpha,\beta}$, is
\begin{equation}\label{s2-ab-1}
(\bs s_2^{(i,j)})_{\alpha,\beta}=(\bs s_2^{(\alpha)})^T\bs K_2^j
    (\bs C_2-\bs M_2\bs C_1^{-1}\bs M_1)^{-1}\bs K_2^i\bs r_2^{(\beta)},
\end{equation}
where $\bs s_2^{(\alpha)}$ and $\bs r_2^{(\beta)}$ are column vectors of $\bs s_2$ and $\bs r_2$
(see \eqref{4.3b} and \eqref{4.5b}),
and $\bs M_1, \bs M_2$ are given in \eqref{4.5a}.
Substituting \eqref{4.3b} and \eqref{4.5a} into \eqref{s2-ab-1} yields
\begin{equation*}
(\bs s_2^{(i,j)})_{\alpha,\beta}=
\bs E_2^T\bs H_2^{(\alpha)}\bs K_2^j
\left(\bs C_2+\sum_{s,l=1}^{\mathcal{M}}\bs F_2^{(s)}\bs G_1^T\bs H_1^{(s)}\bs C_1^{-1}
    \bs F_1^{(l)}\bs G_1\bs H_2^{(l)}\right)^{-1}\bs K_2^i\bs F_2^{(\beta)}\bs E_2.
\end{equation*}
Noticing that $\bs K_2, \bs C_2, \bs F^{(i)}_2 \in \mathcal{T}_{m_1m_2\cdots m_q}$
and $\bs H^{(i)}_2 \in \mathcal{H}_{m_1m_2\cdots m_q}$, in light of Remark \ref{R-A-1} (in Appendix \ref{APP-A}),
we have
\[\bs H^{(i)}_2\bs K_2=\bs K_2^T\bs H^{(i)}_2, ~~
\bs K_2 \bs F^{(i)}_2=\bs F^{(i)}_2\bs K_2. \]
It follows that
\begin{equation*}
(\bs s_2^{(i,j)})_{\alpha,\beta}=
\bs E_2^T(\bs K_2^j)^T \bs A^{-1}_{(\alpha,\beta)}\bs K_2^i \bs E_2,
\end{equation*}
where
\begin{equation*}
\bs A_{(\alpha,\beta)}=(\bs H_2^{(\alpha)}\bs C_2^{-1}\bs F_2^{(\beta)})^{-1}
+\sum_{s,l=1}^{\mathcal{M}}(\bs F_2^{(\beta)})^{-1}\bs F_2^{(s)}\bs G_1^T\bs H_1^{(s)}\bs C_1^{-1}
    \bs F_1^{(l)}\bs G_1\bs H_2^{(l)}(\bs H_2^{(\alpha)})^{-1}.
\end{equation*}
Again, since  $\bs K_1, \bs C_1, \bs F^{(i)}_1 \in \mathcal{T}_{n_1n_2\cdots n_p}$,
$\bs H^{(i)}_1 \in \mathcal{H}_{n_1n_2\cdots n_p}$,
$\bs K_2, \bs C_2, \bs F^{(i)}_2 \in \mathcal{T}_{m_1m_2\cdots m_q}$
and $\bs H^{(i)}_2 \in \mathcal{H}_{m_1m_2\cdots m_q}$,
according to Remark \ref{R-A-1},
both $\bs H_2^{(\alpha)}\bs C_2^{-1}\bs F_2^{(\beta)}$ and
$\bs H_1^{(s)}\bs C_1^{-1} \bs F_1^{(l)}$ are symmetric mareices.
In addition, from \eqref{HF} we have $\bs H_2^{(\alpha)}\bs F_2^{\beta)}=\bs H_2^{(\beta)}\bs F_2^{(\alpha)}$,
which yields
\[(\bs H_2^{(\beta)})^{-1}\bs H_2^{(s)}
=\bs F_2^{(s)}(\bs F_2^{(\beta)})^{-1}
=(\bs F_2^{(\beta)})^{-1}\bs F_2^{(s)}.\]
Thus, $\bs A$ is written as
\begin{equation*}%\label{A}
\bs A_{(\alpha,\beta)}=\left(\bs H_2^{(\alpha)}\bs C_2^{-1}\bs F_2^{(\beta)}\right)^{-1}
+\sum_{s,l=1}^{\mathcal{M}}(\bs H_2^{(\beta)})^{-1}\bs H_2^{(s)}\bs G_1^T
\left(\bs H_1^{(s)}\bs C_1^{-1}\bs F_1^{(l)}\right)\bs G_1\bs H_2^{(l)}(\bs H_2^{(\alpha)})^{-1}.
\end{equation*}
%which is a symmetric matrix.
In a same way, we have
\begin{equation*}
(\bs s_2^{(j,i)})_{\beta,\alpha}=
\bs E_2^T(\bs K_2^i)^T \bs A^{-1}_{(\beta,\alpha)}\bs K_2^j \bs E_2.
\end{equation*}
Next, using relation \eqref{HF} once again, we have
$\bs H_2^{(\alpha)}\bs F_2^{\beta)}=\bs H_2^{(\beta)}\bs F_2^{(\alpha)}$ and then
\[\bs H_2^{(\alpha)}\bs C_2^{-1}\bs F_2^{(\beta)}=
\bs H_2^{(\alpha)}\bs F_2^{(\beta)}\bs C_2^{-1}
=\bs H_2^{(\beta)}\bs F_2^{(\alpha)}\bs C_2^{-1}
=\bs H_2^{(\beta)}\bs C_2^{-1}\bs F_2^{(\alpha)},\]
where we have also made use of the fact that
$\bs C_2, \bs F^{(\alpha)}_2, \bs F^{(\beta)}_2$ belong to the Abelian group $\mathcal{T}_{m_1m_2\cdots m_q}$.
Similarly, we have $\bs H_1^{(s)}\bs C_1^{-1}\bs F_1^{(l)}=\bs H_1^{(l)}\bs C_2^{-1}\bs F_2^{(s)}$.
This means
\begin{equation*}
\bs A_{(\alpha,\beta)}=\bs A^T_{(\beta,\alpha)}.
\end{equation*}
Finally, note that $(\bs s_2^{(i,j)})_{\alpha,\beta}$ is a scalar function, we immediately arrive at
\[(\bs s_2^{(i,j)})_{\alpha,\beta}=((\bs s_2^{(i,j)})_{\alpha,\beta})^T
=(\bs s_2^{(j,i)})_{\beta,\alpha},\]
which gives rise to the relation $(\bs s_2^{(i,j)})^T=\bs s_2^{(j,i)}$.
The other two relations in \eqref{ssij-sym} can be proved in a similar way.
We skip the details.

\end{proof}

\subsection{Reduction to \eqref{SDYM}}\label{sec-4-2}

Consider the constraints
\begin{align}\label{conjugate_con}
    \bs K_2=-\bar{\bs K}_1^{-1}, ~~\bs C_2=\bar{\bs C}_1, ~~\bs a_2=-\bar{\bs a}_1,
\end{align}
under which it turns out that \eqref{DR-symm-2} allows solutions
\begin{equation}\label{rs2}
\bs r_2=\bar{\bs K}_1^{-1}\bar{\bs r}_1, ~~
\bs s_2=-(\bs K_1^\dagger)^{-1}\bar{\bs s}_1,
\end{equation}
provided that $\bs r_1$ and $\bs s_1$ satisfy \eqref{DR-symm-1}.
Then, due to the uniqueness of solutions to the Sylvester equations \eqref{symm-case},
by a similar manner to the asymmetric case (see Sec.\ref{sec-3-2}), we can find
\begin{equation}\label{MM}
\bs M_2=\bar{\bs M}_1.
\end{equation}
Thus, the two Sylvester equations in \eqref{symm-case} reduce to one equation
\begin{equation}\label{Syl-sym}
\bs K_1 \bs M_1 \bar {\bs K}_1 +\bs M_1 =- \bs r_1  \bs s^\dag_1.
\end{equation}
In addition, similar to \eqref{s} and \eqref{r} in the asymmetric case, we have
\begin{equation*}%\label{rs}
\bs r_{\bar{x}_n}=(-1)^{n+1}\bs r_{x_{-n}},~~
\bs s_{\bar{x}_n}=(-1)^{n+1}\bs s_{x_{-n}}.
\end{equation*}
This indicates that the coordinates constraint \eqref{x} is available as well to this case,
which leads to the coordinate formulation \eqref{coor} and further
leads to equation \eqref{SDYM-4} which is in the same form of the SDYM equation \eqref{SDYM}.

It has been proved in Theorem \ref{T-2-2} that $|\bs V|=1$ in the symmetric case.
Next, we look at Hermitian property of $\bs V$ under the constraint \eqref{conjugate_con}.
For convenience, we write $\bs V$ in the form
\begin{align}\label{V-4.10}
\bs V=
\begin{pmatrix}
  \bs v_1 & \bs v_2 \\
  \bs v_3 & \bs v_4
\end{pmatrix}.
\end{align}
It follows from the expressions $\bs V=\bs I_{2\mathcal{M}}-\bs S^{(-1,0)}$ and \eqref{sij},
and the relations \eqref{conjugate_con}, \eqref{rs2} and \eqref{MM}  that
\begin{align*}
    \bs v_1=\bs I_{\mathcal{M}}-\bs s_1^{(-1,0)}
    &=\bs I_{\mathcal{M}}
    +\bs s_2^T\bs C_2^{-1}\bs M_2(\bs C_1-\bs M_1\bs C_2^{-1}\bs M_2)^{-1}\bs K_1^{-1}\bs r_1 \\
    &=\bs I_{\mathcal{M}}-\bs s_1^\dagger(\bar{\bs K}_1)^{-1}\bar{\bs C}_1^{-1}\bar{\bs M}_1
    (\bs C_1-\bs M_1\bar{\bs C}_1^{-1}\bar{\bs M}_1)^{-1}\bs K_1^{-1}\bs r_1.
\end{align*}
Then we find
\begin{align*}
\bar{\bs v}_1 =\bs I_{\mathcal{M}}-\bs s_1^T(\bs K_1)^{-1}\bs C_1^{-1}
\bs M_1(\bs C_2-\bs M_2\bs C_1^{-1}
\bs M_1)^{-1}\bs r_2 =\bs I_{\mathcal{M}}+\bs s_4^{(0,-1)},
\end{align*}
where we have used \eqref{sij-4}.
Recalling relation \eqref{ssij-sym}, we have
\begin{equation*}
\bar{\bs v}_1=\bs I_{\mathcal{M}}-(\bs s_1^{(-1,0)})^T=\bs v_1^T.
\end{equation*}
In a similar way we can find that $\bar{\bs v}_4=\bs v_4^T$.
As for $\bs v_2$, we have
\begin{align*}
    \bs v_2&=-\bs s_2^T(\bs C_2-\bs M_2\bs C_1^{-1}\bs M_1)^{-1}\bs K_2^{-1}\bs r_2
    =-\bs s_1^{\dag}\bar{\bs K}_1^{-1}
    (\bar{\bs C}_1-\bar{\bs M}_1\bs C_1^{-1}\bs M_1)^{-1}\bar{\bs r}_1
\end{align*}
and
\begin{align*}
    \bar{\bs v}_2 &=-\bs s_1^T \bs K_1^{-1}(\bs C_1-\bs M_1
    \bs C_2^{-1}\bs M_2)^{-1}\bs r_1
     =-\bs s_3^{(0,-1)}=-(\bs s_3^{(-1,0)})^T=\bs v_3^T.
\end{align*}
This means $\bs V$ is a Hermitian matrix.

In conclusion, we have the following.

\begin{theorem}\label{T-4-2}
In the symmetric Sylvester equation case, we have $|\bs V|=1$.
In addition, with the constraint \eqref{conjugate_con} and coordinates $\{y_n\}$ defined in \eqref{coor},
the matrix function $\bs V$ in the form \eqref{V-4.10} with
\begin{subequations}\label{V-vv}
\begin{align}
\bs v_1
    &=\bs I_{\mathcal{M}}-\bs s_1^\dagger(\bar{\bs K}_1)^{-1}\bar{\bs C}_1^{-1}\bar{\bs M}_1
    (\bs C_1-\bs M_1\bar{\bs C}_1^{-1}\bar{\bs M}_1)^{-1}\bs K_1^{-1}\bs r_1,\label{v1-sym}\\
\bs v_2
    &=-\bs s_1^{\dag}\bar{\bs K}_1^{-1}
    (\bar{\bs C}_1-\bar{\bs M}_1\bs C_1^{-1}\bs M_1)^{-1}\bar{\bs r}_1,
    ~~~ \bs v_3= \bs v_2^\dag,\\
\bs v_4
    &=\bs I_{\mathcal{M}}-\bs s_1^T \bs C_1^{-1}\bs M_1
    (\bar{\bs C}_1-\bar{\bs M}_1\bs C_1^{-1}\bs M_1)^{-1}
    \bar{\bs r}_1,
\end{align}
\end{subequations}
satisfies equation \eqref{SDYM-4}
and $J=\bs V$ is a solution to the SDYM equation \eqref{SDYM} with
$y=y_n, z=\bar{y}_{n+1}$.
$\bs V$ is a Hermitian matrix.
\end{theorem}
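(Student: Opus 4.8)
The plan is to assemble three ingredients, each of which is either already available or reduces to a short verification. For the determinant claim $|\bs V|=1$, I would simply invoke Theorem \ref{T-2-2}: in the symmetric Sylvester case $\bs K=\bs L$, the general formula \eqref{det_V} gives $|\bs V|=|\bs L|/|\bs K|=1$, so nothing further is needed here.

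For the claim that $\bs V$ solves \eqref{SDYM-4}, I would proceed in two steps. First, Theorem \ref{T-2-1}(2) guarantees that under the symmetric setup the master functions $\{\bs S^{(i,j)}\}$ obey the relations \eqref{Sij_moves} and \eqref{diff_sum_pro}; Lemma \ref{L-2-1} then applies verbatim to yield \eqref{SDYM-3} for the $\bs U,\bs V$ defined in \eqref{UV}. Second, I would impose the constraint \eqref{conjugate_con} and check, following the template of the asymmetric case in Section \ref{sec-3-2}, that $\bs r_2,\bs s_2$ take the conjugate forms \eqref{rs2}, that uniqueness of the Sylvester solution forces $\bs M_2=\bar{\bs M}_1$ as in \eqref{MM}, and that the induced coordinate consistency is exactly \eqref{x}. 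Passing to the coordinates \eqref{coor} and setting $m=-(n+1)$ in \eqref{SDYM-3} then collapses it to \eqref{SDYM-4}, which has the form of \eqref{SDYM} with $y=y_n$ and $z=\bar y_{n+1}$; this identifies $J=\bs V$ as a solution.

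The substantive work is the Hermiticity claim, and this is where I would spend most effort. Writing $\bs V=\bs I_{2\mathcal{M}}-\bs S^{(-1,0)}$ in the block form \eqref{V-4.10} and using the explicit formulas \eqref{sij} for the blocks $\bs s_l^{(-1,0)}$, I would substitute the three constraint relations \eqref{conjugate_con}, \eqref{rs2}, \eqref{MM} to obtain the closed expressions \eqref{V-vv}. The key identities to establish are $\bar{\bs v}_1=\bs v_1^T$, $\bar{\bs v}_4=\bs v_4^T$ and $\bar{\bs v}_2=\bs v_3^T$. For the first, I would conjugate $\bs v_1$ directly, recognize the result as $\bs I_{\mathcal{M}}+\bs s_4^{(0,-1)}$, and then apply the symmetry relation $(\bs s_1^{(i,j)})^T=-\bs s_4^{(j,i)}$ of Lemma \ref{L-4-1} (with $i=-1,\,j=0$) to rewrite it as $(\bs I_{\mathcal{M}}-\bs s_1^{(-1,0)})^T=\bs v_1^T$; the $\bs v_4$ identity uses the same relation, while $\bar{\bs v}_2=\bs v_3^T$ uses $(\bs s_3^{(i,j)})^T=\bs s_3^{(j,i)}$.

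The main obstacle is the bookkeeping in this last step: complex conjugation interchanges the two diagonal blocks (via $\bs K_2=-\bar{\bs K}_1^{-1}$ and $\bs M_2=\bar{\bs M}_1$), whereas the symmetry relations of Lemma \ref{L-4-1} interchange the index pair $(i,j)\mapsto(j,i)$. I would take care that conjugating the $(-1,0)$ index pattern lands precisely on the $(0,-1)$ pattern required to invoke the lemma, since a mismatch there would break the identification of each $\bar{\bs v}_l$ with the appropriate $\bs v_{l'}^T$. Once that alignment is confirmed, $\bs V=\bs V^\dagger$ follows and, together with the two preceding points, the theorem is complete.
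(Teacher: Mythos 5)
Your proposal is correct and follows essentially the same route as the paper: citing Theorem \ref{T-2-2} for $|\bs V|=1$, repeating the asymmetric-case reduction template (constraints \eqref{conjugate_con}, solutions \eqref{rs2}, uniqueness giving \eqref{MM}, coordinates \eqref{coor}, then $m=-(n+1)$ in \eqref{SDYM-3}) to obtain \eqref{SDYM-4}, and proving Hermiticity by conjugating the blocks from \eqref{sij} and invoking the symmetry relations \eqref{ssij-sym} of Lemma \ref{L-4-1}, exactly as the paper does (including the key identification $\bar{\bs v}_1=\bs I_{\mathcal{M}}+\bs s_4^{(0,-1)}=\bs v_1^T$). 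The index alignment you flag as the main risk does work out precisely as you anticipate, so no gap remains.
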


\subsection{Positive definiteness}\label{sec-4-3}

Theorem \ref{T-4-2} shows that $\bs V$ is a Hermitian matrix and $|\bs V|=1$.
In the following we discuss  the positive definiteness of $\bs V$ for the
SU(2) SDYM equation.
We consider its one-soliton and two-soliton solutions.
These solutions are the same as those obtained in \cite{LQYZ-SAPM-2022}.
Note that  for the SU(2) SDYM equation which corresponds to $\mathcal{M}=1$,
in the following discussion we will drop off the superscript ${}^{(1)}$
from $\rho_j^{(1)}, \sigma_j^{(1)}, \overset{\circ}{\rho}_j\!{}^{(1)}, \overset{\circ}{\sigma}_j\!{}^{(1)}$
and $a^{(1)}$ without making confusion.

\subsubsection{One-soliton solution of the SU(2) SDYM equation}\label{sec-4-3-1}

For one-soliton solution of the SU(2) SDYM equation, we have
$\mathcal{M}=1$ and $M=1$.
In this case, $\bs V$ is a $2\times 2$ matrix defined as \eqref{V-4.10}
where each $\bs v_j$ is a scalar function.
Since $|\bs V|=1$, the positive definiteness of $\bs V$ is determined by the sign of $\bs v_1$,
which is given by \eqref{v1-sym}.

For one-soliton solution, we have
\begin{align}\label{krsm}
    \bs K_1=k_1, ~~\bs r_1=\rho_1, ~~\bs s_1=\sigma_1, ~~\bs a_1=a, ~~\bs C_1=c_1, ~~
    \bs M_1=m_1=-\frac{\rho_1\bar\sigma_1}{|k_1|^2+1},
\end{align}
where (for $j=1$)
%\bsb
\begin{align}\label{rho-sig}
    &\rho_j=\overset{\circ}{\rho}_j\,\exp[a\mathfrak L_j(\bs x)],~~~
    \sigma_j=\overset{\circ}{\sigma}_j\,\exp[{\bar a}\mathfrak L_j(\bs x)],
\end{align}
%\esb
where $\mathfrak L_j(\bs x)$ is defined as in \eqref{L(x)}.
It then follows from \eqref{v1-sym} that
\begin{align}\label{v1-1SS}
    \bs v_{1}=\frac{|c_1|^2+|m_1|^2|k_1|^{-2}}{|c_1|^2-|m_1|^2}
    =-\frac{1}{|k_1|^2}+\big(1+\frac{1}{|k_1|^2}\big)\frac{|c_1|^2}{|c_1|^2-|m_1|^2},
\end{align}
where (we take $a=1$ for convenience)
\begin{align*}
    |m_1|
    =\frac{|\overset{\circ}{\rho}_1\overset{\circ}{\sigma}_1|}{|k_1|^2+1}
    \exp\Big(2\mr{Re}[\mathfrak L_1(\bs x)]\Big),
\end{align*}
and $\mathfrak L_1(\bs x)$ is defined as in \eqref{L(x)}.
\begin{figure}[ht!]
  \begin{center}
\begin{tikzpicture}[scale=.65]
        %\draw[step=.5cm, gray, thin, dotted] (-3,-3) grid (3,3);
        \draw[->] (-4,0) -- (4.5,0);
        \draw (3.9,-0.5) node {$|m_1|$};
        \draw[->] (-3.0,-3.5) -- (-3.0,3.5);
        \draw (-3.5,2.8) node {$\bs v_1$};
        \draw[thick,dashed] (-1.4,-3.4) -- (-1.4,3.4);
        \draw (-0.8,0.4) node {$|c_1|$};
        \draw[thick,dashed] (-3.4,-1.4) -- (3.4,-1.4);
        \draw (-4.2,-1.4) node {$-\frac{1}{|k_1|^2}$};
        \draw (-3.3,1.0) node {$1$};
        \draw[thick] (-3,1) .. controls (-1.8,1.2) and (-1.7,2) ..(-1.5,3.4);
        \draw[thick] (-1.3,-3.4) .. controls (-1,-2) and (1,-1.4) .. (3.4,-1.5);
\end{tikzpicture}
\end{center}
\caption{Shape of $\bs v_1$ defined in \eqref{v1-1SS}}
\label{Fig-1}
\end{figure}
Fig.\ref{Fig-1} shows that how $\bs v_1$ varies with  $|m_1|$:
$\bs v_1$ is positive when $|m_1|$ is less than $|c_1|$,
while $\bs v_1$ is negative when $|m_1|$ is greater than $|c_1|$.
Equation $|m_1|=|c_1|$ defines a hyper plane
\begin{align*}
    \mr{Re}[\mathfrak L_1(\bs x)]
    =\frac{1}{2}\ln\Big(\frac{|c_1|(|k_1|^2+1)}{|\overset{\circ}{\rho}_1\overset{\circ}{\sigma}_1|}\Big),
\end{align*}
which divides $\mathbb{R}^4$ into two parts:
one is the domain $D_1$ where
$\mr{Re}[\mathfrak L_1(\bs x)]
    <\frac{1}{2}\ln\Big(\frac{|c_1|(|k_1|^2+1)}{|\overset{\circ}{\rho}_1\overset{\circ}{\sigma}_1|}\Big)$
and $\bs v_1$ is positive, the other is the domain $D_2$ where
$\mr{Re}[\mathfrak L_1(\bs x)]
    >\frac{1}{2}\ln\Big(\frac{|c_1|(|k_1|^2+1)}{|\overset{\circ}{\rho}_1\overset{\circ}{\sigma}_1|}\Big)$
and $\bs v_1$ is negative.
Thus, $\bs V$ is positive-definite when $\bs x \in D_1$ and   negative-definite when $\bs x \in D_2$.

\begin{remark}\label{R-4-1}
For the SU(2) SDYM equation, when $\bs x \in D_2$,  $\bs V$ and $J=-\bs V$
are respectively  negative-definite and positive-definite.
Then one can take a piecewise definition for $J$:
\[J=\left\{\begin{array}{ll} \bs V, & \bs x \in D_1,\\
                                         -\bs V, & \bs x \in D_2.
       \end{array}\right.
\]
\end{remark}
\begin{remark}\label{R-4-1}
The one-soliton solution in this section is different from the one discussed in Sec.\ref{sec-3-3-1},
cf. Proposition \ref{P-5-1}.
\end{remark}

\subsubsection{Two-soliton  solution of the  SU(2) SDYM}\label{sec-4-3-2}

For the SU(2) SDYM equation, for arbitrary $M$,
$\bs v_1$ is always a scalar function and can be expressed as
\begin{align}\label{v1}
\bs v_{1}=\frac{g}{f}
    =\frac{|\bs C_1\bar {\bs C}_1+\bs K_1^{-1} \bs M_1 \bar {\bs K}_1^{-1}\bar {\bs C}_1^{-1}
    \bar {\bs M}_1\bar {\bs C}_1|}
    {|\bs C_1\bar {\bs C}_1-\bs M_1 \bar {\bs C}_1^{-1}\bar {\bs M}_1\bar {\bs C}_1|}.
\end{align}
In fact,  for arbitrary $M$, from equation \eqref{v1-sym} we have
\begin{align*}
\bs v_1
    &=1-\bs s_1^\dagger \bar{\bs K}_1^{-1}\bar{\bs C}_1^{-1}\bar{\bs M}_1
    (\bs C_1-\bs M_1\bar{\bs C}_1^{-1}\bar{\bs M}_1)^{-1}\bs K_1^{-1}\bs r_1,\\
    &=|\bs I_M-    (\bs C_1-\bs M_1\bar{\bs C}_1^{-1}\bar{\bs M}_1)^{-1}\bs K_1^{-1}\bs r_1
    \bs s_1^\dagger \bar{\bs K}_1^{-1}\bar{\bs C}_1^{-1}\bar{\bs M}_1|,
\end{align*}
where we have made use of the Weinstein--Aronszajn formula (see \eqref{WA}).
Next, replacing $\bs r_1 \bs s_1^\dagger$ using \eqref{Syl-sym} yields
\begin{align*}
\bs v_1
    &=|\bs I_M+    (\bs C_1-\bs M_1\bar{\bs C}_1^{-1}\bar{\bs M}_1)^{-1}\bs K_1^{-1}
    (\bs K_1 \bs M_1 \bar{\bs K}_1+\bs M_1)\bar{\bs K}_1^{-1}\bar{\bs C}_1^{-1}\bar{\bs M}_1| \\
    &=|(\bs C_1-\bs M_1\bar{\bs C}_1^{-1}\bar{\bs M}_1)^{-1}
     (\bs C_1+\bs K_1^{-1} \bs M_1 \bar {\bs K}_1^{-1}\bar {\bs C}_1^{-1}\bar {\bs M}_1)| \\
    &= \frac{|\bs C_1+\bs K_1^{-1} \bs M_1 \bar {\bs K}_1^{-1}\bar {\bs C}_1^{-1}\bar {\bs M}_1|}
    {|\bs C_1-\bs M_1\bar{\bs C}_1^{-1}\bar{\bs M}_1|}.
\end{align*}
After multiplying $|\bar {\bs C}_1|$ on both numerator and denominator, we get \eqref{v1}.

For the case of two-soliton solution, $M=2$ and  we have
\begin{align}
  &  \bs K_1=\begin{pmatrix}
        k_1 & 0\\
        0 & k_2
    \end{pmatrix},~~~
    \bs C_1=\begin{pmatrix}
        c_1 & 0 \\
        0 & c_2
    \end{pmatrix},~~~
    \bs r_1=\begin{pmatrix}
        \rho_{1}  \\
        \rho_{2}
    \end{pmatrix},~~~~
    \bs s_1=\begin{pmatrix}
        \sigma_{1}  \\
        \sigma_{2}
    \end{pmatrix},\\
 &   \bs M_1
 %=\begin{pmatrix}
%         m_{11} &  m_{12} \\
%         m_{21} &  m_{22}
%    \end{pmatrix}
    =\begin{pmatrix}
        -\rho_{1}\bar\sigma_{1}/(|k_1|^2+1) &- \rho_{1}\bar\sigma_{2}/(k_1\bar k_2+1) \\
        -\rho_{2}\bar\sigma_{1}/(\bar k_1k_2+1) & -\rho_{2}\bar\sigma_{2}/(|k_2|^2+1)
    \end{pmatrix},
\end{align}
where $\rho_j$ and $\sigma_j$ are defined in \eqref{rho-sig} for $j=1,2$,
and we already dropped off the superscript ${}^{(1)}$
from $\rho_j^{(1)}, \sigma_j^{(1)}$
and $a^{(1)}$ without making confusion.
In the following, for convenience we take $c_1=c_2=1$ and rewrite $\bs v_1$ in terms of
\begin{equation}\label{w-j2}
w_j = \rho_j\sigma_j
=\overset{\circ}{w}_j \exp\bigl[(a+\bar a)\mathfrak L_j(\bs x)\bigr], ~~
\overset{\circ}{w}_j= \overset{\circ}{\rho}_j\overset{\circ}{\sigma}_j,~~ j=1,2.
\end{equation}
It follows that
\begin{align}\label{sym-2ss-v1}
   \bs v_{1}=\frac{1+\frac{\bigl|\frac{w_1}{k_1}\bigr|^2}{(1+|k_1|^2)^2}
   +\frac{\bigl|\frac{w_2}{k_2}\bigr|^2}{(|k_2|^2+1)^2}
   +\frac{|k_1-k_2|^4\bigl|\frac{w_1w_2}{k_1k_2}\bigr|^2}
   {(|k_1|^2+1)^2(|k_2|^2+1|)^2 |1+k_1\bar k_2|^4}
   +2\mr{Re}\bigl[\frac{ w_1\bar w_2}{k_1\bar k_2 (1+ k_1\bar k_2)^2}\bigr]}
   {1-\frac{|w_1|^2}{(|k_1|^2+1)^2}-\frac{|w_2|^2}{(|k_2|^2+1)^2}
 +\frac{|k_1-k_2|^4|w_1w_2|^2}{(|k_1|^2+1)^2(|k_2|^2+1|)^2 |1+k_1\bar k_2|^4}
 -2\mathrm{Re}\bigl[\frac{w_1 \bar w_2}{(1+k_1\bar k_2)^2}\bigr]}.
\end{align}
Note that $w_j$ defined in \eqref{w-j2} and \eqref{w-x}
essentially have the same expression in light of \eqref{3.19}
and the real coefficients $(a^{(1)}-a^{(2)})$ and $(a+\bar a)$
can always be normalized to be $1$.
Again, we may redefine $w_j$ as
\begin{equation}\label{wj-til-2}
w_j=(1+|k_j|^2) \widetilde{w}_j, ~~ j=1,2,
\end{equation}
and rewrite the above $\bs v_1$ in terms of $\widetilde{w}_j$  as
\begin{equation}\label{v1-wt}
\bs v_{1}=\frac{1+\bigl|\frac{\widetilde{w}_1}{k_1}\bigr|^2+\bigl|\frac{\widetilde{w}_2}{k_2}\bigr|^2
+ \frac{|k_1-k_2|^4}{|1+k_1\bar k_2|^4}\bigl|\frac{\widetilde{w}_1\widetilde{w}_2}{k_1k_2}\bigr|^2
+2 (1+|k_1|^2)(1+|k_2|^2 )
\mathrm{Re}\Bigl[\frac{ \widetilde{w}_1\bar{\widetilde{w}}_2}{k_1\bar k_2(1+k_1\bar k_2)^2}\Bigr] }
{1-| \widetilde{w}_1|^2-| \widetilde{w}_2|^2
+ \frac{|k_1-k_2|^4}{|1+k_1\bar k_2|^4}|\widetilde{w}_1\widetilde{w}_2|^2
-2 (1+|k_1|^2)(1+|k_2|^2)
\mathrm{Re}\Bigl[\frac{\widetilde{w}_1 \bar{\widetilde{w}}_2}{(1+k_1\bar k_2)^2}\Bigr] }.
\end{equation}
This is nothing but $v_{11}$ given in \eqref{v11-wt},
and thus, the definite-positive domain of  $\bs V$
can be determined in the same way as in Sec.\ref{sec-3-3-2}.

To compare with the two-soliton solution obtained from the asymmetric Sylveser formulation (see Sec.\ref{sec-3-3-2}),
we also need to calculate $\bs v_2$ and $\bs v_4$ from \eqref{V-vv} (with $\bs C_1=\bs I_2$)
and write out them in terms of $\widetilde w_j$ defined in \eqref{wj-til-2}.
It turns out  that 
\begin{equation}\label{v-v}
\left(\begin{array}{cc}
      \bs v_1 & \bs v_2\\
      \bar{\bs v}_2 & \bs v_4
      \end{array}\right)
=\left(\begin{array}{cc}
      1 & 0\\
      0 & \bar k_1 \bar k_2
      \end{array}\right)
      \left(\begin{array}{cc}
      v_{11} & v_{12}\\
      \bar{v}_{12} & v_{22}
      \end{array}\right)
      \left(\begin{array}{cc}
      1 & 0\\
      0 & k_1 k_2
      \end{array}\right),
\end{equation}
where $v_{11}$, $v_{12}$ and $v_{22}$ are given as in \eqref{v11-wt}, \eqref{v12-wt} and \eqref{v22-wt}.
Note that for any solution $J$ to the SDYM equation \eqref{SDYM},
$J'=PJQ$ is still a solution provided $P$ and $Q$ are invertible constant matrices.
In this sense, we say the two-solition solutions obtained in the asymmetric and symmetric
Sylvester formulations are same.

\section{Concluding remarks}\label{sec-5}

In this paper we have established two Cauchy matrix schemes for the SU($\mathcal{N}$) SDYM equation \eqref{SDYM}.
These schemes are based on two different Sylvester equations (asymmetric and symmetric cases),
both of which can generate the noncommutative relations \eqref{Sij_moves} and \eqref{diff_sum_pro}
that are used to obtain the unreduced equation \eqref{SDYM-3}.
This equation was then reduced to the  SU($\mathcal{N}$) SDYM equation
(in Yang's formulation \cite{Yang-1977,BFNY-1978}) in Sec.\ref{sec-3} and Sec.\ref{sec-4},
for the asymmetric and symmetric Sylvester equation cases, respectively.
As solutions, in both cases $\bs V$ is Hermitian and $|\bs V|$ is a constant.
The property of positive-definiteness of $\bs V$
was investigated.
For the SU(2) SDYM equation and the obtained one-soliton solutions and two-soliton solutions,
we have worked out a way to obtain a domain in $\mathbb{R}^4$ where $\bs V$ is  positive-definite.
Note that our solutions are different from those of the SU($\mathcal{N}$) SDYM equation
obtained using Darboux transformation method \cite{NGO-2000}.
In the two Cauchy matrix schemes, the dispersion relation of solutions can be characterized
by $\mathfrak L_j(\bs x)$ defined in \eqref{Lcx},
which contains an arbitrary ``$n$''. Such an arbitrariness allows us to have dimension reductions
with respect of $\bs x$, for example, to get solutions to a 3-dimensional
relativistic-invariant system studied by Manakov and Zakharov \cite{MZ-1981}.
We will present the example with more details in Appendix \ref{APP-B}.

To derive  equation \eqref{SDYM-3},
we also provided an alternative approach in Appendix \ref{APP-A},
where we made use of matrix equation $\bs\Phi_{x_n}=\bs \Phi\bs A\bs P_n$
and the compatibility with $\bs\Phi_{x_{n+1}}=\bs \Phi\bs A\bs P_{n+1}$.
Since $n$ is arbitrary, one can also consider the compatibility between
\begin{align*}
        \bs\Phi_{x_n}=\bs \Phi\bs A\bs P_n, ~~ \bs\Phi_{x_{n+l}}=\bs \Phi\bs A\bs P_{n+l}, ~~~l=2,\cdots,
\end{align*}
and derive new equations. For the case $l$ greater than one, more
$\{\bs S^{(i,j)}\}$ than $\bs S^{(0,0)}$ and $\bs S^{(-1,0)}$ will be needed.
In addition, it is worthy to mention that the two Cauchy matrix schemes that are involved
are respectively based on the Sylvester equation for the (matrix)
KP system (cf.\cite{FZ-KP-2022})
and  the Sylvester equation for the (matrix) AKNS system (cf.\cite{Zhao-2018}).
Since $\bs V$ is defined via $\bs S^{(-1,0)}$, this implies possible links
between the SDYM equation and the equations related to $\bs S^{(-1,0)}$ element
in the matrix KP hierarchy and matrix AKNS hierarchy.
In addition, in our paper, we reduce equation \eqref{SDYM-3} to the SU($\mathcal{N}$) SDYM equation in
$\mathbb{R}^4$.
It is also possible to consider other reductions in different metric spaces.
All these will be the topics for the future investigation.

\vskip 20pt
\subsection*{Acknowledgements}

The authors are grateful to the referee for the invaluable comments.
This work is supported by the National Natural Science Foundation of China
(nos. 12271334, 12126352, 11971251, 11875040, 11631007) and Science and Technology
Innovation Plan of Shanghai (20590742900).

\begin{appendices}

\section{An alternative way to equation \eqref{diff_recur}}\label{APP-A}

We have shown in Theorem \ref{T-2-1} (in Sec.\ref{sec-2-3})
that the two cases of the Sylvester equations
can give rise to the relations \eqref{Sij_moves} and \eqref{diff_sum_pro} for $\{\bs S^{(i,j)}\}$,
from which and along the lines of the treatment in \cite{LQYZ-SAPM-2022},
one can obtain equation \eqref{diff_recur} and then \eqref{SDYM-3}.

In the following we  present an alternative way to obtain equation \eqref{diff_recur}.
This will be valid for the two cases of the Sylvester equations discussed in Theorem \ref{T-2-1}.
Note again that in these two cases the relations \eqref{Sij_moves} and \eqref{diff_sum_pro} hold.
Introduce an auxiliary vector functions $\bs\phi^{(i)}\in\mb C_{N\times\mathcal{N}}[\mf x]$  defined by
\begin{align}\label{phi-i}
    \bs\phi^{(i)}=(\bs C+\bs M)^{-1}\bs K^i\bs r, ~~ i\in \mathbb{Z}.
\end{align}
It can be proved that
\begin{align*}
    \bs\phi^{(i)}_{x_n}+(\bs C+\bs M)^{-1}\bs M_{x_n}\bs\phi^{(i)}=\bs\phi^{(i+n)}\bs a.
\end{align*}
Substituting  \eqref{M_evo} into it and using the definition \eqref{Sij},
we can obtain
%\bsb\label{phi_evo}
\begin{align*}
    \bs\phi^{(i)}_{x_n}&=\bs\phi^{(i+n)}\bs a-\sum_{l=0}^{n-1}\bs\phi^{(n-1-l)}\bs a\bs S^{(i,l)},
    ~~~ (n\in\mb Z^+), \\
    \bs\phi^{(i)}_{x_0}&=\bs\phi^{(i)}\bs a,      \\
    \bs\phi^{(i)}_{x_n}&=\bs\phi^{(i+n)}\bs a+\sum_{l=-1}^{n}\bs\phi^{(n-1-l)}\bs a\bs S^{(i,l)},
    ~~~ (n\in\mb Z^-).
\end{align*}
%\esb
These relations can be expressed in a matrix form
\begin{align}\label{Phi-n}
    \bs\Phi_{x_n}=\bs\Phi\bs A \bs P_n,~~ n\in \mathbb{Z},
\end{align}
where $\bs \Phi$ is a $N\times \infty$ matrix composed by column vectors \eqref{phi-i},
\[\bs\Phi=(\cdots,\bs\phi^{(-1)},\bs\phi^{(0)},\bs\phi^{(1)},\cdots),\]
$\bs A=\mr{diag}(\cdots,\bs a,\bs a,\bs a,\cdots)$,
and $\bs P_n$ is an $\infty\times\infty$ (block) matrix composed by
$\bs S^{(i,j)}$, $\bs I_{\mathcal{N}}$ and $\bs 0$ (see equation \eqref{Pn}).
The compatibility of equation \eqref{Phi-n} and equation $\bs\Phi_{x_{n+1}}=\bs\Phi\bs A \bs P_{n+1}$,
i.e. $(\bs\Phi_{x_n})_{x_{n+1}}=(\bs\Phi_{x_{n+1}})_{x_{n}}$,
gives rise to
\begin{align}\label{com_eq}
    \bs P_{n+1}\bs A\bs P_n-\bs P_n\bs A\bs P_{n+1}+(\bs P_n)_{x_{n+1}}-(\bs P_{n+1})_{x_n}=\bs 0.
\end{align}
Then we have the following.

\begin{proposition}\label{P-2-3}
The $(-1,-n)$-th element of the (block matrix) equation \eqref{com_eq} gives rise to equation \eqref{diff_recur}.
\end{proposition}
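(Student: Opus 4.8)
The plan is to read \eqref{com_eq} as a zero-curvature (compatibility) identity for the infinite block matrices $\bs P_n$ and to extract a single block of it. First I would write out the explicit block layout of $\bs P_n$ recorded in \eqref{Pn}: only finitely many of its $\mathcal{N}\times\mathcal{N}$ blocks are nonzero, they are either $\bs I_{\mathcal{N}}$, $\bs 0$, or a single $\bs S^{(i,j)}$, and among them sit the blocks carrying $\bs V=\bs I_{\mathcal{N}}-\bs S^{(-1,0)}$ and $\bs U=\bs S^{(0,0)}$ (see \eqref{UV}). Because each $\bs\phi^{(i)}$ couples through \eqref{Phi-n} only to finitely many neighbours, at the fixed position $(-1,-n)$ the two products $\bs P_{n+1}\bs A\bs P_n$ and $\bs P_n\bs A\bs P_{n+1}$ collapse to finite sums of bilinear terms $\bs S^{(\cdot,\cdot)}\bs a\,\bs S^{(\cdot,\cdot)}$, so the whole identity becomes a genuine finite matrix equation in the $\{\bs S^{(i,j)}\}$.

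Next I would evaluate the four blocks separately. The derivative terms $(\bs P_n)_{x_{n+1}}$ and $-(\bs P_{n+1})_{x_n}$ contribute, on the one hand, the term $\bs V_{x_{n+1}}$, and on the other hand a collection of $x_n$-derivatives $\bs S^{(i,j)}_{x_n}$; into each of the latter I would substitute the evolution relations \eqref{Sij_moves}, turning them into shifted $\bs S$'s plus bilinear $\bs S\bs a\,\bs S$ contributions. The two product blocks supply the remaining bilinear terms. Thus, at position $(-1,-n)$, the identity \eqref{com_eq} takes the shape $\bs V_{x_{n+1}}+(\text{linear in }\bs S)+(\text{bilinear in }\bs S)=\bs 0$.

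The expected endpoint is to recognize that the accumulated linear and bilinear pile equals exactly $\bs U_{x_n}\bs V$: expanding $\bs U_{x_n}=\bs S^{(0,0)}_{x_n}$ by \eqref{Sij_moves1} and multiplying by $\bs V=\bs I_{\mathcal{N}}-\bs S^{(-1,0)}$ produces the same linear part, while the recursive relation \eqref{diff_sum_pro}, $\bs S^{(i+1,j)}-\bs S^{(i,j+1)}=\bs S^{(0,j)}\bs S^{(i,0)}$, is applied repeatedly to reconcile the bilinear parts and telescope away the surplus. What remains is $\bs V_{x_{n+1}}+\bs U_{x_n}\bs V=\bs 0$, and right-multiplication by $\bs V^{-1}$ gives precisely \eqref{diff_recur}. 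I would run the computation first for $n>0$ using \eqref{Sij_moves1}, and then observe that $n=0$ and $n<0$ go through verbatim once \eqref{Sij_moves1} is replaced by \eqref{Sij_moves2} and \eqref{Sij_moves3}: the support of the nonzero blocks of $\bs P_n$ shifts, but the shape of the argument is unchanged.

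The hard part will be the mutual bookkeeping of the two middle steps: pinning down exactly which finitely many blocks of $\bs P_n$ and of $\bs P_{n+1}$ actually meet at the $(-1,-n)$ position, and then checking that the bilinear sums generated by the products match, index range for index range, those generated by \eqref{Sij_moves}, so that \eqref{diff_sum_pro} telescopes them with no stray boundary term left over. Confirming that the summation limits in \eqref{Sij_moves} line up with the block ranges dictated by \eqref{Pn} is the real content; everything after that is substitution.
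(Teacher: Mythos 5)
Your proposal is correct and follows essentially the same route as the paper's proof: extract the $(-1,-n)$ block of \eqref{com_eq}, in which $\bs P_n$ carries $\bs V=\bs I_{\mathcal{N}}-\bs S^{(-1,0)}$ and $\bs P_{n+1}$ carries $-\bs S^{(-1,1)}$, evaluate the two finite product sums, and cancel them against the derivative blocks using \eqref{Sij_moves} and \eqref{diff_sum_pro}, leaving $\bs V_{x_{n+1}}+\bs U_{x_n}\bs V=\bs 0$, with $n=0$ and $n\in\mb Z^-$ handled analogously via \eqref{Sij_moves2} and \eqref{Sij_moves3}. The differences are cosmetic — the paper first rewrites $\bs S^{(-1,1)}=\bs U\bs V$ via \eqref{diff_sum_pro} and differentiates by the product rule, then applies \eqref{Sij_moves1} only to $\bs V_{x_n}$, whereas you substitute \eqref{Sij_moves} into $\bs S^{(-1,1)}_{x_n}$ directly and telescope with \eqref{diff_sum_pro} afterwards — apart from one harmless slip of wording: $\bs P_n$ has infinitely many nonzero blocks in total (it is each block \emph{column} that has finitely many), which is exactly what makes the products collapse to finite sums at a fixed position, as you correctly use.
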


\begin{proof}
Consider the case $n\in\mb Z^+$.  $\bs P_n$ is the following block matrix
\begin{equation}\label{Pn}
    \bs P_n=
    \left(\begin{array}{cc:c:cc}
         & \vdots & \vdots & \vdots &  \\
        \cdots & \bs0 & \bs0 & \bs0 & \cdots \\
        \hdashline
        \cdots & -\bs S^{(-1,n-1)} & -\bs S^{(0,n-1)} & -\bs S^{(1,n-1)} & \cdots \\
        \hdashline
        \cdots & -\bs S^{(-1,n-2)} & -\bs S^{(0,n-2)} & -\bs S^{(1,n-2)} & \cdots \\
        \vdots & \vdots & \vdots & \vdots & \vdots \\
        \cdots & \boxed{\bs I_{\mathcal{N}}-\bs S^{(-1,0)}} & -\bs S^{(0,0)} & -\bs S^{(1,0)} & \cdots \\
        \cdots & \bs 0 & \bs I_{\mathcal{N}} & \bs 0 & \cdots \\
        \cdots & \bs 0 & \bs 0 & \bs I_{\mathcal{N}} & \cdots \\
                  & \vdots & \vdots & \vdots &
    \end{array}\right),
\end{equation}
    where the center, i.e. the  $(0,0)$-th element, is $-\bs S^{(0,n-1)}$,
    and the $(-1,-n)$-th element (with a frame) is $\bs I_{\mathcal{N}}-\bs S^{(-1,0)}$.
    $\bs P_{n+1}$ reads
    \begin{align*}
    \bs P_{n+1}=
    \left(\begin{array}{cc:c:cc}
         & \vdots & \vdots & \vdots &  \\
        \cdots & \bs0 & \bs0 & \bs0 & \cdots \\
        \hdashline
        \cdots & -\bs S^{(-1,n)} & -\bs S^{(0,n)} & -\bs S^{(1,n)} & \cdots \\
        \hdashline
        \cdots & -\bs S^{(-1,n-1)} & -\bs S^{(0,n-1)} & -\bs S^{(1,n-1)} & \cdots \\
        \vdots & \vdots & \vdots & \vdots & \vdots \\
        \cdots & \boxed{-\bs S^{(-1,1)}} & -\bs S^{(0,1)} & -\bs S^{(1,1)} & \cdots \\
        \cdots & \bs I_{\mathcal{N}}-\bs S^{(-1,0)} & -\bs S^{(0,0)} & -\bs S^{(1,0)} & \cdots \\
        \cdots & \bs 0 & \bs I_{\mathcal{N}} & \bs 0 & \cdots\\
         & \vdots & \vdots & \vdots &
    \end{array}\right),
    \end{align*}
    where the  $(-1,-n)$-th element (with a frame) is $-\bs S^{(-1,1)}$.
Note that the relation \eqref{diff_sum_pro} with $(i,j)=(-1,0)$ gives rise to $\bs S^{(-1,1)}=\bs U\bs V$,
where $\bs U$ and $\bs V$ are defined in \eqref{UV}.
Thus, the $(-1,-n)$-th element of the relation \eqref{com_eq} can be denoted as
    \begin{align}\label{P-elem}
        \big[\bs P_{n+1}\bs A\bs P_n\big]_{(-1,-n)}-\big[\bs P_{n}\bs A\bs P_{n+1}\big]_{(-1,-n)}
        +\bs U\bs V_{x_n}=
        -(\bs V_{x_{n+1}}+\bs U_{x_n}\bs V).
    \end{align}
Next, we are going to prove the left-hand side of \eqref{P-elem} vanishes.
In fact, by calculation we find
    \begin{align*}
        \big[\bs P_{n+1}\bs A\bs P_n\big]_{(-1,-n)}
        =&\bs S^{(0,1)}\bs a\bs S^{(-1,n-1)}+\bs S^{(1,1)}\bs a\bs S^{(-1,n-2)}\\
        &\qquad\qquad  +\cdots+\bs S^{(n-1,1)}\bs a\bs S^{(-1,0)}-\bs S^{(n-1,1)}\bs a, \\
        \big[\bs P_{n}\bs A\bs P_{n+1}\big]_{(-1,-n)}
        =&\bs S^{(0,0)}\bs a\bs S^{(-1,n)}+\bs S^{(1,0)}\bs a\bs S^{(-1,n-1)}\\
        &\qquad\qquad +\cdots+\bs S^{(n,0)}\bs a\bs S^{(-1,0)}-\bs S^{(n,0)}\bs a,
    \end{align*}
which yields
    \begin{align*}
        &~\big[\bs P_{n+1}\bs A\bs P_n\big]_{(-1,-n)}-\big[\bs P_{n}\bs A\bs P_{n+1}\big]_{(-1,-n)} \\
        =&-\bs S^{(0,0)}\bs a\bs S^{(-1,n)}
        +\sum^{n-1}_{l=0}(\bs S^{(l,1)}-\bs S^{(l+1,0)})\bs a\bs S^{(-1,n-1-l)}
        +(\bs S^{(n,0)}-\bs S^{(n-1,1)})\bs a.
    \end{align*}
Making use of relation \eqref{diff_sum_pro} with $(i,j)=(l,0)$ and $(i,j)=(n-1,0)$, respectively,
and picking out $\bs S^{(0,0)}=\bs U$,  we have
    \begin{align*}
        &~\big[\bs P_{n+1}\bs A\bs P_n\big]_{(-1,-n)}-\big[\bs P_{n}\bs A\bs P_{n+1}\big]_{(-1,-n)} \\
        =&\bs U(-\bs a\bs S^{(-1,n)}+\bs S^{(n-1,0)}\bs a-\sum_{l=0}^{n-1}\bs S^{(n-1-l,0)}\bs a\bs S^{(-1,l)})
        =-\bs U\bs V_{x_n},
    \end{align*}
where we have used \eqref{Sij_moves1} for $(i,j)=(-1,0)$.
Thus, the left-hand side of \eqref{P-elem} vanishes and we get equation \eqref{diff_recur} for $n\in \mathbb{Z}^+$.

The equation \eqref{diff_recur} for $n=0$ and $n\in \mathbb{Z}^-$ can be derived in a similar way.
Once \eqref{diff_recur} is obtained, we immediately get \eqref{SDYM-3}.

\end{proof}

\section{Solutions to a 3-dimensional relativistic-invariant system}\label{APP-B}

In \cite{MZ-1981} Manakov and Zakharov studied the following 3-dimensional
relativistic-invariant system,
\begin{equation}\label{chiral_eq}
    (J_{\bar y}J^{-1})_y+(J_tJ^{-1})_t=0,
\end{equation}
where $t\in \mathbb{R}$, $y\in \mathbb{C}$, $\bar y$ is the complex conjugate of $y$,
and $J$ is a $\mathcal{N} \times \mathcal{N}$ matrix function of $(y, \bar y, t)$.
The system is also known as a $(2+1)$-dimensional generalization of the chiral field equation
(see \cite{Ward1-1988,Ward2-1988}).
To have a solution of the equation \eqref{chiral_eq},
one can take $n$ to be large enough and all $k_j$
subject to $k_j^{2(n+1)}=(-1)^{n}$ such that in \eqref{cj} there are
\[\bs c_j= \left(k_j^n-\frac{(-1)^{n}}{k_j^{n}}, ~k_j^{n+1}+\frac{(-1)^{n}}{k_j^{n+1}}, ~
    \mr i k_j^n+\frac{\mr i(-1)^{n} }{k_j^{n}}, ~0\right)\]
and all $\{\bs c_j\}$ are distinct for $j=1,2,\cdots, N$.
Then the resulting solution $\bs V$ (in both asymmetric and symmetric cases) does not depend on $\eta_{n+1}$
and $J=\bs V$ solves \eqref{chiral_eq}.

\end{appendices}

\vskip 20pt

\end{document}